 \newtheorem{theorem}{Theorem}
 \newtheorem{lemma}[theorem]{Lemma}
 \newtheorem{proposition}[theorem]{Proposition}
 \newtheorem{corollary}[theorem]{Corollary}
 \theoremstyle{plain}
 \newtheorem{definition}[theorem]{Definition}
 \newtheorem{remark}[theorem]{Remark}
\newtheorem{example}[theorem]{Example}
 \def\koniec{{}\hfill$\square$}
 \newenvironment{proof}[1][!*!,!]%
 {\noindent{\emph{Proof.} }}%
 {\koniec\medskip}
\title{On parametric verification of
  asynchronous, shared-memory pushdown systems} 
\author[1]{Marie Fortin}
\author[2]{Anca Muscholl}
\author[3]{Igor Walukiewicz}
\affil[1]{ENS Cachan, University of Paris-Saclay}
\affil[2]{Technical University of Munich, IAS\footnote{On leave from University of Bordeaux, LaBRI.}}
\affil[3]{University of Bordeaux, CNRS, LaBRI}
\date{}
 \newcommand{\Nat}{\ensuremath{\mathbb{N}}}
\newcommand{\es}{\emptyset}
\newcommand{\imp}{\Rightarrow}
\newcommand{\incl}{\subseteq}
\newcommand{\ch}{\mathit{choose}}
\newcommand{\sq}[1]{[#1]}
\newcommand{\di}[1]{\langle #1 \rangle}
\newcommand{\sqq}[1]{\sq{\cdot }}
\newcommand{\ddi}[1]{\di{\cdot }}
\newcommand{\set}[1]{\{#1\}}
\renewcommand\bar[1]{\overline{#1}}
\newcommand{\act}[1]{\stackrel{#1}{\longrightarrow}}
\renewcommand{\a}{\alpha}
\renewcommand{\b}{\beta}
\renewcommand{\d}{\delta}
\newcommand{\e}{\varepsilon}
\renewcommand{\k}{\kappa}
\newcommand{\n}{\nu}
\renewcommand{\r}{\rho}
\newcommand{\w}{\omega}
\newcommand{\D}{\Delta}
\newcommand{\G}{\Gamma}
\renewcommand{\S}{\Sigma}
\newcommand{\Aa}{\mathcal{A}}
\newcommand{\Cc}{\mathcal{C}}
\newcommand{\Dd}{\mathcal{D}}
\newcommand{\Pp}{\mathcal{P}}
\newcommand{\PSPACE}{\text{\sc Pspace}}
\newcommand{\NP}{\text{\sc NP}}
\newcommand{\coNP}{\text{\sc coNP}}
\newcommand{\EXPTIME}{\text{\sc Exptime}}
\newcommand{\NEXPTIME}{\text{\sc Nexptime}}
\newcommand{\coNEXPTIME}{\text{co\sc{Nexptime}}}
\newcommand{\EXPSPACE}{\text{\sc Expspace}}
\renewcommand{\paragraph}[1]{\smallskip\par\noindent {\bf #1}}
\newcommand{\struct}[1]{\langle #1 \rangle}
\def\sqr#1#2{\vbox%{\vcenter{\vbox%
 {\hrule height#2
  \mbox{\vrule width#2 height#1 \kern#1 \vrule width#2}%
  \hrule height#2}}%}}
\newenvironment{proofof}[1]{\noindent\textit{Proof of {#1}. }}{\nopagebreak
  \hspace*{\fill}$\Box$\medskip\par}
  \newenvironment{proofsketch}{\noindent\textit{Sketch of proof. }}{}
\def\optzero{\ifcase\temp 0\fi}
\def\miesiaca{\ifcase\month\or stycznia\or lutego\or marca\or kwietnia\or maja
                           \or czerwca\or lipca\or sierpnia\or wrze\'snia
                           \or pa\'zdziernika\or listopada\or grudnia\fi}
\newcommand{\Cck}{\Cc^\k}
\newcommand{\Ddk}{\Dd^\k}
\newcommand{\xra}{\xrightarrow}
\newcommand{\dr}{r}
\newcommand{\dw}{w}
\newcommand{\Dr}{r}
\newcommand{\Dw}{w}
\newcommand{\Cr}{\bar r}
\newcommand{\Cw}{\bar w}
\newcommand{\CDsystem}{$(\Cc,\Dd)$-system\xspace}
\newcommand{\CDsystems}{$(\Cc,\Dd)$-systems\xspace}
\newcommand{\CDpsystem}{$(\Cc',\Dd')$-system\xspace}
\newcommand{\CDpsystems}{$(\Cc',\Dd')$-systems\xspace}
\newcommand{\CDbsystem}{$(\Cc'',\Dd'')$-system\xspace}
\newcommand{\CDtsystem}{$(\tCc,\tDd)$-system\xspace}
\newcommand{\init}{{\mathit{init}}}
\newcommand{\rest}{\!\downarrow\,}
\newcommand\Clsupp{$\omega$-supported\xspace}
\newcommand\cs[1]{[#1]}
\newcommand\nst[2]{#1^{(#2)}}
\newcommand{\shuffle}{\mathbin{{\sqcup}\mathchoice{\mkern-3mu}{\mkern-3mu}{\mkern-3.4mu}{\mkern-3.8mu}{\sqcup}}}
\newcommand{\sleq}{\sqsubseteq}
\newcommand{\lst}{\mathit{last}}
\newcommand{\ttop}{\mathit{top}}
\newcommand{\LE}{\mathsf{E}}
\newcommand{\LF}{\mathsf{F}}
\newcommand{\LA}{\mathsf{A}}
\newcommand{\LG}{\mathsf{G}}
\newcommand\Qc{P}
\newcommand\Gc{\G_C}
\newcommand\Sc{\S_C}
\newcommand\dc{\delta}
\newcommand\qic{p_\init}
\newcommand\Aic{A_\init^C}
\newcommand\Qcf{S}
\newcommand\qicf{s_\init}
\newcommand\Ccf{\Cc_{\mathit{fin}}}
\newcommand\Ccfk{\Cc_{\mathit{fin}}^\k}
\newcommand\Qn{P_{\mathit{fin}}}
\newcommand\qin{p^{\mathit{fin}}_\init}
\newcommand\dn{\delta}
\newcommand\Qd{Q}
\newcommand\Gd{\G_D}
\newcommand\Sd{\S_D}
\newcommand\dd{\Delta}
\newcommand\qid{q_\init}
\newcommand\Aid{A_\init^D}
\newcommand{\tCc}{\tilde{\Cc}}
\newcommand{\tDd}{\tilde{\Dd}}
\newcommand{\tPp}{\tilde{\Pp}}
\newcommand{\tSd}{\widetilde{\S}_D}
\newcommand\qcw[1]{\overline{\textup{\texttt{w?}}}(#1)}
\newcommand\qcr[1]{\overline{\textup{\texttt{r?}}}(#1)}
\newcommand\acw[1]{\overline{\textup{\texttt{w}}}(#1)}
\newcommand\acr[1]{\overline{\textup{\texttt{r}}}(#1)}
\newcommand\gdw[1]{\textup{\texttt{w}}(#1)}
\newcommand\gdr[1]{\textup{\texttt{r}}(#1)}
\newcommand{\trans}{\mathit{trans}}
\newcommand{\sexp}{\mathit{stutt}}
\newcommand\Scp{\S'_C}
\newcommand\Sdp{\S'_D}
\newcommand\Sck{\Sc^m}
\newcommand\Sdk{\Sd^m}
\newcommand\qcwi[1]{\overline{\textup{\texttt{w}}_{\textup{\texttt i}} \textup{\texttt{?}}}(#1)}
\newcommand\qcri[1]{\overline{\textup{\texttt{r}}_{\textup{\texttt i}} \textup{\texttt{?}}}(#1)}
\newcommand\acwi[1]{\overline{\textup{\texttt{w}}}_{\textup{\texttt i}}(#1)}
\newcommand\acri[1]{\overline{\textup{\texttt{r}}}_{\textup{\texttt i}}(#1)}
\newcommand\gdwi[1]{\textup{\texttt{w}}_{\textup{\texttt i}}(#1)}
\newcommand\gdri[1]{\textup{\texttt{r}}_{\textup{\texttt i}}(#1)}
\begin{document}

\maketitle

\begin{abstract}
  We consider the model of parametrized asynchronous shared-memory
pushdown systems as introduced in [Hague'11].  In a series of recent papers
it has been shown that reachability in this model is \PSPACE-complete
[Esparza, Ganty, Majumdar'13] and that liveness is
decidable in \NEXPTIME\ [Durand-Gasselin, Esparza, Ganty, 
Majumdar'15]. We show that the liveness problem is
\PSPACE-complete. We also consider the universal reachability problem.  We
show that it is decidable, and \coNEXPTIME-complete.  
Finally, using these results, we prove that verifying general
regular properties of traces of executions, satisfying some stuttering
condition, is also decidable in \NEXPTIME\ for this model.
\end{abstract}

\section{Introduction}
It is common knowledge that even boolean programs may be impossible to
analyze algorithmically.  Features such as recursion or parallelism
make the set of reachable configurations potentially infinite.  The
usual example is given by systems consisting of two pushdown processes
with a shared boolean variable.  Such a model can simulate a Turing
machine, so every non-trivial question is undecidable~\cite{ram00}.
Kahlon~\cite{kah08} proposed to consider a parametric version of this
model, where the number of pushdown processes is arbitrary.  At first
sight this may look like a more general model, but it turns out that
model-checking under various synchronization primitives is decidable,
due to the lack of process identities.

Later Hague~\cite{hag11fsttcs} considered a model where a single process
has an identity, the leader process, but the operations on the shared
variable do not involve any synchronization. His model of parametrized asynchronous
shared-memory pushdown systems of~\cite{hag11fsttcs} consists of
one leader process and an arbitrary number of identical, anonymous contributor
processes.  Processes communicate through a shared, bounded-value
register using write and read operations.  A very important aspect
is that there are no locks, nor
test-and-set type operations, since this kind of operations would
allow to elect a second leader, and all questions would be immediately
undecidable.  The main result
of~\cite{hag11fsttcs} is that this model still
enjoys a decidable reachability problem.  
The complexity of this problem has been later established in~\cite{EGM16}.
Recently,
Durand-Gasselin et al.~\cite{DEGM15} have also shown decidability of the
liveness problem for this model.

% The main results of this paper are the following:
% \begin{itemize}
% \item The exact complexity of the liveness problem.
% \item The decidability and exact complexity of the so-called
% universal reachability problem, that we introduce here.
% \item The decidability and exact complexity of a more general
%   verification problem: given a 
%   pushdown \CDsystem, and a stuttering property of its traces, does
%   the system has a trace satisfying the property.
% \end{itemize}

% Assuming that the components in  a parametrized system have no
% identities is very helpful, besides being appropriate for many concurrent
% systems (see~\cite{Esp14} for a survey).
% In contrast, for a fixed number of contributors, the questions we are interested
% in would be undecidable. 
% One way to think about the parametrized model  is that it is an abstraction of the
% case where the number of contributors is fixed.
% In~\cite{EGM16} the authors provide two other examples of applications.
% One is about programmable self-assembly in large
% robot swarms, the other about distributed protocols
% implemented on wireless sensor networks, such as vehicular networks.  

The  reachability problem in Hague's model
can be formulated as whether there is a
computation of the system where the leader
can execute a special action, say $\top$.
The \emph{repeated reachability} problem asks if there is a
computation where the leader can execute $\top$ infinitely often.
This problem provides a succinct way of talking about
liveness properties concerning the leader process. 
We also consider in this paper \emph{universal reachability}: this is
the question of
deciding if on every maximal trace of the system, the leader executes $\top$.
In terms of temporal logics, reachability is about $\LE\LF$ properties,
while universal reachability is about $\LA\LF$ properties.

% \begin{itemize}
% \item Kahlon: parametrized model-checking, rendez-vous/locks
% \item Hague, Esparza: non-atomic, only reachability
% \item many papers deal with cut-offs - for pushdown C/D systems
%   meaningless 
% \item The case in which leader and contributors are finite-state
%   machines, but the com- munication primitive is rendezvous, was
%   studied by German and Sistla [1992]. 
% \end{itemize}

Our first result shows that there is no complexity gap between
verification of reachability and repeated reachability in the
parametrized setting, both problems are 
\PSPACE-complete.  This answers the question left open by~\cite{DEGM15},
that provided a \PSPACE{} lower bound and a \NEXPTIME{} upper bound
for the liveness problem. % Recall
% that checking for reachability or for liveness for a fixed number of
% finite-state machines is already \PSPACE-complete. 
Technically, our
\PSPACE\ upper bound requires to combine the techniques from~\cite{DEGM15}
and~\cite{LMW15}.  % As pointed out in~\cite{DEGM15} the techniques for
% deciding reachability are not immediately applicable to the liveness
% problem.  For reachability we can assume that for every write there is
% a separate contributor responsible to produce it.  This is a very
% useful simplification that does not apply to repeated reachability, since we
% require that the number of contributors is bounded over the complete
% infinite run.
We use a result from~\cite{DEGM15} saying
that if there is a run then there is an ultimately periodic one.  Then
we extend the approach from~\cite{LMW15} from finite to ultimately
periodic runs. 

As a second result we show that universal reachability is
\coNEXPTIME-complete. This result
also  bears some interesting technical aspects.  For the upper bound, as in the
case of reachability, we use a variant of the so-called
accumulator semantics~\cite{LMW15,DEGM15}.  We need to adapt it though
in order to make it sensitive to divergence.  The lower bound shows
that it is actually possible to force a fair amount of synchronization
in the model; we can ensure that the first $2^n$ values written into the
shared register are read in the correct order and none of them is
skipped.  So the \coNEXPTIME-hardness result can be interpreted
positively, as showing what can be implemented in this model.

Finally, we consider properties that refer not only to the leader
process, but also to contributors. As noticed in \cite{DEGM15} such properties are undecidable  in
general, because they can enforce
special interleavings that amount to identify a particular
contributor. In the parametrized
setting it is more natural to use properties that are stutter-invariant
w.r.t.~contributor actions. We show that parametrized verification of such
properties is decidable, and establish precise complexities
both when the property is given as a B\"uchi automaton,
and as LTL formula.
\medskip

\noindent \emph{Related work.}
Parametrized verification of shared-memory,
multi-threaded programs has been studied for finite-state threads e.g.~in
\cite{BCR01,KKW10} and for pushdown threads in
\cite{ABQ11,BESS05,LMP10,LMP12}. %,bgs14rp}.  
The decidability results in
\cite{ABQ11,BESS05,LMP10,LMP12} %,bgs14rp} 
concern the reachability
analysis up to a bounded number of execution contexts, and in
\cite{ABQ11,BESS05}, dynamic thread creation is allowed.
The main difference with our
setting is that synchronization primitives are allowed in those
models, so decidability depends on restricting the set of executions
in the spirit of bounded context switches.
Our model does not have such a restriction, but forbids synchronization
instead. 
%S
% Parameterized reachability is also considered in \cite{ABKS14} for
% shared-memory systems formed of one pushdown and several
% counters. 
%% anca: this is not true, in the above paper the number of processes is
%% fixed; I removed the reference
%% 

Besides the already cited papers, a related paper that goes beyond
the reachability property is~\cite{BMRSS-icalp16}. Bouyer et.~al.~consider
in~\cite{BMRSS-icalp16} 
a very similar setting, but without leader and only finite-state
contributors.
They consider the problem of \emph{almost-sure} reachability, which asks
if a given state is reached by some process with probability 1 under a
stochastic scheduler. 
They exhibit the existence of positive or negative cut-offs, and show
that the problem can be decided in \EXPSPACE, and is at least \PSPACE-hard.
By contrast, the universal reachability problem we consider here, 
although at first glance close to
the question in~\cite{BMRSS-icalp16}, has very different characteristics.
It is in \NP\ for 
finite-state contributors, and can be simply solved by using a variant
of the accumulator semantics.
The challenge in the present paper comes from considering pushdown systems.

Finally, we should mention that there is a rich literature concerning the verification of
\emph{asynchronously-communicating} parametrized  programs, that is
mostly related to the verification of distributed protocols and  
varies  for approaches and models (see
e.g.~\cite{germansistla92,CGJ97,EsparzaFM99lics,KPSZ02} for some early
work, and \cite{Delzanno14,NT15,veith15book} and references
therein). Most of these papers are concerned with finite-state
programs only, which are not the main focus of our results.
\medskip

\noindent \emph{Outline of the paper.}
In Section~\ref{sec:prelim}, we introduce Hague's model.
Section~\ref{sec:problem} presents the problems considered in this paper, and
gives an overview of our results.
In Section~\ref{sec:repeated} we study the repeated reachability problem,
and in Section~\ref{sec:universal} universal reachability.
Finally, we show in Sections~\ref{sec:general} and \ref{sec:simplify} how
these results may be used to verify more general properties. 
% For convenience, certain
% omitted proofs can be found in the appendix. A full version of the
% paper is available under~\url{http://arxiv.org/abs/1606.08707}.

%%% Local Variables:
%%% mode: latex
%%% TeX-master: "m"
%%% End:

\section{Preliminaries}
\label{sec:prelim}

\synctex=1

In this section we recall the model of parametrized systems
of~\cite{hag11fsttcs}, which we call
\CDsystems. First we give some basic definitions and notations.

 \subsection{Standard definitions}
%\medskip

A \emph{multiset} over a set $E$ is a function $M: E \to \Nat$.
We let $|M| = \sum_{x \in E} M(x)$.
The \emph{support} of $M$ is the set $\set{x \in E \mid M(x) > 0}$.
For $n \in \Nat$, we write $nM$, $M + M'$ and $M - M'$ for the multisets 
defined by $(nM)(x) = n \cdot M(x)$, $(M + M')(x) = M(x) + M'(x)$ and 
$(M - M')(x) = \max(0,M(x) - M'(x))$.
We denote by $[x]$ the multiset containing a single copy of $x$, 
and $[x_1, \ldots, x_n]$ the multiset $[x_1] + \ldots + [x_n]$.
We write $M \le M'$ when $M(x) \le M'(x)$ for all~$x$.

A \emph{transition system} over a finite alphabet $\S$ is a tuple 
$\struct{S,\d,s_\init}$ where $S$ is a (finite or infinite) set of states, 
$\d \subseteq S \times \S \times S$ is a set of transitions, and
$s_\init \in S$ the initial state. We write $s \xra{u} s'$ (for $u \in \S^*$)
when there exists a path from $s$ to $s'$ labeled by $u$.
A \emph{trace} is a sequence of actions labeling a path starting in
$s_\init$; so $u$ is a trace if $s_\init\xra{u} s'$ for some $s'$.

A \emph{pushdown system} is a tuple 
$\struct{Q,\S,\Gamma,\D,q_\init,A_\init}$ consisting of a finite set of states $Q$,
a finite input alphabet $\S$, a finite stack alphabet $\Gamma$, a set of transitions 
$\D \subseteq (Q\times\G) \times (\S\cup\{\e\}) \times (Q\times\G^*)$,
an initial state $q_\init \in Q$, and an initial stack symbol $A_\init \in \G$.
The associated transition system has $Q \times \G^*$ as states, $q_\init A_\init$
as the initial state, and transitions $qA\a \xra{a} q'\a'\a$ for
$(q,A,a,q',\a') \in \D$.

A word $u = a_1 \cdots a_n$ is a \emph{subword} of $v$ (written $u \sleq v$)
when there are words $v_0, \ldots, v_{n}$ such that 
$v = v_0 a_1 v_1 \cdots v_{n-1} a_n v_n$, so $u$ is obtained from $v$
by erasing symbols.
The \emph{downward closure} of a language $L \subseteq \S^*$ is 
$L \rest = \set{u \in \S^* \mid \exists v \in L.\, u \sleq v}$.

\subsection{\CDsystems}\label{sec:CD}
% \medskip

We proceed to the formal definition of \CDsystems. These systems are
composed of arbitrary many instances of a contributor process $\Cc$
and one instance of a leader process~$\Dd$. The processes
communicate through a shared register. We write $G$ for the finite set
of register values, and use $g,h$ to range over elements of $G$. The
initial value of the register is denoted by $g_\init$.  The alphabets of
both $\Cc$ and $\Dd$ contain actions representing reads and writes to
the register:
\begin{equation*}
  \Sc=\set{\Cr(g),\Cw(g) : g\in G}\ \qquad 
  \Sd=\set{\dr(g),\dw(g) : g\in G}\, .
\end{equation*}
Both $\Cc$ and $\Dd$ are (possibly infinite) transition
systems over these alphabets: 
\begin{equation}\label{eq1}
  \Cc=\struct{S,\d \subseteq S\times\Sc\times S,s_\init}\qquad
  \Dd=\struct{T,\D \subseteq T\times\Sd\times T, t_\init}
\end{equation}
In this paper we will be interested in the special case where $\Cc$ and
$\Dd$ are pushdown transition systems:
\begin{equation}\label{eq2}
  \Aa_C=\struct{\Qc,\Sc,\Gc,\dc,\qic,\Aic}\qquad
  \Aa_D=\struct{\Qd,\Sd,\Gd,\dd,\qid,\Aid}
\end{equation}
In this case the transition system $\Cc$ from~\eqref{eq1}
is  the transition system associated  with $\Aa_C$: its set of states is 
$S=\Qc\times(\Gc)^*$ and the transition relation $\d$ is defined by
the push and pop operations. 
Similarly, the transition system $\Dd$ is determined by $\Aa_D$.
When stating general results on \CDsystems we will use the notations
from~Eq.~\eqref{eq1}; when we need to refer to precise states, or use
some particular property of pushdown transition systems, we will 
 employ the notations from~Eq.~\eqref{eq2}.

A $(\Cc,\Dd)$-system consists of an arbitrary number of copies of
$\Cc$, one copy of $\Dd$, and a shared register. So a
\emph{configuration} is  a triple $(M \in \Nat^S,t\in
T,g\in G)$, consisting of a multiset $M$ counting the number of
instances of $\Cc$ in a given state,  the state $t$ of $\Dd$ and 
the current register value~$g$.

In order to define the transitions of the \CDsystem{} we extend the
transition relation $\d$ of $\Cc$ from elements of $S$ to multisets
over~$S$:
\begin{align*}
  M\xra{a}M' \text{ in $\d$\qquad} & \text{if $s \xra{a} s'$ in $\d$, $M(s) > 0$, 
and $M' = M - [s] + [s']$}, \\ & \text{for some $s,s' \in S$.}
\end{align*}
Observe that such a transition does not change the size of the multiset.
The transitions of the \CDsystem are either transitions of the leader 
(the first two cases below) or transitions  of contributors (last two cases):
\begin{align*}
  (M,t,g)\xra{\dw(h)}& (M,t',h)&& \text{if $t\xra{\dw(h)} t'$ in
                                  $\D$}\,, \\
  (M,t,g)\xra{\dr(h)}& (M,t',h)&& \text{if $t\xra{\dr(h)} t'$ in
                                  $\D$ and $h=g$}\,,\\
  (M,t,g)\xra{\Cw(h)}& (M',t,h)&& \text{if $M\xra{\Cw(h)} M'$ in $\d$} \,,\\
  (M,t,g)\xra{\Cr(h)}& (M',t,h)&& \text{if $M\xra{\Cr(h)} M'$ in $\d$ and $h=g$}\,.
\end{align*}
A \emph{run} from a configuration $(M,t,g)$ is a finite or an infinite
sequence of transitions starting in $(M,t,g)$. 
A run can start with any number $n$ of contributors, but then the
number of contributors is constant during the run.
A run is \emph{initial} if it starts in a configuration
of the form $(n[s_\init],t_\init,g_\init)$, for some $n \in \Nat$.
It is \emph{maximal} if it is initial and is not a prefix of any other run.
In particular, every infinite initial run is maximal.
A \emph{(maximal) trace} of the \CDsystem is a finite or an infinite sequence 
over $\Sc \cup \Sd$ labeling a (maximal) initial~run.

%%% Local Variables: 
%%% mode: latex
%%% TeX-master: "m"
%%% End: 

%  LocalWords:  automata

\section{Problem statement and overview of results}\label{sec:problem}

We are interested in
verifying properties of traces of \CDsystems.
The first question is what kind of specifications we may use. 
In general, verification of regular, action-based properties of runs of
pushdown \CDsystems is undecidable: such properties  allow to control
the interleavings of contributors and thus identify e.g.~a single
contributor that runs together with the leader (see
e.g.~\cite{DEGM15}).

For this reason we  consider 
\emph{$\Cc$-expanding} properties $\Pp \subseteq (\Sc \cup
\Sd)^\infty$.  By this we mean properties where actions of
contributors can be replicated: if $u_0 a_0 u_1
a_1 u_2 \cdots \in \Pp$ with $a_i\in\Sc$, $u_i \in \Sd^*$, and $f
:\Nat \to \Nat^+$, then $u_0 a_0^{f(0)} u_1 a_1^{f(1)} u_2 \cdots \in
\Pp$, too. Because of parametrization, contributor actions can 
always be replicated, so $\Cc$-expanding properties are a natural
class of properties for \CDsystems.

A related, more classical notion is \emph{stutter-invariance}. A
language $L \subseteq \S^\infty$ is stutter-invariant if for every
finite or infinite sequence $a_0 a_1 \cdots$ and every function $f
:\Nat \to \Nat^+$, we have $a_0 a_1 \cdots \in L$ iff $a_0^{f(0)}
a_1^{f(1)} \cdots \in L$. It is known that the stutter-invariant
properties expressible in linear-time temporal logic LTL are
precisely those expressible in LTL without the
next-operator~\cite{PW97,Ete00}.
By definition, every stutter-invariant property is $\Cc$-expanding.

We will consider regular properties $\Pp\subseteq (\Sc \cup
\Sd)^\infty$ that are $\Cc$-expanding,
as defined above. 
These properties will be described either by an LTL formula, or by an automaton
$\Aa=\struct{Q,\Sc \cup \Sd,\D,q_0,F,R}$ with finite set of states $Q$,
alphabet $\Sc \cup \Sd$, 
transitions $\D \subseteq Q \times (\Sc \cup \Sd) \times Q$, and
sets of final states $F,R \subseteq Q$. Finite runs are accepted if
they end in $F$, infinite ones are accepted if they visit $R$ infinitely often.
For simplicity, we call such an automaton a \emph{B\"uchi automaton}.

We will also consider some particular properties, that are essential for the 
general decision procedure:

\begin{enumerate}
\item The \emph{reachability problem} asks if the \CDsystem has some trace 
containing a given leader action $\top$. 
\item The \emph{repeated reachability problem} asks if the
  \CDsystem has some 
trace with infinitely many occurrences of a given leader action $\top$.
\item The \emph{universal reachability problem} asks if every maximal
  trace of the  \CDsystem % (for every number of contributors) 
  contains a given leader action $\top$. 
\item The complement of the previous question is the \emph{max-safe
    problem}. It asks if the  \CDsystem has some
  maximal trace that does not contain a given leader action $\top$.
\end{enumerate}

The above problems are of course basic examples of (stutter-invariant) LTL
properties over $\Sd \cup \Sc$. 
In the following example we show how they can be used together to
verify some more involved properties of parametrized systems.

\begin{example}
  The consensus problem consists in making all processes agree on a common
  value, among those values that were proposed by the processes. For
  simplicity we can assume that each (leader or contributor) process
  selects initially some value $b\in\set{0,1}$, and has two special actions,
  $\ch(0)$ and $\ch(1)$. If a (leader or contributor) process performs
  $\ch(i)$, this means that a value has been agreed upon, and it is
  equal to $i$. Furthermore, we denote the set of actions that are
  possible after $\ch(i)$ as $\S_i$, and assume that $\S_0 \cap
  \S_1=\es$.

The property we are interested in asks that processes should agree
on a common value from $\set{0,1}$, and then use this value in all
future computations. 
Stated as a property of the leader and contributors it means:
\[
\LA\LF(\bigvee_{b=0,1} \ch(b)) \wedge \LA\LG (\bigwedge_{b=0,1}
(\ch(b) \imp \LA\LG \, \S_b))
\]
The first part of the property corresponds to universal
reachability. The second one is a safety property (the complement of
a reachability property), requiring that
there is no maximal run containing action $\ch(b)$ and later on, some
action from $\S_{1-b}$. 
\end{example}

%%%long version
% \begin{remark}
%   A natural variant of the universal reachability problem would be the
%   following: is there some bound $N$ such that for all $n \ge N$, all
%   maximal runs with $n$ contributors contain an occurrence of $\top$?
%   A bit surprisingly, this formulation is equivalent to the universal
%   reachability problem: if there were some maximal run with $n<N$
%   contributors without $\top$, then we could add arbitrary many
%   contributors doing the same actions as one original contributor,
%   thus obtaining a maximal run with $N$ contributors  and without
%   $\top$, contradiction.
% \end{remark}

The main results of our paper establish the precise complexity of all
questions about \CDsystems that were introduced above.
We first state the result concerning the largest class of properties,
and then for specific cases. 
The proofs follow the inverse order, the results about specific cases
are used to prove the general result.

\begin{theorem}\label{th:main}
 The following problem is \NEXPTIME-complete:
given a pushdown \CDsystem and a $\Cc$-expanding regular
property $\Pp$ over $\Sd \cup \Sc$ (given by a B\"uchi automaton
or by an LTL formula), determine if the \CDsystem has a
maximal trace in $\Pp$.
\end{theorem}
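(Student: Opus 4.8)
The plan is to reduce this general problem to the specific questions studied earlier in the paper---reachability, repeated reachability, and max-safe (the complement of universal reachability)---for which we already have the relevant complexity bounds (\PSPACE{} for the first two, \coNEXPTIME{} for max-safe). For the upper bound, I would take the pushdown \CDsystem{} $(\Cc,\Dd)$ and the B\"uchi automaton $\Aa$ for the property $\Pp$, and form a product that runs $\Aa$ alongside the leader $\Dd$. The key point is that $\Aa$ reads the whole trace over $\Sc\cup\Sd$, including contributor actions, but $\Pp$ is $\Cc$-expanding; so I first want to argue that it suffices to track $\Aa$'s behaviour on a ``stutter-reduced'' version of the trace. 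Concretely, since contributor actions can be replicated without leaving $\Pp$, a maximal trace is in $\Pp$ iff the trace obtained by collapsing consecutive repetitions (the natural normal form witnessing $\Cc$-expansion) is accepted by $\Aa$; this lets me absorb the contributor-action transitions of $\Aa$ into the contributor side in a way that does not force interleaving constraints. I would build a new contributor $\Cc'$ and leader $\Dd'$ in which $\Dd'$ simulates $\Dd$ together with the control state of $\Aa$: whenever $\Dd$ does a leader action, $\Dd'$ also advances $\Aa$; whenever a contributor performs an action $a\in\Sc$, the contributor ``announces'' via the register (or via a synchronized move with the leader, using the write/read of $a$ itself) that $\Aa$ should take the corresponding $\Sc$-transition. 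Because $\Pp$ is $\Cc$-expanding, whether one or many contributors fire $a$ in a row is immaterial, so a single bit of bookkeeping in $\Dd'$ suffices and no spurious synchronization is introduced.

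Having built the product, the acceptance condition of $\Aa$ splits into two cases. For \emph{finite} maximal traces accepted by $F$, I observe that a trace is maximal and finite exactly when it ends in a configuration with no enabled transition; so I would add to $\Dd'$ a gadget that may, nondeterministically, declare the computation finished, provided $\Aa$ is in $F$ and the configuration is globally deadlocked. Checking global deadlock is the subtle part on the contributor side, since there are arbitrarily many contributors: a configuration is deadlocked iff the leader is stuck and \emph{every} contributor state in the support of the multiset is stuck on the current register value. This is precisely a max-safe / universal-reachability style condition, and I would handle it exactly as the paper handles divergence-sensitivity in Section~\ref{sec:universal}, i.e.\ via the divergence-sensitive accumulator semantics; that gives a \coNEXPTIME{} (hence \NEXPTIME-in-complement, but here we are in the existential direction so the budget is \NEXPTIME) procedure. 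For \emph{infinite} accepted traces, the product must have an infinite run visiting $R$ infinitely often; this is exactly repeated reachability for the product pushdown \CDsystem{} with $\top$ taken to be ``$\Aa$ is in $R$'', which is \PSPACE{} by Section~\ref{sec:repeated}. Combining the two cases: the product is of size polynomial in the input when $\Pp$ is a B\"uchi automaton, and of single-exponential size when $\Pp$ is an LTL formula (translate LTL to a B\"uchi automaton, exponential blowup); in both cases running the \PSPACE{} repeated-reachability test and the \coNEXPTIME{}-style finite-maximal test on an exponential-size instance yields an \NEXPTIME{} upper bound.

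For the lower bound, \NEXPTIME-hardness, I would reuse the \coNEXPTIME-hardness construction for universal reachability (equivalently, \NEXPTIME-hardness of max-safe) from Section~\ref{sec:universal}: that construction already produces a pushdown \CDsystem{} together with a simple stutter-invariant property (``the leader never does $\top$'' along a maximal trace), which is a $\Cc$-expanding regular property expressible both as a tiny B\"uchi automaton and as a short LTL formula ($\LG\neg\top$). Since max-safe is exactly the special case of our general problem with this property, \NEXPTIME-hardness transfers immediately, and no new hardness argument is needed.

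The main obstacle I expect is the reduction step itself---specifically, making the product $(\Cc',\Dd')$ genuinely an instance of the \emph{same} model (one leader, anonymous contributors, no locks) while having the automaton $\Aa$ observe contributor actions. The danger is that synchronizing $\Aa$ with contributor moves silently re-introduces exactly the kind of contributor-identifying interleaving control that makes the unrestricted problem undecidable. The resolution has to lean crucially on $\Pp$ being $\Cc$-expanding: I must argue that it is sound to let $\Aa$ ``miss'' or ``merge'' contributor transitions freely, equivalently that the set of traces whose stutter-normal-form is accepted is unchanged, so that a purely local, unsynchronized simulation of $\Aa$'s $\Sc$-moves is correct. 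Getting this equivalence precisely right---in particular handling the boundary between finite and infinite traces and the interaction of stuttering with the B\"uchi condition on $R$---is where the real work lies; once it is in place, everything else is a routine packaging of the earlier theorems.
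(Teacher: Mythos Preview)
Your plan matches the paper's: split into finite and infinite maximal traces, handle the infinite case via repeated reachability (Theorem~\ref{thm:liveness}), the finite case via the max-safe machinery of Section~\ref{sec:universal}, and take the lower bound from \NEXPTIME-hardness of max-safe. The step you correctly single out as ``where the real work lies'' is precisely what the paper packages as Theorem~\ref{th:transP}. There the paper does \emph{not} make the automaton observe contributor actions inside a product; instead it moves the original register into the leader's state, repurposes the actual register for a request/acknowledge protocol between contributors and leader, and thereby reflects every original action as a \emph{write of $\tDd$}. One then replaces $\Pp$ by a property $\tPp$ over $\tSd$ only and takes the product with $\Aa$ purely on the leader side. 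The $\Cc$-expanding hypothesis enters exactly because a single leader acknowledgment can satisfy several identical pending requests (this is the $\sexp$ in Proposition~\ref{th:trans}). Your ``announce via the register'' sketch is morally this, but note that a naive announcement overwrites the very register value the original semantics needs; relocating the register into the leader's state is the concrete idea your outline is missing.

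There is also a genuine, if small, gap in your complexity accounting for the LTL case. Running a \PSPACE{} algorithm on an exponential-size instance yields \EXPSPACE, not \NEXPTIME, so your last sentence does not establish the claimed bound. The paper avoids this in Lemma~\ref{l:property-infinite-LTL}: after Theorem~\ref{th:transP} and the product, only $\Dd'$ blows up exponentially while $\Cc'$ stays polynomial; the proof then \emph{opens up} the repeated-reachability procedure (enumerating rather than guessing the $\n$-sequence, and testing emptiness of an exponential pushdown) to get \EXPTIME{} directly. For finite traces, Lemma~\ref{lem:GlobalReachNP} is an \NP{} bound, so applying it to the exponential $(\Ccf,\Dd')$ gives \NEXPTIME{} cleanly (Lemma~\ref{l:property-finite}). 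One terminological slip: max-safe is \NEXPTIME{} (Theorem~\ref{th:safety}); it is universal reachability that is \coNEXPTIME, so the existential direction you invoke is already on the right side.
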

In principle, it could be more difficult to verify properties given by
LTL formulas since LTL formulas can be exponentially more succinct
than nondeterministic B\"uchi automata.  The above theorem implies
that this blowup in the translation from LTL to non-deterministic
B\"uchi automata does not influence the complexity of the algorithm.
In this context, it is worth to recall that even for a single
pushdown automaton, LTL model-checking
is \EXPTIME-complete~\cite{BEM97}.

\begin{theorem}\label{thm:liveness}
  The repeated reachability problem for pushdown \CDsystems is
  \PSPACE-complete.
\end{theorem}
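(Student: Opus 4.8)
The lower bound is already known from~\cite{DEGM15}, so the work is the \PSPACE\ upper bound for repeated reachability: decide whether a pushdown \CDsystem{} has a trace with infinitely many occurrences of $\top$. The plan is to combine two ingredients. First, I would invoke the result of~\cite{DEGM15} that guarantees, whenever a suitable run exists, the existence of an \emph{ultimately periodic} run $\r = \r_0 (\r_1)^\omega$, where both $\r_0$ and $\r_1$ are finite and $\r_1$ contains at least one $\top$; moreover the number of contributors can be taken to be some fixed $n$, and along the periodic part the ``active'' contributor configurations stabilise in an appropriate sense (the leader's state at the start of each copy of $\r_1$ is the same, the register value at the cut points is the same, and the multiset of contributor states recurs). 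Second, I would adapt the \emph{accumulator semantics} of~\cite{LMW15}: instead of tracking the full multiset $M \in \Nat^S$ of contributor states, one tracks only the \emph{set} of register values that have ever been written by \emph{some} contributor together with a bounded amount of per-value bookkeeping, exploiting the fact that once a value $g$ has been written it remains ``available'' to be re-written by a fresh copy of a contributor (contributors are arbitrarily many and copyable, so a contributor computation leading to a write of $g$ can be replayed by new copies whenever needed). This collapses the contributor side to a polynomial-size object, and the leader side is a single pushdown automaton, which is amenable to \PSPACE\ analysis.

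Concretely I would proceed as follows. Step 1: formalise the accumulator game/semantics in a form sensitive to \emph{infinite} behaviour — extend the finite-run accumulator construction of~\cite{LMW15} so that it certifies not just reachability of a leader state but a lasso in the leader pushdown, i.e.\ a finite prefix reaching a configuration $(p,\gamma)$ followed by a loop back to a configuration with the same control state and the same stack top that pops no deeper than the loop's entry height, during which $\top$ occurs, and throughout which every read the leader performs can be matched by a value in the accumulated set (and every value the leader itself writes is added). Step 2: argue completeness — given the ultimately periodic run from~\cite{DEGM15}, read off the finite set of register values used in the periodic part and the witnessing contributor sub-runs producing each of them, and check these contributor sub-runs are themselves realisable in the accumulator semantics (here one re-runs the same argument that shows each needed write can be produced by dedicating fresh copies of $\Cc$; the $\Cc$-expanding / copyable nature of contributors is exactly what makes ``a finite set of write-capabilities'' sufficient, even for an infinite run). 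Step 3: argue soundness — from an accepting accumulator lasso, reconstruct an actual infinite run of the \CDsystem{} with enough contributors, by the standard ``saturate then replay'' argument: first let many contributor copies populate all accumulated write-capabilities, then let the leader execute its prefix and then loop forever, with fresh contributor copies supplying each required write on demand inside every iteration of the loop. Step 4: complexity — the accumulator component is polynomial-size, the leader is a pushdown system of polynomial size, and detecting a lasso in (the product of the accumulator automaton with) a pushdown automaton while tracking the finitely many bits needed to witness ``$\top$ seen in the loop'' and ``stack not popped below the loop entry'' is a reachability/repeated-reachability question on a pushdown system with polynomially many control states, solvable in \PSPACE\ by~\cite{BEM97}-style techniques; together with the \PSPACE\ lower bound of~\cite{DEGM15} this gives \PSPACE-completeness.

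The main obstacle I expect is Step~1 together with the soundness direction of Step~3: the accumulator semantics of~\cite{LMW15} is tailored to finite reachability, and making it faithfully sensitive to \emph{ultimately periodic} runs requires care on two points. One is bounding the stack: in the periodic part the leader's stack may grow unboundedly, so the ``loop'' witness must be a genuine pushdown loop (a pumpable segment that returns to the same state with net stack effect $\varepsilon$ and never pops below its entry height), not merely a return to the same configuration — identifying the right finite certificate for this and proving it can be iterated forever without starving any contributor write is the delicate step. The other is the interaction between the \emph{order} in which the leader needs to read values and the \emph{order} in which contributors can produce them within a single bounded loop body: one must show that the finitely many write-capabilities collected from the periodic part suffice for \emph{every} iteration, which again reduces to the copyability of contributor sub-runs but needs to be stated precisely enough to survive the passage to the limit. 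Once these are pinned down, the rest is a routine combination of the cited \PSPACE\ algorithms for pushdown (repeated) reachability with the polynomial-size accumulator bookkeeping.
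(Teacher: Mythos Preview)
Your high-level plan---ultimately periodic run from~\cite{DEGM15}, capacity/accumulator abstraction from~\cite{LMW15}, lasso detection on the leader pushdown---matches the paper's architecture, but two concrete ingredients are missing, and without them the argument does not close.

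\textbf{The contributor-loop requirement.} Your soundness step says ``fresh contributor copies supplying each required write on demand inside every iteration of the loop''. With a \emph{finite} number of contributors this cannot work: each iteration consumes copies and you run out. The paper's fix is a sharper notion than ``the write-capability $h$ is in the accumulator'': for each first-write $\nu(h_i)$ in the leader's loop $v$, one must exhibit a contributor state $p_i$ and a run of $\Ccfk$ that starts in $p_i$, follows $v$, produces $\Cw(h_i)$ at the right moment, and \emph{returns to $p_i$}. This is the paper's notion of \emph{$\omega$-supported} (Definition~\ref{def:osupp}, Lemma~\ref{lem:Clsupp}). The loop condition on the contributor side is exactly what lets two alternating pools of contributor copies service all iterations forever; without it the reconstruction fails. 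You flag this as an obstacle but do not propose the looping witness, which is the key new idea over~\cite{LMW15}.

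\textbf{Downward closure of the leader, and why \PSPACE\ rather than \EXPTIME.} Your Step~4 takes the product of a ``polynomial-size'' accumulator with the leader pushdown and invokes \cite{BEM97}. But checking $\omega$-support requires synchronising the leader's trace with up to $|G|$ contributor witnesses, each of which is a run of $\Ccfk$; and $\Ccf$ (the bounded-stack contributor, which you also do not invoke explicitly) has exponentially many states. The resulting product pushdown has exponentially many control states, and B\"uchi emptiness for it is only \EXPTIME\ by \cite{BEM97}. The paper avoids this via Lemma~\ref{lem:subwords}: $\omega$-support is monotone under adding leader actions, so one may replace the leader's loop language by its \emph{downward closure}, which is regular and computable on-the-fly in \PSPACE\ \cite{cou91beatcs}. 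Now everything is an intersection of (implicitly represented, exponential-size) \emph{finite} automata, and emptiness is in \PSPACE. This downward-closure step is essential for the claimed bound and is absent from your plan.
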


\begin{theorem}\label{th:safety}
  The max-safe problem for pushdown \CDsystems is \NEXPTIME-complete. It
  is \NP-complete when $\Cc$ ranges over finite-state systems.
\end{theorem}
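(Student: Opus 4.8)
The plan is to obtain both the upper and lower bounds by isolating the behaviour of a single maximal run and encoding it into the ``accumulator semantics'' of~\cite{LMW15,DEGM15}, but with a twist that makes the semantics sensitive to \emph{divergence}, since a maximal trace may be infinite and we must certify that it avoids $\top$ forever. For the \NEXPTIME\ upper bound I would proceed in three steps. First, fix a guessed set of register values $G_0 \subseteq G$ that are written by contributors during the run, together with a linear preorder on when they first appear; this is the standard accumulator abstraction, and it allows each contributor to be analysed in isolation against the ``pool'' of available register values. Second, I would run the leader's pushdown automaton $\Aa_D$ in a product with a monitor that (a) forbids the action $\top$, and (b) on the register-read transitions, checks consistency with the guessed accumulator order; here the key subtlety is that the leader's part of the run may itself be infinite, so I replace ordinary reachability in $\Aa_D$ by the search for an ultimately periodic run avoiding $\top$ — by the result of~\cite{DEGM15} that existence of a run implies existence of an ultimately periodic one, this is no loss, and it reduces to a repeated-reachability question on a pushdown system, solvable in \PTIME\ once the (exponential-size) guessed data is fixed. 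Third, I would discharge the contributor side: for each register value $g$ that the leader needs to read, I must certify that some contributor can write $g$ at the appropriate stage \emph{and} (if the run is infinite) that the contributor population can sustain the run without being forced into $\top$; this is a downward-closure computation on the contributor pushdown system $\Cc$, which can be carried out in the guessed-data-parametrised polynomial time using the regular (indeed, bounded) shape of downward closures. Putting the pieces together: the guessed data has exponential size, and the verification is polynomial in it, giving \NEXPTIME.

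For the \NEXPTIME\ lower bound I would build on the \coNEXPTIME-hardness construction for universal reachability announced in the introduction, which shows that the model can force the first $2^n$ values written to the register to be read in order with none skipped. The max-safe problem is exactly the complement of universal reachability, so the same construction, read contrapositively, yields \NEXPTIME-hardness: I would take a \NEXPTIME\ Turing machine $N$ and an input $x$, and design a pushdown \CDsystem\ whose maximal traces avoiding $\top$ correspond precisely to accepting computations of $N$ on $x$ of length $\le 2^{p(|x|)}$, using the enforced ordered-register-read gadget to lay out and verify the $2^{p(|x|)}$ configuration cells of the tape. The delicate part is ensuring that \emph{every} way of avoiding $\top$ really does encode a valid computation — i.e.\ that the system cannot ``cheat'' by diverging or by skipping a synchronisation step — which is precisely what the ``no value is skipped'' guarantee of the gadget buys us.

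For the finite-state contributor case, the only change is in the third step of the upper bound: when $\Cc$ is finite-state, the set of contributor behaviours relevant to the accumulator has \emph{polynomial}-size description (no pushdown, hence no exponential blow-up in the downward-closure/stage analysis), and the guessed accumulator data — a subset of $G$ with a linear order — is already of polynomial size. Thus the whole guess-and-verify procedure runs in \NP. The matching \NP-hardness would follow by a direct reduction from a suitable \NP-complete problem (e.g.\ via the finite-state analogue of the ordered-register gadget, which already lets one encode a polynomial-length nondeterministic guess that must be globally consistent — SAT reduces to it). The main obstacle throughout is the divergence-sensitivity: the standard accumulator semantics is tailored to finite runs and reachability, and making it certify that an \emph{infinite} maximal trace stays out of $\top$ — without accidentally allowing deadlocked or spuriously-diverging runs — requires carefully combining the ultimately-periodic-run normal form with the monitor for non-maximal behaviour, and this is where I expect the bulk of the technical work to lie.
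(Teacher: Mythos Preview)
Your plan has the emphasis inverted, and this leads to a genuine gap. You treat \emph{infinite} safe runs as the crux (``the main obstacle throughout is the divergence-sensitivity''), but in fact the infinite case is the easy one: by Theorem~\ref{th:transP} every infinite run of the transformed system has infinitely many leader writes, so existence of an infinite safe run reduces directly to repeated reachability (Theorem~\ref{thm:liveness}), which is \PSPACE\ for pushdown contributors and \NP\ for finite-state ones. No new divergence-sensitive semantics is needed there. The hard case --- and the one driving the \NEXPTIME\ and \NP\ bounds --- is the \emph{finite} maximal safe run, i.e.\ reaching a \emph{deadlock} without doing $\top$. Your proposal never explains how to certify that a reached configuration is a deadlock, and the standard accumulator semantics you invoke cannot do it: the accumulator only \emph{adds} contributor states, so it over-approximates the set of states present and will happily declare ``not a deadlock'' when in reality all contributors have moved on. The paper fixes this with a refined \emph{set semantics} in which a contributor transition may also \emph{remove} the source state from the set (modelling ``all copies take the transition''), so that the final set faithfully records exactly which states remain (Lemma~\ref{lemma:multi-set}, Corollary~\ref{cor:set-blocking}).

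Once you have the set semantics, the missing combinatorial lemma is this: along any run in the set semantics one may assume each contributor state is added at most once and removed at most once, so the first component changes at most $2|\Qcf|$ times (Lemma~\ref{lem:boundedruns}). This is what makes the \NP\ algorithm work for finite-state $\Cc$: guess the short sequence $B_0,\dots,B_k$ of sets, and check reachability of a deadlock top in a polynomial-size pushdown restricted to that sequence. For pushdown $\Cc$, one first passes to the exponential-size finite-state $\Ccf$ (Lemma~\ref{lem:Ccf}) and then applies the same \NP\ procedure, giving \NEXPTIME. Your ``guess $G_0$ with a linear preorder and use downward closure'' is the machinery for \emph{reachability} and \emph{repeated reachability}; it does not, by itself, detect deadlocks, and the exponential guess you posit is not the right one. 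The lower bounds are as you sketch (tiling for \NEXPTIME, SAT for \NP), but the upper-bound argument needs the set semantics and the bounded-change lemma rather than the accumulator/downward-closure route.
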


%% The complexity stated in the previous theorem is actually sensitive to
%% whether we consider finite or infinite runs:

%% \begin{theorem}\label{th:fin-inf}
%%   For pushdown \CDsystems, determining whether there is some maximal
%%   finite run is \NEXPTIME-complete; determining the existence of an
%%   infinite run is \PSPACE-complete. 
%% \end{theorem}

The proof of Theorem~\ref{th:main} uses Theorems~\ref{thm:liveness}
and~\ref{th:safety} and one more result, that is interesting on
its own. 
Note that both the repeated reachability and
the max-safe 
problem talk about one distinguished action of the leader, while
$\Cc$-expanding properties also refer to actions of
contributors.
Perhaps a bit surprisingly, we show how to modify a \CDsystem
so that only leader actions matter.
We reduce the problem of verifying if  a \CDsystem satisfies a
$\Cc$-expanding property $\Pp$ to the problem of verifying that some
polynomially larger \CDtsystem satisfies a property $\tPp$ that refers \emph{only} to actions
of the leader $\tDd$.
The idea is that the leader is given the control of the register, and
contributors just submit read or write requests; these are processed
by the leader who later  sends acknowledgements to the contributors. 
The result of this reduction is summarized by the following theorem:

\begin{theorem}
  \label{th:transP}
For every \CDsystem, there exists a \CDtsystem such that for every 
$\Cc$-expanding property $\Pp \subseteq (\Sd \cup \Sc)^\infty$, 
there exists a property $\tPp \subseteq (\tSd)^\infty$, where $\tSd$ is 
the action alphabet of $\tDd$, such that:
\begin{enumerate}
\item   the \CDsystem has a finite (resp.~infinite) maximal trace in $\Pp$ iff 
  the \CDtsystem has a finite (resp.~infinite) maximal trace whose projection 
  on $\tSd$  is in $\tPp$;
\item every infinite run of the \CDtsystem has infinitely many write
  operations of $\tDd$;
\item the \CDtsystem has an infinite run iff the \CDsystem has one.
\end{enumerate}
If the \CDsystem is defined by pushdown automata $\Aa_C$ and $\Aa_D$, then
the \CDtsystem is effectively defined by pushdown automata of sizes linear
in the sizes of $\Aa_C$ and $\Aa_D$.
If $\Pp$ is a regular, respectively LTL property, then so is $\tPp$. 
An automaton or LTL formula of linear size for $\tPp$ is effectively
computable from the one for $\Pp$.
\end{theorem}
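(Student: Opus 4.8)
The plan is to construct the \CDtsystem{} by a direct protocol simulation in which the leader $\tDd$ becomes the sole owner of the shared register, and contributors interact with it only through a request/acknowledgement handshake. Concretely, I would enrich the register alphabet of the new system with ``request'' tokens $\qcw{g}, \qcr{g}$ that a contributor writes when it wants to perform $\Cw(g)$ or $\Cr(g)$, and ``acknowledgement'' tokens $\acw{g}, \acr{g}$ that the leader writes back. A contributor wishing to perform $\Cw(g)$ writes $\qcw{g}$ into the register, then waits (busy-reading) until it sees $\acw{g}$, and only then moves to the successor state dictated by its original transition; similarly for reads, except that the leader grants $\acr{g}$ only when the register currently holds the value $g$ that the contributor asked to read. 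The leader $\tDd$ runs a scheduler loop: it nondeterministically either takes one of its own original moves, or picks up a pending request token from the register, performs the corresponding effect on the ``real'' register value (which it now stores in its finite control together with its original state), and posts the acknowledgement. Since the leader's control is finite-state over $G$, the product with $\Aa_D$ is still a pushdown automaton of size $O(|\Aa_D|\cdot|G|)$, and the contributor automaton grows only by the fixed handshake gadget, so $|\tAa_C| = O(|\Aa_C|\cdot|G|)$.

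The second and third claims are arranged by design. For (2), I would have the scheduler loop of $\tDd$ emit a fresh write action of $\tDd$ on \emph{every} iteration — e.g.\ each time it rewrites the register, even when merely re-posting the current value — so that any infinite run of the \CDtsystem{} necessarily contains infinitely many leader writes; one must check that this does not spuriously block the simulation, which it does not because the leader may always rewrite the same value. For (3), the forward direction is immediate since any infinite run of the \CDtsystem{} projects (after deleting handshake steps and the extra leader writes) to an infinite run of the original system, using that contributor actions can be replicated; conversely, an infinite run of the \CDsystem{} is simulated step-by-step, each original contributor step being expanded into a finite handshake block and each original leader step into a scheduler iteration, yielding an infinite run of the \CDtsystem.

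For claim (1), the key is the translation $\Pp \mapsto \tPp$. The projection of a \CDtsystem{} trace onto $\tSd$ records exactly the sequence of leader moves, which by construction includes, for each serviced contributor request, an acknowledgement token that reveals \emph{which} contributor action was performed (its register value $g$ and whether it was a read or a write). So from the $\tSd$-projection one can reconstruct the original trace \emph{up to the multiplicities and relative positions of contributor actions between consecutive leader actions} — and precisely this ambiguity is what $\Cc$-expandingness of $\Pp$ renders irrelevant. I would define $\tPp$ as: the set of $\tSd$-sequences $v$ such that \emph{some} (equivalently, by $\Cc$-expansion, every) word obtained by inserting, for each acknowledgement token $\acw{g}$ or $\acr{g}$ in $v$, at least one copy of the corresponding $\Cw(g)$ or $\Cr(g)$ at the matching position, and deleting the handshake/bookkeeping tokens, lies in $\Pp$; finite vs.\ infinite traces are handled uniformly. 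One then checks that a maximal \CDsystem{} trace in $\Pp$ lifts to a maximal \CDtsystem{} trace whose $\tSd$-projection is in $\tPp$, and conversely. Finally, the effectiveness/complexity part: $\tPp$ is obtained from an automaton (resp.\ LTL formula) for $\Pp$ by a syntactic relabelling — replace each letter $\Cw(g)$ by the regular expression $\acw{g}$ (preceded by arbitrary handshake noise), delete bookkeeping letters via $\e$-transitions, and intersect with the regular set of well-formed $\tSd$-sequences — all of which incurs only a linear blow-up; for LTL one rewrites $\Cw(g)$ atoms in terms of the acknowledgement atoms and uses that stutter-invariance/$\Cc$-expandingness lets us ignore the inserted copies.

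\medskip
\noindent\textbf{Main obstacle.}
The delicate point is \emph{correctness of maximality and of the $\Cc$-expansion bookkeeping simultaneously}: I must ensure that a contributor stuck in its busy-wait loop (waiting for an acknowledgement the leader never grants) does not create a \emph{new} maximal run with no counterpart in the original system, and conversely that every deadlock of the original system is faithfully reflected. This forces a careful argument that the leader can always eventually service any single outstanding request without being itself blocked — using precisely that there is one leader and that register writes are unconditional — and a matching-position argument showing the inserted contributor copies can always be placed so that the reconstructed word is a genuine $\Cc$-expansion of a real trace, for infinite runs as well as finite ones. Getting the infinite (liveness) case right, where fairness-like progress of the scheduler must be guaranteed along an infinite run, is where most of the real work lies.
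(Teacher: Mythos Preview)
Your handshake-protocol idea is exactly the construction the paper uses: contributors post request tokens $\qcw g,\qcr g$, the leader stores the simulated register value in its own control, services requests by writing acknowledgements $\acw g,\acr g$, and reflects its own moves as writes. The definition of $\tPp$ via a relabelling of leader writes is also what the paper does (its $\trans$ map).

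There are, however, two genuine gaps. First, your ``busy-reading'' for a waiting contributor is wrong for this theorem: if a contributor may loop on reads while awaiting its acknowledgement, you create infinite runs of the \CDtsystem{} consisting of contributor reads only, which immediately kills item~(2) (no leader writes at all) and item~(3) (such runs have no counterpart in the original system), and it also destroys finite maximality. The paper instead gives the contributor a \emph{single} blocking read of the specific acknowledgement; from the intermediate state $[s,a,s']$ no other move is possible. With that choice, after the last leader write only boundedly many contributor moves remain, which is what forces infinitely many leader writes on every infinite run.

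Second, and more substantially, you correctly flag maximality as the hard part but your proposed resolution (``the leader can always eventually service any outstanding request'') is not what is needed and would not work: the basic $(\Cc',\Dd')$-system genuinely does \emph{not} preserve maximality, because a contributor can get stuck in $[s,\acr g,s']$ after an unserviceable read request, and conversely a deadlock $(M,t,g)$ of the original system is never a deadlock of the simulation (a contributor can always post a request). The paper fixes this by a second modification $(\Cc'',\Dd'')$: the leader may nondeterministically declare the run finished by moving to a sink state $t_f$ via $\Dw(g)$, provided $t$ is a $g$-deadlock; contributors get analogous ``final'' states $s_f$; and a fresh symbol~$\#$ is written (by leader or contributor) whenever the declared end turns out to be wrong. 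The property $\tPp$ then additionally forbids $\Dw(\#)$. This $\#$-mechanism is the missing concrete idea in your plan; without it the equivalence for finite maximal traces fails.
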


In the remaining of the paper we successively give the proofs of 
Theorem~\ref{thm:liveness}, Theorem~\ref{th:safety}, Theorem~\ref{th:main}, 
and Theorem~\ref{th:transP}.
Though it is proven in the last section, Theorem~\ref{th:transP} does not rely 
on other results, and will be used in the proofs of Theorem~\ref{th:safety}
and \ref{th:main}.

%%% Local Variables: 
%%% mode: latex
%%% TeX-master: "m"
%%% End: 

%  LocalWords:  automata

\section{Repeated reachability}
\label{sec:repeated}

We show in this section that repeated reachability for pushdown \CDsystems can
be decided in
\PSPACE{}
(Theorem~\ref{thm:liveness}). 
The matching lower bound comes from the \PSPACE\ lower bound for the
reachability problem~\cite{EGM16}.
We call a run of the \CDsystem\ a \emph{B\"uchi run} if it has
infinitely many occurrences of the leader action $\top$.  
So the problem is to decide if a given \CDsystem has a B\"uchi run. 

Our proof has three steps.
The first one relies on a result
from~\cite{DEGM15}, showing that the stacks
of  contributors can be assumed to be polynomially bounded.
This allows to search for ultimately periodic
runs  (Lemma~\ref{lem:loop}), as in the case of one pushdown system. 
The next step extends the capacity technique introduced
in~\cite{LMW15} for the reachability problem, to B\"uchi runs.
We reduce the search  for B\"uchi runs to the existence of  $\omega$-supported
runs (Lemma~\ref{lem:Clsupp}). 
The last step is the observation that, as in the case of finite runs,
we can use the downward closure of capacity runs of the leader
(Lemma~\ref{lem:subwords}).
Overall this yields a \PSPACE\ algorithm for the existence of B\"uchi
runs (Theorem~\ref{thm:liveness}).
% , since
% both the downward closure, and  the $\omega$-supported
% runs, can be computed in polynomial space.

\subsection{Finite-state contributors} 

As observed in~\cite{EGM16}, pushdown contributors 
can be simulated by finite-state ones, by exploiting the fact that the
setting is parametrized. To state the result we need 
the notion of~\emph{effective stack-height} for a pushdown system.
\emph{Effective} stack-height refers to the part of the stack that
is still used in the future.
Consider a possibly infinite run
$\rho = q_1\a_1 \xra{a_1} q_2\a_2 \xra{a_2} \ldots$ of a pushdown
system.  We write $\a_i = \a'_i \a''_i$, where $\a''_i$ is the longest
suffix of $\a_i$ that is also a proper suffix of $\a_j$ for all $j>i$.
The \emph{effective stack-height} of a configuration $q_i\a_i$ in $\rho$
is the length of $\a'_i$.  (Notice that even though it is never
popped, the first element of the longest common suffix of the
$(\a_i)_{j\ge i}$ may be read, hence the use of \emph{proper}
suffixes.)

By $\Cc_N$ we denote the restriction of
the contributor pushdown $\Aa_C$ to runs in which all configurations have effective stack-height
at most $N$, where $N$ is a positive integer.
More precisely, $\Cc_N$ is the finite-state system with set of states
$\set{p\a \in \Qc\Gc^* : |\a| \le N}$, and transitions 
$p\a \xra{a} q\a'$ if $p\a \xra{a} q\a'\a''$ in $\D$ for some $\a''$.
Note that $\Cc_N$ is effectively computable in \PSPACE{} from $\Aa_C$ and $N$ given
in unary.
The key idea in~\cite{DEGM15} is that when looking for
B\"uchi runs for pushdown \CDsystems, $\Cc$ can be replaced by $\Cc_N$ for $N$ 
polynomially bounded:

\begin{theorem}[Thm.~4 in~\cite{DEGM15}]\label{th:buechi}
  Let $N > 2|\Qc|^2|\Gc|$. 
  There is a B\"uchi run in the \CDsystem\ iff there is one in the 
  $(\Cc_N,\Dd)$-system.
\end{theorem}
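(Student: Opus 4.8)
## Proof Plan for Theorem~\ref{th:buechi}

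\textbf{Setup and easy direction.} The right-to-left direction is immediate: any run of the $(\Cc_N,\Dd)$-system is in particular a run of the $(\Cc,\Dd)$-system, since $\Cc_N$ is a restriction of $\Aa_C$ (every transition $p\a \xra{a} q\a'$ of $\Cc_N$ lifts to a genuine transition $p\a \xra{a} q\a'\a''$ of $\Aa_C$ simply by appending the invisible suffix $\a''$). So a B\"uchi run of the $(\Cc_N,\Dd)$-system yields a B\"uchi run of the $(\Cc,\Dd)$-system with the same trace. The content is in the left-to-right direction.

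\textbf{Main direction --- the plan.} Suppose $\rho$ is a B\"uchi run of the $(\Cc,\Dd)$-system on some number $n$ of contributors. The idea is that the leader's behaviour is untouched; we only need to re-simulate each contributor by one whose effective stack-height stays $\le N$. Because the model is parametrized, a single contributor in $\rho$ that uses a deep stack can be replaced by \emph{several} fresh copies of $\Cc_N$, each of which simulates only a bounded ``window'' of that contributor's run and then is abandoned (left idle forever). First I would fix one contributor and look at its own (projected) pushdown run $q_1\a_1 \xra{a_1} q_2\a_2 \xra{} \cdots$. A contributor only ever affects the global configuration through its \emph{writes}, and a write only depends on the current control state $q_i$ and top stack symbol; a \emph{read} only checks the register. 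So the relevant abstraction of a contributor is the sequence of register-read/register-write actions it contributes, together with the constraint that this sequence be realizable by $\Aa_C$. The key combinatorial step: decompose the contributor's pushdown run into segments according to the effective-stack-height profile. Whenever the effective stack-height would exceed $N$, we are in a situation where a long segment of the run pushes a symbol, does a bunch of stuff, and eventually (from the point of view of the \emph{rest} of the global run) never returns below that level --- i.e., that pushed symbol is in the common suffix $\a''$ of all later configurations. The standard pumping argument on pushdown systems (this is exactly where $N > 2|\Qc|^2|\Gc|$ enters) lets us find, inside any such long segment, a sub-loop on the contributor's control-and-top-of-stack that can be excised or, conversely, a short witness of the form ``from $(p,A)$ reach $(p',A)$ with net stack effect bounded'' --- meaning a bounded-effective-height run of $\Cc_N$ reproduces the same sequence of writes and reads that that contributor offered to the global run.

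\textbf{Assembling the new global run.} Having argued that each contributor's \emph{contribution} (its interleaved sequence of reads/writes, at the positions where it acts in $\rho$) can be reproduced --- possibly by splitting it across finitely many fresh $\Cc_N$-copies, each handling a bounded stretch, with all the other copies sitting idle --- I would rebuild a global run $\rho'$ of the $(\Cc_N,\Dd)$-system over a larger (but still finite) number $n'$ of contributors: the leader does \emph{exactly} what it did in $\rho$ (so $\top$ still occurs infinitely often, and $\rho'$ is again a B\"uchi run), the register values evolve identically, and every contributor step of $\rho$ is matched by a step of the appropriate $\Cc_N$-copy in $\rho'$. For the \emph{infinite} B\"uchi run one has to be a little careful: some single contributor might run forever in $\rho$ while its stack grows without bound. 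Then that one contributor must be split into infinitely many $\Cc_N$-copies; but since only finitely many copies are active in any finite prefix this is fine for building an infinite run, \emph{provided} we can show that the relevant tail behaviour also only needs bounded effective height --- which is precisely the point of the effective-stack-height definition, tailored so that the ``never-popped, possibly-read'' bottom symbol is counted correctly (hence the \emph{proper}-suffix subtlety flagged in the text). Here I would lean on the cited Theorem~4 machinery of~\cite{DEGM15} rather than redo it: the statement we are proving \emph{is} that theorem, so in the present paper it is invoked as a black box.

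\textbf{Main obstacle.} The delicate part is the infinite case: ensuring that a contributor with an unboundedly growing stack can still be faithfully simulated by $\Cc_N$-copies. One must verify that no information the rest of the system ever needs from that contributor depends on stack contents deeper than $N$ --- equivalently, that anything such a contributor can usefully \emph{do} for the global run (write a value, later write another, pass a read-test) is already achievable within effective height $N$, by the pumping bound on $|\Qc|$ and $|\Gc|$. Getting the bound $N > 2|\Qc|^2|\Gc|$ exactly right, and checking that the pumped-down witnesses can be scheduled consistently with the register values seen at those moments, is where the real work sits; everything else is bookkeeping enabled by parametrization.
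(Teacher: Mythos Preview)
The paper does not prove this theorem; it quotes it as Theorem~4 of~\cite{DEGM15} and uses it as a black box --- as you yourself concede at the end of your plan. So strictly there is no ``paper's own proof'' to compare against.

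That said, your sketch of the left-to-right direction has a real gap. You propose that a contributor whose stack grows unboundedly along an infinite run be replaced by \emph{infinitely many} $\Cc_N$-copies, arguing this is acceptable because only finitely many are active on any finite prefix. The model does not allow this: an initial run starts in $(n[s_\init],t_\init,g_\init)$ for a fixed $n\in\Nat$, and the number of contributors is constant throughout the run. You must name a single finite $n'$ in advance; ``finitely many active at each moment'' is not enough.

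The technique the paper \emph{does} spell out, for the closely related finite-run Lemma~\ref{lem:Ccf} in the appendix, differs from your ``abandon and leave idle'' scheme. When a contributor's effective stack exceeds $N$, one pigeonholes on \emph{triples} $(p_i,A_i,r_i)$ --- the state-and-top just after $A_i$ is pushed, together with the state just after it is later popped --- to find three equal triples at indices $i<j<k$ (this is where the factor $2$ in $2|\Qc|^2|\Gc|$ comes from). Excising the $[i,j]$- respectively the $[j,k]$-portion of the stack hill yields \emph{two} strictly shorter contributor runs $\rho_1,\rho_2$ that together cover every transition of the original and, crucially, both end in the \emph{same} final configuration as the original. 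One contributor is replaced by two, and one iterates, with termination by run length. Because both replacements finish exactly where the original did, the surrounding global run continues unchanged; your abandoned copies do not have this property, and your single-sub-loop pumping picture neither explains why two copies per step suffice nor where the constant $2$ in the bound arises.
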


A similar result for finite runs can be derived from the proof of this
theorem.

\begin{lemma}
  \label{lem:Ccf}
  Let $N > 2|\Qc|^2|\Gc|+1$.
  A configuration $([p_1\a_1,\ldots,p_n\a_n],t,g)$ of the
  $(\Cc_N,\Dd)$-system is reachable iff there exists a reachable configuration 
  of the \CDsystem of the form
  $([p_1\a_1\b_1,\ldots,p_n\a_n\b_n],t,g)$, for some $\b_i$.
\end{lemma}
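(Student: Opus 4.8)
The plan is to extract the statement from the machinery underlying Theorem~\ref{th:buechi}, rather than re-prove everything from scratch. The ``only if'' direction is essentially immediate: any run of the $(\Cc_N,\Dd)$-system is, by construction of $\Cc_N$, the ``$\beta$-truncation'' of a run of the \CDsystem. Indeed, a transition $p\a \xra{a} q\a'$ in $\Cc_N$ comes from a transition $p\a \xra{a} q\a'\a''$ in $\dc$, which can equally be applied below any suffix $\beta$, producing $p\a\beta \xra{a} q\a'\a''\beta$. So I would fix, for each contributor, a constant ``ghost'' suffix $\beta_i$ (e.g.\ $\beta_i = \a$ for a single fixed stack symbol, or even $\beta_i=\e$ if $N$ is taken large enough that no run ever needs it) and replay the $(\Cc_N,\Dd)$-run verbatim with each contributor stack $p_i\a_i$ replaced by $p_i\a_i\beta_i$; the leader transitions and register values are untouched, so the same configuration $([p_1\a_1\beta_1,\dots,p_n\a_n\beta_n],t,g)$ is reached. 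A small point to check: the definition of $\Cc_N$ only requires $|\a|\le N$ at the states it keeps, and the replayed run stays inside $\Cc_N$'s reachable part by hypothesis, so no effective-stack-height bound is violated on the \CDsystem side either (the bound there is $N$ on effective height, which is $\le$ actual height).

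For the ``if'' direction — the substantive one — I would mirror the argument for Theorem~\ref{th:buechi} (Thm.~4 of~\cite{DEGM15}) but applied to a finite run that ends in the target configuration rather than an infinite B\"uchi run. The idea is the pumping/truncation argument on effective stack-height: given a reachable configuration $([p_1\a_1\b_1,\dots,p_n\a_n\b_n],t,g)$ of the \CDsystem via some finite initial run $\rho$, one shows that whenever some contributor's configuration along $\rho$ has effective stack-height $> N > 2|\Qc|^2|\Gc|+1$, a segment of that contributor's local computation can be cut out (together with the corresponding register reads/writes being supplied by replicated copies, exploiting parametrization) without changing the sequence of register values, and without changing the leader's run. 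The bound $2|\Qc|^2|\Gc|+1$ is exactly what is needed to find a repeated ``surface configuration'' to pump down against, with the extra $+1$ (compared to Theorem~\ref{th:buechi}'s $2|\Qc|^2|\Gc|$) accounting for the fact that here we must preserve the \emph{final} configuration's states $p_i$ and short stacks $\a_i$ exactly, not just up to reaching $R$ once more. Iterating this truncation on all contributors brings every effective stack-height down to $\le N$, and the resulting run, read in $\Cc_N$, reaches $([p_1\a_1,\dots,p_n\a_n],t,g)$ — note the ghost suffixes $\b_i$ are precisely the parts of the stacks below effective height $N$ that are never touched again, so they disappear when passing to $\Cc_N$ whose states only record the effective part.

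The main obstacle I anticipate is bookkeeping the interaction between cutting a contributor's stack segment and the register communication: a write performed during the excised segment may have been \emph{read} by the leader or by another contributor, so one cannot simply delete the segment. The standard fix (and the one I would import from~\cite{DEGM15,LMW15}) is that, because the number of contributors is unbounded, every register value that ever needs to be written can be written by a ``fresh'' auxiliary contributor that walks to the relevant local state, writes, and then idles; so excising the segment from one contributor is compensated by spawning copies whose sole job is to reproduce the needed writes in the same order. Making this precise requires maintaining the invariant that the global sequence of register values is unchanged, which is where the care goes; but since Theorem~\ref{th:buechi} already encapsulates exactly this reasoning for infinite runs, the finite-run version is a routine adaptation, and I would present it as such, pointing to the proof of Thm.~4 in~\cite{DEGM15} for the combinatorial core and only spelling out the modification needed to preserve the terminal configuration.
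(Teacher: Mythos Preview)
Your ``only if'' direction has a small slip: the ghost suffix $\beta_i$ cannot be fixed once and for all. Each time a $\Cc_N$-transition $p\a \xra{a} q\a'$ drops a nonempty $\a''$ (because $p\a \xra{a} q\a'\a''$ in $\Cc$), that $\a''$ must be prepended to the current ghost suffix, which therefore grows along the run. With this correction the direction is indeed routine.

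For the ``if'' direction, your description of the argument --- excise a segment between two repeated surface configurations and have fresh auxiliary contributors re-supply the missing writes --- is \emph{not} the argument of~\cite{DEGM15}, and it does not work as stated. Two obstructions. First, the fresh contributors, after writing and idling, end in local states unrelated to the target states $p_i\a_i$, so you do not reach the required multiset. Second, a fresh contributor must itself perform reads in order to reach the state from which it writes, and those reads may only have been enabled by writes inside the very segment you removed --- a circular dependency. The capacity machinery of~\cite{LMW15} that you invoke does resolve such circularities, but only for reachability of a \emph{leader} state; it does not preserve contributor end-configurations.

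The actual argument is a \emph{split}, not a cut, and this is exactly what the factor~$2$ in the bound buys. With effective stack-height exceeding $2|\Qc|^2|\Gc|$ one finds \emph{three} levels $i<j<k$ sharing the same triple (state at push, stack symbol, state at pop). From these one builds two strictly shorter contributor runs $\rho_1$ and $\rho_2$, each omitting a different one of the two redundant blocks, such that every transition of the original run $\rho$ appears in $\rho_1$ or $\rho_2$ (or both), and both end in the \emph{same} final configuration as $\rho$. The offending contributor is then replaced by two contributors executing $\rho_1$ and $\rho_2$ side by side: on shared portions they move in lockstep, elsewhere one waits while the other reproduces the original actions verbatim. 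No action is lost, so no compensation is needed, and both copies land in the intended final state. Iterating terminates because each half is strictly shorter than $\rho$. Your single-repeat ``cut'' would only require a bound of order $|\Qc|^2|\Gc|$; the doubled bound should have been the signal that a two-way split, not a pump-down, is what is going on.
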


\noindent \emph{Notation.} Throughout the paper, we write $\Ccf$
for $\Cc_N$ with $N= 2|\Qc|^2|\Gc|+2$.
We will use the notation $\struct{\Qn,\Sc,\dn,\qin}$ for the
finite-state system~$\Ccf$, and continue to write
$\Aa_D=\struct{\Qd,\Sd,\Gd,\dd,\qid,\Aid}$ for the pushdown system
$\Dd$.

\begin{remark}
  \label{lem:est1}
  It is not difficult to see that every infinite run of a pushdown
  system contains infinitely many configurations with effective
  stack-height one (see also~\cite{DEGM15}).
\end{remark}

Putting together Theorem~\ref{th:buechi} and Remark~\ref{lem:est1}, we obtain:
% can extract an ultimately periodic run from any infinite run in the
% $(\Ccf,\Dd)$-system:

\begin{lemma}
  \label{lem:loop}
  There is a B\"uchi run in the $(\Ccf,\Dd)$-system iff there is one
  of the form
  \begin{equation*}
    (n \cs{\qin},t_\init, g_\init) 
    \xra{u} (M,t_1,g) \xra{v} (M,t_2, g) \xra{v} \ldots  
  \end{equation*}
   for some $n \in\Nat$,   $g \in G$, 
   $M \in (\Qn)^n$, $u,v\in(\Sc \cup \Sd)^*$ , where:
   \begin{itemize}
   \item $v$ ends by a letter from $\Sd$ and contains $\top$, and
\item all configurations
  $t_i \in \Qd \Gd^*$ of $\Dd$ have effective stack-height one, the same control
  state, and the same top stack symbol.
   \end{itemize}
\end{lemma}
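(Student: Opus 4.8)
The plan is to derive Lemma~\ref{lem:loop} by combining Theorem~\ref{th:buechi} (applied with $\Ccf = \Cc_N$ for $N = 2|\Qc|^2|\Gc|+2 > 2|\Qc|^2|\Gc|$) with Remark~\ref{lem:est1}, and then folding the infinite run into an ultimately periodic shape by a standard pigeonhole argument on a suitable finite abstraction of configurations. The subtlety that needs care is that the global configuration space of the $(\Ccf,\Dd)$-system is infinite in two directions: the multiset $M$ over the finite state set $\Qn$ is unbounded in size, and the stack of $\Dd$ is unbounded in height. Both of these must be tamed before any pigeonhole argument applies.

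First I would observe that the size of the multiset $M$ is constant along any run (each contributor transition replaces $[s]$ by $[s']$), so fixing the number $n$ of contributors at the start pins down $|M| = n$ once and for all; this does not by itself bound the number of reachable multisets, but it lets us treat $n$ as a parameter. Next I would use Remark~\ref{lem:est1}: along the given B\"uchi run of the $(\Ccf,\Dd)$-system there are infinitely many positions at which the configuration of $\Dd$ has effective stack-height one. At such a position the future of the run only ever inspects a single bottom stack symbol of $\Dd$ (plus whatever it pushes on top), so the ``relevant'' part of $\Dd$'s configuration is just the pair (control state, top stack symbol), which ranges over the finite set $\Qd \times \Gd$. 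Combining this with the finitely many register values $g \in G$, and noting that there are only finitely many multisets $M \in (\Qn)^n$ of a fixed size $n$ over the fixed finite set $\Qn$, the tuple (multiset $M$, control state of $\Dd$, top stack symbol of $\Dd$, register value $g$) ranges over a \emph{finite} set. Since the B\"uchi run visits infinitely many effective-stack-height-one positions and the leader action $\top$ occurs infinitely often, I can pick two such positions, far enough apart that $\top$ occurs strictly between them, with identical such tuples; let $u$ label the prefix up to the first one and $v$ label the segment between them (arranged to end with a letter of $\Sd$, which is possible because we can take the second chosen position to be immediately after an occurrence of $\top$, a leader action).

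The key point for iterability is that, because the leader's effective stack-height is one at both endpoints and its (control state, top symbol) pair agrees, whatever stack content $\Dd$ carries below that top symbol is never touched by the segment $v$ and plays no role; hence the segment $v$ can be replayed verbatim from the configuration $(M, t_2, g)$, producing another copy of the same abstract configuration, and so on indefinitely. This yields the run $(n\cs{\qin}, t_\init, g_\init) \xra{u} (M,t_1,g) \xra{v} (M,t_2,g) \xra{v} \cdots$ with $v$ ending in an $\Sd$-letter and containing $\top$, and with all the $t_i$ sharing effective stack-height one, the same control state, and the same top stack symbol, exactly as claimed. The converse direction is immediate: such an ultimately periodic run is in particular a B\"uchi run of the $(\Ccf,\Dd)$-system since each copy of $v$ contributes an occurrence of $\top$.

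\textbf{Main obstacle.} The step requiring the most care is the formal justification that pumping the segment $v$ is sound despite $\Dd$ being a pushdown system: one must argue precisely that the portion of $\Dd$'s stack lying below the effective-stack-height-one ``frame'' at the endpoints is irrelevant to the moves in $v$, so that the local configuration (control state, top symbol, register, contributor multiset) fully determines the replayability of $v$. Equivalently, one works with the pushdown system's behaviour relative to a bottom symbol, using the definition of effective stack-height to guarantee that symbol is never popped during $v$. Everything else — constancy of $|M|$, finiteness of the abstraction, the pigeonhole extraction, and arranging $v$ to end in $\Sd$ and contain $\top$ — is routine once this is in place.
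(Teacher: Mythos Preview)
Your overall strategy matches the paper's: use Remark~\ref{lem:est1} to restrict attention to positions where $\Dd$ has effective stack-height one, observe that the quadruple $(M, q, A, g)$ at such positions ranges over a finite set (since $|M| = n$ is fixed and $\Qn$ is finite), apply pigeonhole to find two such positions with identical data and a $\top$ between them, and then iterate the intervening segment using that the stack below $A$ is never touched. The pumping step and the converse direction are handled correctly.

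There is, however, a genuine gap in your justification that $v$ can be arranged to end with a letter from $\Sd$. You propose to ``take the second chosen position to be immediately after an occurrence of $\top$'', but nothing guarantees that the configuration reached immediately after a $\top$-transition has effective stack-height one, let alone that its finite data $(M,q,A,g)$ matches that of the first chosen position. So you cannot simultaneously impose ``second endpoint has effective stack-height one with the right data'' and ``second endpoint is right after a $\top$''. The paper handles this by a step you are missing: \emph{before} applying pigeonhole, it singles out an infinite set $I$ of indices $i$ for which both $t_i$ has effective stack-height one \emph{and} the incoming action $a_i$ lies in $\Sd$. This needs a small case distinction. Either from some point on all $t_i$ have effective stack-height one, in which case one takes $I = \{i \ge i_0 : a_i = \top\}$. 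Or the effective stack-height drops from $>1$ to $1$ infinitely often; each such drop must be caused by a leader action, because on a contributor action the leader's configuration and its entire future are unchanged, hence so is its effective stack-height. Once $I$ is fixed, pigeonhole \emph{within} $I$ yields indices $i<j$ with matching data, a $\top$ in between, and $a_j \in \Sd$ for free.
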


\begin{proof}
Let $\rho= (n \cs{\qin}, t_\init, g_\init) = (M_0,t_0,g_0) \xra{a_1} 
  (M_1,t_1,g_1) \xra{a_2} \cdots$ be a B\"uchi run of the \CDsystem.
  By Remark~\ref{lem:est1} we can find an infinite set $I$ of
  indices such that for every $i\in I$:
  $t_i$ has effective stack-height one and $a_i\in\Sd$.
  For this we observe that if all configurations $t_i$ with $i$ greater
  than some number $i_0$
  have effective stack-height one, we can take all indices $i \ge i_0$ such
  that $a_i = \top$; otherwise  we take
  the set of indices such that $t_i$ has effective stack-height one, but
  $t_{i-1}$ does not -- then $a_i \in \Sd$.
  Since $\Qn$ is finite and $|M_i| = n$ for all $i$, the set 
  $\set{M_i \mid i \in \Nat}$ is finite. 
  By the pigeonhole principle, there exist $i, j \in I$ such that 
  $M_{i}=M_{j}$, $g_i=g_j$ and $t_i$, $t_j$ have
  effective stack-height one, the same state and the same top stack
  symbol. In addition, we ask that $\top$ is performed in the run
  from $(M_i,t_i,g_i)$ to $(M_j,t_j,g_j)$.

  We can then define a run of the desired form by repeating the part between 
  $i$ and $j$.
  We let $u = a_0 \cdots a_{i}$, $v = a_{i + 1} \cdots a_{j}$,
  $M = M_i = M_j$, $g = g_i = g_j$.
  To define the configurations $t_k$, we observe that the configurations 
  $t_j$ and $t_i$ can be represented as $t_i = q A \a$ and $t_j = q A \b \a$.
  We can then define $t'_k = q A \b^k \a$.
  We obtain a run of  the \CDsystem:
  $(n \cs{\qin}, t_\init, g_\init) \xra{u} (M,t'_1,g) \xra{v} 
  (M,t'_2, g) \xra{v} \ldots$
\end{proof}

\subsection{Capacities and supported loops}

The goal is a \PSPACE{} algorithm for the
existence of ultimately periodic runs in $(\Ccf,\Dd)$.
Since the reachability problem is decidable in
\PSPACE, we focus on loops. 
We follow the approach proposed in~\cite{LMW15} for the reachability
problem.  Adapting this approach to infinite runs is not
straightforward, and requires the new notion of $\omega$-support.

As in~\cite{LMW15}, we decompose a $(\Ccf,\Dd)$-system into a
finite-state system $\Ccfk$
representing the 
contribution of $\Ccf$, and a pushdown system $\Ddk$ representing the
contribution of~$\Dd$.

%% short version

The idea underlying the decomposition is the following. Once a value
$g$ has been written by a contributor into the register, replicating
the contributor's run supplies arbitrary (but finitely) many
contributor writes of $g$. This is captured by introducing a new set
of actions $\S_\nu=\set{\nu(g) : g\in G}$ denoting first contributor
writes. In addition, each of $\Ccfk$ and $\Ddk$ have a component $K$
called \emph{capacity}, that stores the ``writing'' capacity that
contributors already provided. Formally, the set of control states of
$\Ddk$ is $\Pp(G)\times \Qd \times G$, and the initial state is
$(\es,\qid,g_\init)$. The input and the stack alphabets, $\Sd$ and
$\Gd$, are inherited from $\Dd$. So a configuration of $\Dd^\k$ has
the form $(K\incl G, \; t \in \Qd \Gd^*, \; g\in G)$. The transitions
of $\Dd^\k$ are:\label{def:Dk} 
\begin{align*}
    (K,t,g)\xra{\dw(h)}&(K,t',h)&&\text{if $t\xra{\dw(h)} t'$ in
                                  $\D$},\\
  (K,t,g)\xra{\dr(h)}& (K,t',h)&& \text{if $t\xra{\dr(h)} t'$ in
                                  $\D$ and $h\in K\cup\set{g}$},\\
  (K,t,g)\xra{\n(h)}& (K\cup\set{h},t,h)&& \text{if $h\not\in K$}\,.
\end{align*}

The finite transition system
$\Ccfk$ is defined similarly, it just follows in addition the
transitions of $\Ddk$. The set of states of $\Ccfk$ is $\Pp(G)\times \Qn\times G$, input alphabet
$\Sc$, and initial state $(\es,\qin,g_\init)$. The transition relation 
 $\d^\k$ is:
\begin{align*}
  (K,p,g)\xra{\dw(h)}& (K,p,h)\\  %\,, \quad
  (K,p,g)\xra{\dr(h)}& (K,p,h)\\   %\, , \quad
(K,p,g)\xra{\n(h)}& (K\cup\set{h},p,h) \\
  (K,p,g)\xra{\Cw(h)}& (K,p',h) \quad \text{if $p\xra{\Cw(h)}p'$ in $\d$ and
                                $h\in K$}\\
  (K,p,g)\xra{\Cr(h)}& (K,p',h) \quad \text{if $p\xra{\Cr(h)} p'$ in $\d$ and
                                $h\in K\cup\set{g}$}\,.
\end{align*}
Note that in both $\Ddk$ and $\Cck$ some additional reads
$\dr(h),\Cr(h)$ are possible when $h \in K$ -- these are called
\emph{capacity reads}.

\noindent\emph{Notation.} We write
$\S_{D,\n}$ for $\Sd\cup \S_\n$. Similarly for $\S_{C,\n}$ and
$\S_{C,D,\n}$. 
By $v|_\S$ we will denote the subword of $v$ obtained by erasing
the symbols not in~$\S$.
%%short version 
Note that the value of the register after executing a trace $v$, in
both $\Ccfk$ and $\Ddk$, is determined by the last action of $v$. We
denote by $\lst(v)$ the register value of the last action of $v$ (for
$v$ non-empty).

% \medskip

We now come back to examining when there exists an ultimately periodic
run of the $(\Ccf,\Dd)$-system, and focus on loops.  
Clearly, a loop in the $(\Ccf,\Dd)$-systems leads to a loop in $\Ddk$, 
but the converse is not true. 
To recover the equivalence, we introduce the notion of $\w$-supported traces. 
Informally, a loop $v$ of
$\Ddk$ will be called supported when (1) for each $\n(h)$ move in $v$
there is a trace of $\Ccfk$ witnessing the fact that a contributor run
can produce the required action $\Cw(h)$, and (2) all the witness
traces can be completed to loops in $\Ccfk$. % Formally:

\begin{definition}\label{def:osupp}
Consider a word
\begin{equation*}
v = v_1 \nu(h_1) \cdots v_m \nu(h_m) v_{m+1} \in \S^*_{D,\n},
\end{equation*}
where $v_1,\dots,v_{m+1}\in \Sd^*$, and $h_1,\dots,h_m$ are pairwise
different register values. 
We say that $v$ is \emph{$\w$-supported} from   
$p_1, \ldots, p_m \in \Qn$ if for every $1\leq i \leq m$ there is a word
$u^i\in(\S_{C,D,\n})^*$ of the form
\begin{equation*}
  u^i = u^i_1 \nu(h_1) \cdots u^i_i \nu(h_i) \boldsymbol{\Cw(h_i)} u^i_{i+1} \cdots
  u^i_m\nu(h_{m})u^i_{m+1}
\end{equation*}
such that: (i) $u^i|_{\S_{D,\n}} = v$, 
and (ii) $(\es, p_i, g) \xrightarrow{u^i}
(K,p_i,g)$ in $\Ccfk$, where $g=\lst(v)$.  
\end{definition}

Note that $K=\set{h_1,\ldots,h_m}$ in the above definition, and that
$u^i_j|_{\S_{D,\n}} = v_j$ holds for all~$j$. 

\begin{lemma}
  \label{lem:Clsupp}
  The \CDsystem has a B\"uchi run iff
  there is some reachable configuration $(M,q A \a,g)$ in the
    $(\Ccf,\Dd)$-system and a word $v\in\S^*_{D,\n}$ such that:
  \begin{enumerate}
    \item\label{item:b} $\Ddk$ has a run of the form
    $(\es, q A, g)\xrightarrow{v} (K, q A \a', g)$, and $\top$ appears
    in~$v$.
  \item\label{item:c} $v$ is \Clsupp from some $p_1, \ldots, p_m$ such
      that $[p_1,\dots,p_m]\leq M$.
  \end{enumerate}
\end{lemma}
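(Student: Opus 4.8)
The plan is to prove both directions by relating a B\"uchi run of the \CDsystem{} (which by Lemma~\ref{lem:loop} we may assume is ultimately periodic, of the form $(n[\qin],t_\init,g_\init)\xra{u}(M,t_1,g)\xra{v_0}(M,t_2,g)\xra{v_0}\cdots$ with $v_0$ containing $\top$ and ending in $\Sd$) to a run of $\Ddk$ together with supporting runs of $\Ccfk$. For the forward direction, I would start from such an ultimately periodic run and extract its leader-loop $v_0$. The key observation is that each contributor write $\Cw(h)$ occurring in the loop segment can, the first time value $h$ is written in the loop, be recorded by a $\n(h)$ move; after pruning repeated $\n(h)$'s for the same $h$ one obtains a word $v\in\S^*_{D,\n}$ of the shape required in Definition~\ref{def:osupp}, namely $v=v_1\n(h_1)\cdots v_m\n(h_m)v_{m+1}$ with the $h_i$ pairwise distinct, such that $\Ddk$ has the run $(\es,qA,g)\xra{v}(K,qA\a',g)$ with $K=\{h_1,\dots,h_m\}$. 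Here $qA$ is the control state and top stack symbol shared by all the $t_i$ (as guaranteed by Lemma~\ref{lem:loop}); the capacity reads of $\Ddk$ absorb exactly those reads in the original loop that read values written earlier in the same loop iteration. To establish the $\w$-support condition I would, for each $i$, look at a contributor instance in $M$ whose state $p_i$ performs $\Cw(h_i)$ during the loop; because the contributor, being in state $p_i$ at the start and end of the loop, can replay its loop segment, this gives the witness word $u^i$ once the other contributors' writes are abstracted as $\n$-moves. The fact that the $p_i$ all lie below $M$ follows because they are the states of actual contributor instances in the configuration $M$.

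For the converse, I would assume a reachable configuration $(M,qA\a,g)$ of the $(\Ccf,\Dd)$-system together with $v$ and $p_1,\dots,p_m$ satisfying \eqref{item:b}--\eqref{item:c}, and build a B\"uchi run of the \CDsystem. The idea is to take $n$ large (using parametrization): from $(M,qA\a,g)$, using the reachability witness, we reach that configuration; then we use many additional contributor copies to realize the $\n(h_i)$ moves as genuine first contributor writes $\Cw(h_i)$. Concretely, each capacity move $\n(h_i)$ in $v$ is replaced by a block in which fresh contributors execute their respective witness runs $u^i$ up to and including the boldfaced $\Cw(h_i)$; condition (i) $u^i|_{\S_{D,\n}}=v$ guarantees these blocks interleave consistently with the leader's moves, and condition (ii) guarantees each such contributor returns to its starting state $p_i$, so the whole thing can be iterated. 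Crucially, since every $h\in K$ gets written by a contributor at least once per loop iteration, all capacity reads become legitimate ordinary reads, so $v$ lifts to a genuine loop of the \CDsystem{} from $(M',qA\a,g)$ for a suitable enlarged multiset $M'$; since $v$ contains $\top$, iterating it yields a B\"uchi run.

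The main obstacle I expect is the bookkeeping in the converse direction: one must be careful that a single contributor copy cannot simultaneously serve as the "support witness" for two different values $h_i,h_j$ within one iteration unless its witness run $u^i$ already does so, and that enough copies are available and can be recycled across all iterations of the loop. This is handled by the shape of the witness words $u^i$ (each $u^i$ is itself a loop at $p_i$ that produces exactly $\Cw(h_i)$ as its designated write while treating all other writes as $\n$-moves) together with the freedom to choose the number $n$ of contributors as large as needed — for each $i$ we dedicate a pool of copies in state $p_i$, each executing $u^i$ once per leader-loop iteration. A secondary subtlety is aligning the "effective stack-height one" and "same top stack symbol" conditions from Lemma~\ref{lem:loop} with the requirement in \eqref{item:b} that $\Ddk$'s run go from $qA$ to $qA\a'$: this is exactly what lets the leader's stack behavior during $v$ be replicated indefinitely, and I would invoke Lemma~\ref{lem:loop} to supply it in the forward direction and note that a run of the shape in \eqref{item:b}, being a net push of $\a'$ above an unchanged $qA$, can be iterated on the real system in the converse direction.
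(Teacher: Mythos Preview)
Your outline tracks the paper's argument closely, but there are two genuine gaps.

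\medskip
\textbf{Forward direction: the permutation issue.} You assert that the contributor producing the first $\Cw(h_i)$ is ``in state $p_i$ at the start and end of the loop''. But Lemma~\ref{lem:loop} only gives $M\xra{v_0}M$ as \emph{multisets}; an individual contributor starting in $p_i$ may end in $p_{\sigma(i)}$ for some permutation $\sigma$. The paper fixes this by iterating the loop $k$ times where $\sigma^k=\mathrm{id}$, so that each contributor's trajectory becomes a genuine self-loop. Without this, your witness $u^i$ need not end back in $p_i$, and condition~(ii) of Definition~\ref{def:osupp} fails.

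\medskip
\textbf{Converse direction: capacity reads.} Your key sentence --- ``since every $h\in K$ gets written by a contributor at least once per loop iteration, all capacity reads become legitimate ordinary reads'' --- is where the argument breaks. A single $\Cw(h_i)$ per iteration, placed at the position of $\n(h_i)$, does not enable a capacity read $\Dr(h_i)$ or $\Cr(h_i)$ occurring later in the same iteration once the register has been overwritten by some other value; and there can be arbitrarily many such capacity reads, both in the leader run $\r$ and in the supporting runs $u^j$ themselves (which are runs of $\Ccfk$, hence may also contain capacity reads). The paper's mechanism is different from yours: for each $i$ one takes $n_i+1$ copies in state $p_i$, where $n_i$ is the total number of capacity reads of $h_i$ in $\r$ and all the $u^j$; all copies run synchronously up to the state $p'_i$ just before the designated $\Cw(h_i)$, and then \emph{pause} there. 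One ``main'' copy does the write at the $\n(h_i)$ position and continues; the remaining $n_i$ copies fire their $p'_i\xra{\Cw(h_i)}p''_i$ transition one at a time, immediately before each capacity read of $h_i$, to enable it. This leaves $n_i$ copies stranded in $p''_i$ at the end of the iteration. Recycling them requires a second phase (the paper's Lemmas~\ref{lem:run1} and~\ref{lem:run2}): in the next iteration a fresh pool of copies provides the on-demand writes, while the old stranded copies synchronize with the main copy once it reaches $p''_i$ and follow it back to $p_i$. Your ``pool of copies, each executing $u^i$ once per iteration'' does not implement this pause-and-fire-on-demand mechanism, and without it the capacity reads are not enabled.
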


Observe that by Definition~\ref{def:osupp}, we have $m\le |G|$ in
Lemma~\ref{lem:Clsupp}.

\begin{proof}[Proof sketch]
  We outline the proof, focussing on the
  right to left direction. We construct an infinite run of the
  $(\Ccf,\Dd)$-system starting in $(M,q A \a,g)$ by shuffling 
  in a suitable way (infinitely many copies of) the run $v$ of $\Ddk$ and a number of 
  copies of the runs of $\Ccfk$ supporting $v$.

  Consider one of the $\n(h)$ occuring in $v$. Since $v$ is $\w$-supported,
  there is a run $\rho: p \xra{u_1} p_1 \xra{\Cw(h)} p_2 \xra{u_2} p$ of
  $\Ccfk$ that can be executed with $v$ to produce the first
  occurrence of $\Cw(h)$, in place of $\n(h)$.
  To execute $v$ once, we can replicate the 
  initial part $p \xra{u_1} p_1$  as many times as needed, and
  simulate each capacity read $\Dr(h)$ or $\Cr(h)$ occuring later, in $v$
  or one of its supporting runs, by letting a different copy take the 
  transition $p_1 \xra{\Cw(h)} p_2$ to enable the read.
  % This is the idea behind the similar Lemma~4 from~\cite{LMW15}.
%   But if we want to repeat $v$ infinitely often, the argument does not work 
%   anymore since we would need infinitely many copies of contributors. 
%   Fortunately, this idea can be adapted by reusing the same contributors for 
%   different iterations of $v$.
  At the end of the first simulation of $v$, the main contributor executing 
  $\rho$ is back in state $p$, and the others are still in state $p_2$.
  During the simulation of the second loop, we use a
  second set of copies of the main contributor to produce in the
  same way all required write operations $\Cw(h)$.  Meanwhile, the first
  set of contributors can be brought back 
  to state $p$: as soon as the main contributor reaches again state
  $p_2$, the first set of copies resumes the run $p_2 \xra{u_2} p$
  by following the run of the  main contributor.
\end{proof}

\subsection{Final step} 

As in the case of reachability, we show that
$\Ddk$ can be replaced by a finite-state system representing its downward
closure, since adding some transitions of the leader does not affect the
support of contributors. This finite-state system will be 
synchronized with the contributor automata witnessing support,
yielding the \PSPACE{} algorithm.

%%% long version 
%\subsection{Deciding repeated reachability}

% The last step of our construction is the observation that, as for
% finite traces~\cite{LMW15}, we can replace $\Ddk$ by its downward
% closure. 
% Intuitively, adding some transitions of the leader does not affect the
% support of contributors.

\begin{lemma}
  \label{lem:subwords}
  Let % $v=v_1 \nu(h_1) \cdots v_m \nu(h_m) v_{m+1}$ 
  $v=v_1 \nu(h_1) \cdots v_{m+1}$be \Clsupp from $p_1,
  \ldots, p_m$, and let $v_j \sqsubseteq \bar{v}_j$ for every
  $j$. Assume that 
  % $\bar{v}=\bar{v}_1 \nu(h_1) \cdots \bar{v}_m \nu(h_m) \bar{v}_{m+1}$
  $\bar{v}=\bar{v}_1 \nu(h_1) \cdots \bar{v}_{m+1}$
  satisfies $\lst(v)=\lst(\bar{v})$. Then $\bar{v}$
  is also \Clsupp from $p_1, \ldots, p_m$.
\end{lemma}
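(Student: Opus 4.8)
The statement says that $\omega$-support is preserved when each block $v_j$ is replaced by a superword $\bar v_j$ (keeping the last register value fixed). The plan is to build, for each $i$, a supporting word $\bar u^i$ for $\bar v$ out of the given supporting word $u^i$ for $v$, by inserting the extra letters of $\bar v_j$ into the corresponding block $u^i_j$. So first I would fix $i$ and write $u^i = u^i_1 \nu(h_1) \cdots u^i_i \nu(h_i)\, \Cw(h_i)\, u^i_{i+1}\cdots \nu(h_m) u^i_{m+1}$ with $u^i_j|_{\S_{D,\n}} = v_j$ and the $\Ccfk$-run $(\es,p_i,g)\xra{u^i}(K,p_i,g)$, $g=\lst(v)$. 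Since $v_j \sqsubseteq \bar v_j$, I can write $\bar v_j = w_0 b_1 w_1 \cdots b_k w_k$ where $v_j = b_1\cdots b_k$ and the $w_\ell\in\Sd^*$ are the inserted leader letters; I then form $\bar u^i_j$ by inserting the very same $w_\ell$ at the corresponding positions of $u^i_j$ (the positions in $u^i_j$ that project onto $b_1,\dots,b_k$ are well-defined because $u^i_j|_{\S_{D,\n}} = v_j$). Set $\bar u^i = \bar u^i_1 \nu(h_1)\cdots \bar u^i_i \nu(h_i)\,\Cw(h_i)\,\bar u^i_{i+1}\cdots\nu(h_m)\bar u^i_{m+1}$. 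Then $\bar u^i|_{\S_{D,\n}} = \bar v$ by construction, and $\bar u^i$ still has the required shape with $\Cw(h_i)$ in the right place.

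The crux is then to check that $(\es,p_i,g)\xra{\bar u^i}(K,p_i,g)$ is a valid run of $\Ccfk$. I would argue this by induction along $\bar u^i$, showing that after reading any prefix we are in a state $(K',p',g')$ where $K'$ equals the set $K'$ reached after the corresponding prefix of $u^i$, the contributor state $p'$ is unchanged, and the only difference is the register value $g'$. The inserted letters are all of the form $\dw(h)$ or $\dr(h)$ (leader actions): a write $\dw(h)$ is always enabled in $\Ccfk$ and merely sets the register to $h$; a read $\dr(h)$ is enabled iff $h\in K'\cup\{g'\}$. Here I must be slightly careful, because inserting leader letters can change the current register value $g'$, which could in principle disable a later capacity read $\dr(h)$ or $\Cr(h)$ that was previously enabled only because $h$ equalled the register value. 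The key observation that saves this is the hypothesis $\lst(v)=\lst(\bar v)$ together with the structure of the transitions: within block $j$, the only reads that the original run relied on being enabled ``by the register'' rather than ``by $K$'' are fine because every $\nu(h_\ell)$ with $\ell\le j$ has already put $h_\ell$ into $K'$ — so by the time we are inside block $u^i_j$, all the relevant values are in $K'$ and the reads are capacity reads, insensitive to $g'$. The only places where the register value genuinely matters are reads of a value that has not yet been made a capacity value; but those occur before the corresponding $\nu$, where no extra leader letters have been inserted that would disturb the immediately preceding write. I would spell this out by noting that $K'$ after the prefix of $\bar u^i$ ending just before $\nu(h_\ell)$ equals $\{h_1,\dots,h_{\ell-1}\}$, exactly as in $u^i$, so enabledness of every $\nu(h_\ell)$ (condition $h_\ell\notin K'$) is unchanged.

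Finally I would observe that the run of $\bar u^i$ ends in the state $(K,p_i,\lst(\bar v)) = (K,p_i,g)$, using $\lst(\bar v)=\lst(v)=g$ and the fact (already implicit in the original run ending at $(K,p_i,g)$) that processing the whole word accumulates exactly $K=\{h_1,\dots,h_m\}$ into the capacity; hence $\bar u^i$ witnesses condition (ii) of Definition~\ref{def:osupp}. Doing this for every $i\in\{1,\dots,m\}$ shows $\bar v$ is $\omega$-supported from $p_1,\dots,p_m$. I expect the main obstacle to be the bookkeeping in the induction — making precise the claim that inserting leader letters never disables a read — and this is exactly where the hypotheses "$v_j \sqsubseteq \bar v_j$ blockwise'' and "$\lst(v)=\lst(\bar v)$'' are used; once the invariant on $K'$ is stated correctly the verification is routine.
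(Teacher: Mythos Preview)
Your overall approach---build $\bar u^i$ by inserting the extra leader letters of $\bar v_j$ into the supporting word $u^i$---is exactly what the paper does, but your verification contains a genuine gap.

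First, a misconception that makes you work harder than necessary: in $\Ccfk$ (as opposed to $\Ddk$) the transition $(K,p,g)\xra{\dr(h)}(K,p,h)$ carries \emph{no} side condition; leader reads and writes are always enabled there and simply overwrite the register component. So enabling the inserted leader letters is never an issue.

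The real problem is your justification for why contributor reads $\Cr(h)$ remain enabled. You assert that inside block $j$ ``all the relevant values are in $K'$ and the reads are capacity reads, insensitive to $g'$.'' This is false: a $\Cr(h)$ occurring in $u^i_j$ may be enabled only because $h$ equals the current register value---for instance $h$ may have just been written by some $\dw(h)$ belonging to $v_j$, and such an $h$ need not be among $h_1,\dots,h_{j-1}$. If your (unspecified) insertion puts some $w_\ell$ between that $\dw(h)$ and the subsequent $\Cr(h)$, the read is no longer enabled, and your induction breaks.

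The fix, which the paper uses, is simply to be precise about placement: insert each extra segment immediately \emph{before} the next leader letter of $v_j$ (and the trailing segment immediately before $\nu(h_j)$, with the final segment appended at the very end). With this choice every contributor action in $\bar u^i$ is preceded by exactly the same symbol as in $u^i$, hence sees the same register value; and since leader actions in $\Ccfk$ are unconditionally enabled and only affect the register, the inserted letters are harmless. No induction on ``capacity versus register reads'' is needed: the run goes through verbatim except for the register component, which is reset correctly after each original leader letter and ends at $\lst(\bar v)=\lst(v)=g$.
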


\begin{proof}
  We need to lift all traces $u^i$ of $\Ccfk$ that
  witness that $v$ is \Clsupp, to traces $\bar{u}^i$ that $\w$-support
  $\bar{v}$. We assume that the traces $u^i$ satisfy all assumptions
  of Definition~\ref{def:osupp}, 
  and write $g = \lst(v) = \lst(\bar v) = \lst(u^i)$.

  Let $v_i=a_{i,1} \cdots a_{i,n_i}$ and $\bar{v}_i=x_{i,0} a_{i,1}
  x_{i,1} \cdots a_{i,n_i} x_{i,n_i}$, for some $x_{i,j} \in
  \Sd^*$. We obtain $\bar{u}^i$ from $u^i$ by substituting each $a_{i,j}$ by
  $x_{i,j-1}a_{i,j}$, each $\n(h_l)$ by $x_{l,n_l}\n(h_l)$ and 
  adding $x_{m+1,n_{m+1}}$ at the end.  By construction we have 
  $\bar{u}^i|_{\S_{D,\n}}=\bar{v}$,
  and $\lst(\bar u^i) = g$, since $\bar u^i$ ends either with the same action 
  as $x_{m+1,n_{m+1}}$ (i.e., as $\bar v$), or as $u^i$.
  Observe also that $(\es, p_i, g) \xrightarrow{\bar{u}^i} (K,p_i,g)$ still 
  holds in $\Ccfk$: actions of $\Dd$ can only modify the register component
  in $\Ccfk$, and this is the same after reading $a_{i,j}$ or $x_{i,j-1}a_{i,j}$.
\end{proof}

%%% long version
Combining known results for the reachability problem in \CDsystems with Lemmas \ref{lem:Clsupp} and 
\ref{lem:subwords}, we obtain a polynomial space algorithm  for the
repeated reachability problem:

\begin{theorem}\label{thm:liveness1}
  The repeated reachability problem for \CDsystems is \PSPACE-complete
  when $\Cc$ and $\Dd$   range over pushdown systems.
\end{theorem}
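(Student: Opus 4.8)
The plan is to treat the two bounds separately. The lower bound is immediate: the reachability problem is \PSPACE-hard for pushdown \CDsystems~\cite{EGM16}, and it reduces to repeated reachability by modifying $\Dd$ so that every $\top$-transition additionally leads to a fresh sink state carrying a self-loop labeled $\top$; then the original system has a trace containing $\top$ iff the modified one has a run with infinitely many $\top$'s (the contributors simply stay put while the leader loops on the sink).

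For the upper bound I would start from Lemma~\ref{lem:Clsupp}: the \CDsystem has a B\"uchi run iff there is a reachable configuration $(M, qA\a, g)$ of the $(\Ccf,\Dd)$-system together with a word $v\in\S^*_{D,\n}$ such that $\Ddk$ has a loop $(\es, qA, g)\xra{v}(K, qA\a', g)$ containing $\top$, and $v$ is \Clsupp from some $p_1,\dots,p_m\in\Qn$ with $[p_1,\dots,p_m]\le M$, where necessarily $m\le|G|$. The algorithm is a nondeterministic polynomial-space search that guesses and verifies all these ingredients simultaneously. First guess $g$, the leader control state $q$ and top stack symbol $A$, the distinct capacity values $h_1,\dots,h_m$ with the positions of the $\n(h_i)$ inside $v$, and the contributor states $p_1,\dots,p_m$. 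Next, using Lemma~\ref{lem:subwords} one may, exactly as in~\cite{LMW15}, replace the loop language of the pushdown $\Ddk$ by its downward closure, which is regular; the associated automaton need not be constructed, since with the skeleton fixed one explores the leader's stack on the fly while checking that $\Ddk$ returns to $(K,qA\a',g)$ and performs $\top$. In parallel, to certify that $v$ is \Clsupp, for each $i\le m$ one searches for a loop $u^i$ of the finite-state system $\Ccfk$ whose $\S_{D,\n}$-projection equals $v$ and that produces $\Cw(h_i)$ right after $\n(h_i)$; these $m$ searches are kept synchronized with the run of $\Ddk$ through the common skeleton. Finally, the existence of a reachable configuration $(M,qA\a,g)$ with $M\ge[p_1,\dots,p_m]$ is an instance of (a mild variant of) the reachability problem for pushdown \CDsystems, decidable in \PSPACE\ by~\cite{EGM16,LMW15}; since $m\le|G|$ this asks only for domination of a polynomially small multiset. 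By Savitch's theorem the whole procedure runs in \PSPACE.

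I expect the main obstacle to be the space accounting. Both $\Ddk$ and $\Ccfk$ have exponentially many control states because of the capacity component $K\subseteq G$, so one must argue that only polynomially much information is kept at each step of the search: the current capacity $K$ is monotone along any run and is entirely determined by the guessed $h_1,\dots,h_m$, so it costs nothing extra to store; what remains is the current state of $\Ddk$, a bounded window of the leader's stack, the current states of the at most $|G|$ supporting copies of $\Ccfk$, and the bookkeeping aligning their $\S_{D,\n}$-projections with $v$ --- all polynomial. The remaining points are routine: that one may restrict to the finite-state contributor $\Ccf$ (Theorem~\ref{th:buechi}, Lemma~\ref{lem:Ccf}), which is already established, and making precise the ``mild variant'' of reachability in which the reached contributor multiset is required to dominate a fixed polynomially small multiset.
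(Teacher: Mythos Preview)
Your proposal is correct and follows essentially the same route as the paper: guess the parameters $(h_1,\dots,h_m,p_1,\dots,p_m,q,A,g)$, reduce the ``reach $(M,qA\a,g)$ with $M\ge[p_1,\dots,p_m]$'' step to an ordinary \PSPACE\ reachability instance, and handle the loop by taking the downward closure of the leader's loop language in $\Ddk$ (a regular language computable on the fly in \PSPACE\ from the pushdown, cf.~Courcelle) and intersecting it with the $m\le|G|$ finite automata encoding the \Clsupp\ conditions in $\Ccfk$. The paper makes these three steps and the intersection argument a bit more explicit, but the structure and the space analysis are the same.
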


%%% long version
\begin{proof}
  The lower bound follows from~\cite{EGM16}.

  For the upper bound we introduce one more shorthand. Let $h_1,
  \ldots, h_m$ be a sequence of values from $G$. 
  A $(h_1,\dots,h_m)$-word is a word of the form $v_1\n(h_1)\cdots
  v_m\n(h_m)v_{m+1}$ with no occurrence of $\n$ in $v_1\cdots v_{m+1}$.

  Our \PSPACE\ algorithm consists of three steps:
  \begin{enumerate}
  \item Guess a sequence $h_1, \ldots, h_m$ of pairwise distinct values
    from $G$, states $p_1,\ldots,p_m$ of $\Ccf$, control state $q \in
    Q$ and stack symbol $A \in \Gd$ for $\Dd$, and a value $g\in G$.     

  \item Check that there exists a configuration $(M,qA\a,g)$ 
    satisfying $M \ge [p_1, \ldots, p_m]$ that is reachable in the 
    $(\Ccf,\Dd)$-system.
  \item Check that there exists a $(h_1,\dots,h_m)$-word $v \in \S_{D,\n}^*$ 
    with $\lst(v) = g$ such that:
    \begin{enumerate}
    \item $v$ is \Clsupp from $p_1, \ldots, p_m$,
    \item $\Ddk$ has a run of the form
      $(\es,qA,g) \xra{\bar v} (K,qA\a',g)$ for some
      $(h_1,\dots,h_m)$-word $\bar v$ such that 
      $v \sleq \bar v$, $\lst(\bar v) = \lst(v) = g$, and $\top$
      occurs in $\bar v$.
    \end{enumerate}
  \end{enumerate}
  By Lemmas \ref{lem:Clsupp} and \ref{lem:subwords}, this algorithm
  returns ``yes'' iff the system has a B\"uchi run.

  \smallskip
  Step~1 can be done in polynomial space since we have $m \le |G|$.
  \smallskip

  Step~2 reduces to an instance of the reachability problem in the \CDsystem.
  First, by Lemma~\ref{lem:Ccf}, the question is
  to decide if there exists a configuration $(M,qA\a,g)$ in the \CDsystem that
  is reachable and such that $M \ge [p_1\a_1, \ldots, p_m\a_n]$ for some
  $\a_1, \ldots, \a_n \in \G^*$.
  We modify $\Cc$ and $\Dd$ so that after such a configuration has been
  reached (and only in that case), the leader can do a new action $\top_0$.
  The idea is to add transitions to $\Cc$ so that a contributor in a state
  of the form $p_i\a$ can write ``$i$'' to the register (once), and add
  transitions $qA \xra{\dr(1)} \ldots \xra{\dr(m)} \xra{\top_0}$
  to $\Dd$.
  Since the reachability problem is in \PSPACE\ when $\Dd$ and $\Cc$
  are pushdown systems~\cite{EGM16,LMW15}, step 2 is in
  \PSPACE.

  \smallskip

  Step~3 requires to construct some auxiliary automata.
  For $i = 1, \ldots, m$, let $\Aa_i$ be a finite automaton
  accepting the projection over $\S_{D,\n}$ of the words $u \in
  \S_{D,C,\n}^*$ 
  of the form
  \begin{equation}\label{eq:runCck}
  u = u_1 \nu(h_1) \cdots u_i \nu(h_i) \boldsymbol{\Cw(h_i)}  u_{i+1} \cdots u_m\nu(h_{m})u_{m+1}
  \end{equation}
  and such that $(\es, p_i, g) \xrightarrow{u} (K,p_i,g)$ is a trace in $\Ccfk$.
  Since $\Ccfk$ can be computed in \PSPACE{}, so does $\Aa_i$.

  Consider the language:
  \begin{align*}
    L=\{ \bar v\in \S^*_{\Dd,\n}\ :\quad
    & \text{$\bar v$ contains $\top$,} \;\; \lst(\bar v)=g, \;
    \text{ $\bar v$ is a $(h_1,\dots,h_m)$-word,}\\
    & \text{$(\es,qA,g) \xra{\bar v} (K,qA\a,g)$ in $\Ddk$, for some
      $\a$}\}\,.
  \end{align*}
  A pushdown automaton recognizing $L$ can be obtained by slightly
  modifying $\Ddk$.  Since $L$ can be recognized by a pushdown
  automaton of polynomial size, it is possible to compute on-the-fly
  in \PSPACE\ a finite automaton $\Aa$ (of exponential size)
  accepting all the $(h_1,\dots,h_m)$-words in the downward closure
  of~$L$, see~\cite{cou91beatcs}.
  
  Step~3 then consists in checking that the intersection of the automata
  $\Aa$, $\Aa_1,\ldots,\Aa_m$ is non-empty, which can be done in \PSPACE{}.
\end{proof}

\begin{remark}
  In the case where there is no leader, $\Ddk$ becomes an automaton accepting
  all sequences $\n(h_1) \cdots \n(h_m)$ such that $h_1, \ldots, h_m$ are
  pairwise distinct.
  To check that such a sequence is $\omega$-supported, we can test separately
  for each $\n(h_i)$ if there is a contributor run that can produce
  $\Cw(h_i)$ -- instead 
  of having to take the product of $m$ automata in order to
  synchronize with $\Ddk$, as in the proof of
  Theorem~\ref{thm:liveness1}.
  For that reason, we can test directly for each $\n(h_i)$ if the pushdown 
  automaton $\Cck$ (rather than $\Ccfk$) admits a run as in 
  Equation~\ref{eq:runCck}.
  This leads to an algorithm in \NP\ instead of \PSPACE.
\end{remark}

%%% Local Variables: 
%%% mode: latex
%%% TeX-master: "m"
%%% End: 

\section{Max-safe problem}
\label{sec:universal}

We show in this section that the max-safe problem is \NP-complete when
$\Cc$ ranges over finite-state systems and $\Dd$ ranges over 
pushdown systems, and \NEXPTIME-complete when both $\Cc$ and $\Dd$ range 
over pushdown systems (Theorem~\ref{th:safety}).

We start by introducing a set semantics of \CDsystems, that replaces
multisets by sets.  This semantics is suitable for the  reachability
and max-safe problems (but not for liveness). We show
that the max-safe problem is \NP-complete when contributors are
finite-state.  Then we consider the case of contributors given by a
pushdown automaton. As for liveness we can reduce this case to the case when
contributors are finite-state.  This gives  a \NEXPTIME\ algorithm.
%The main result in this subsection is the matching \NEXPTIME-hardness
%result.

\subsection{Set semantics}
%\label{sec:altsem}

As a first step we will introduce the set semantics of \CDsystems that is equivalent to the
multiset semantics of Section~\ref{sec:prelim} when only finite traces are considered.
The idea is that since the number of contributors is arbitrary, we can always 
add contributors that copy all the actions of a given contributor. So once a
state of $\Cc$ is reached, we can assume that we have arbitrarily many copies 
of $\Cc$ in that state.
In consequence, we can replace multisets by sets.
A very similar semantics has already been used
in~\cite{LMW15,DEGM15}.
Here we need to be a bit finer in order to handle deadlocks.

Consider a \CDsystem with the notations as in~Eq.~\eqref{eq1} on page~\pageref{eq1}:
\[
  \Cc=\struct{S,\d,s_\init}\qquad
  \Dd=\struct{T,\D, t_\init}\ .
\]

Instead of multisets $M \in \Nat^S$,
we use sets $B \subseteq S$. As for multisets we lift the transitions
from elements to sets of elements:
\begin{align*}
  B \xra{a} B'\text { in $\d$}&\quad\text{if $s \xra{a} s'$ in
                                $\delta$, and $B'$  is
either $B \cup \set{s'}$ or $(B \cup\set{s'}) \setminus \set{s}$}\\
  &\quad\text{for some $s\in B$.}
\end{align*}
The intuition is that $B \xra{a} B \cup \set{s'}$ represents the case where
\emph{some} contributors in state $s$ take the transition,
and $B \xra{a} (B \cup \set{s'}) \setminus \set{s}$ corresponds to the case 
where \emph{all} contributors in state $s$ take the transition.
The transitions in the \emph{set semantics} are essentially the same
as for the multiset case:
\begin{align*}
  (B,t,g)\xra{\dw(h)}& (B,t',h)&& \text{if $t\xra{\dw(h)} t'$ in
                                  $\D$} \\
  (B,t,g)\xra{\dr(h)}& (B,t',h)&& \text{if $t\xra{\dr(h)} t'$ in
                                  $\D$ and $h=g$}\\
  (B,t,g)\xra{\Cw(h)}& (B',t,h)&& \text{if $B\xra{\Cw(h)} B'$ in $\d$} \\
  (B,t,g)\xra{\Cr(h)}& (B',t,h)&& \text{if $B\xra{\Cr(h)} B'$ in $\d$ and $h=g$}
\end{align*}

\begin{lemma}\label{lemma:multi-set}
  \begin{enumerate}
  \item If $(M_0,t_0,g_0) \xra{a_1} \ldots \xra{a_n} (M_n,t_n,g_n)$ in the
    multiset semantics, and $B_j$ is the support of $M_j$, for
    every $j=0,\dots,n$, then
    $(B_0,t_0,g_0) \xra{a_1} \ldots \xra{a_n} (B_n,t_n,g_n)$ in the set
    semantics.
  \item If $(B_0,t_0,g_0) \xra{a_1} \ldots \xra{a_n} (B_n,t_n,g_n)$ in the
    set semantics, then there exist multisets $M_0, \ldots, M_n$ such that
    $M_j$ has support $B_j$, and for some $i_j >0$,
    \[(M_0,t_0,g_0) \xra{(a_1)^{i_1}} (M_1,t_1,g_1) \xra{(a_2)^{i_2}} 
    \ldots \xra{(a_n)^{i_n}} (M_n,t_n,g_n)\] in the multiset semantics.
  \end{enumerate}
\end{lemma}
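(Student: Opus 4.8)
The statement to prove is Lemma~\ref{lemma:multi-set}, relating the multiset and set semantics of \CDsystems. Both directions are proved by induction on the length $n$ of the run. The underlying principle is the one already described in the text: because the number of contributors is arbitrary, we may freely add fresh contributors that mirror the moves of existing ones, so a single contributor in a given state can always be ``cloned'' into as many copies as we need.

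\emph{Part 1 (multiset $\to$ set).} This is the easy direction. I proceed by induction on $n$. The base case is immediate since $B_0 = \supp(M_0)$. For the inductive step, suppose the $j$-th transition of the multiset run is, say, a contributor move $M_{j-1} \xra{a_j} M_j$ coming from $s \xra{a_j} s'$ in $\d$, so $M_j = M_{j-1} - [s] + [s']$ with $M_{j-1}(s) > 0$. Then $s \in B_{j-1} = \supp(M_{j-1})$, and $\supp(M_j)$ equals either $B_{j-1} \cup \set{s'}$ (if $M_{j-1}(s) > 1$, so $s$ survives) or $(B_{j-1} \cup \set{s'}) \setminus \set{s}$ (if $M_{j-1}(s) = 1$); in both cases this is precisely a legal set-semantics move $B_{j-1} \xra{a_j} B_j$. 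For leader moves the multiset component is unchanged, the register values are updated identically in both semantics, and any register-value guard $h = g$ that held in the multiset semantics holds verbatim. The same check applies to the read-guard on contributor moves. This covers all four transition types.

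\emph{Part 2 (set $\to$ multiset).} This is the direction where the cloning idea is needed, and it is the main obstacle. Again I induct on $n$, building the multisets $M_j$ together with the stutter-multipliers $i_j$. The delicate point is the move $B_{j-1} \xra{a_j} B_j$ with $B_j = B_{j-1} \cup \set{s'}$ (the ``some but not all contributors move'' case): to realize this while keeping $\supp(M_j) = B_j$, one copy in state $s$ must move to $s'$ yet at least one copy must remain in state $s$. The fix is to first \emph{over-provision}: when we build the run we ensure that whenever a state $s$ is in the current support we actually keep at least two copies of $s$ in the multiset (or more generally, enough copies that all future ``partial'' moves out of $s$ can be served). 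Concretely, the cleanest way to organize the induction is to prove the slightly stronger statement that for any target function $k : B_n \to \Nat^+$ we can find multisets $M_j$ with $\supp(M_j) = B_j$ and $M_n \ge \sum_{s \in B_n} k(s)[s]$, together with multipliers $i_j > 0$; then we choose $k$ large enough at the end and propagate the requirement backwards. Alternatively — and this matches the informal description best — process the set run transition by transition: at step $j$, if $a_j$ is a contributor action with $B_j = B_{j-1}\cup\set{s'}$, we designate \emph{two} fresh contributors in state $s$ (available because earlier moves were replicated enough), send both along the path that brought a contributor to $s$, then move one of them on to $s'$, leaving the other in $s$. For the ``all contributors move'' case $B_j = (B_{j-1}\cup\set{s'})\setminus\set{s}$ we instead move \emph{every} copy of $s$; this is why the multiplier $i_j$ may exceed $1$. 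Leader moves are copied with $i_j = 1$ and present no difficulty.

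\emph{Where the work concentrates.} The only real subtlety is bookkeeping in Part 2: making precise ``we kept enough copies of $s$ around'' so that every partial contributor move has a spare copy to consume, and simultaneously that every ``all move'' is executed by replicating $a_j$ enough times ($i_j$ copies) to drain the multiset exactly as the set semantics prescribes. I would handle this by induction from the front with the strengthened invariant ``$M_j(s) \ge c_j(s)$ for all $s \in B_j$'', where $c_j(s)$ is an upper bound on the number of future partial moves that must exit state $s$ — a quantity readable off the (fixed, finite) set run in advance — and noting that a fresh batch of contributors following an existing one can always restore the invariant, since contributors are parametrized and anonymous. Everything else (register updates, read guards, the fact that $\supp(M_j) = B_j$ is preserved) is routine and identical to the checks in Part 1.
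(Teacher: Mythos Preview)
Your plan for Part~1 is fine and matches the paper, which simply says this ``follows directly from the definitions.''

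For Part~2 your approach is correct but noticeably more elaborate than the paper's. You propose either (a) a strengthened backward-looking invariant with per-state lower bounds $c_j(s)$ computed from the remainder of the run, or (b) an on-the-fly cloning scheme; both can be made to work, though (b) as written is a bit muddled since the number of contributors is fixed once $M_0$ is chosen and you cannot literally inject ``fresh'' contributors at step~$j$. The paper sidesteps all of this bookkeeping with a single clean trick: take $M_0 = 2^n \cdot [s_1,\ldots,s_m]$ where $B_0 = \{s_1,\ldots,s_m\}$ and $n$ is the length of the set-semantics run, then simulate each step by letting the leader move, or \emph{all} copies in state $s$ move, or exactly \emph{half} of them move. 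A one-line induction shows $M_j(s) \ge 2^{n-j}$ for every $s \in B_j$, so the support is always exactly $B_j$ and every partial move has at least two copies available to split. This gives the multipliers $i_j$ directly (either $1$, $M_{j-1}(s)$, or $\lfloor M_{j-1}(s)/2 \rfloor$) with no need to analyze the future of the run. Your approach buys nothing extra here; the paper's is simply shorter.
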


%%% long version
\begin{proof}
  The first part of the lemma follows directly from the definitions.
  For the second part, writing $B_0 = \set{s_1,\ldots,s_n}$,
  we let $M_0 = 2^n [s_1, \ldots, s_m]$.
  We then simulate a step of the set semantics by letting either the leader,
  all the copies of $\Cc$, or half the copies of $\Cc$ take the transition
  in the \CDsystem.
\end{proof}

\begin{remark}
  The set semantics is a variant of the \emph{accumulator semantics}
  used in~\cite{LMW15}, in which only transitions of the form $B
  \xra{a} B \cup \set{s'}$ (but not $B \xra{a} (B \cup \set{s'})
  \setminus \set{s}$) were used. 
%% long version 
  The accumulator semantics is sufficient for the reachability
  problem, and
  has the nice property that the $B$-part is monotonic (hence the name 
  \emph{accumulator} semantics). 
  So for instance, in the case of finite-state 
  processes, it leads to a very simple \NP{} algorithm for the reachability 
  problem \cite{LMW15}.
  However, the accumulator semantics does not satisfy the first item
  of Lemma~\ref{lemma:multi-set}, and is thus not precise enough for
  properties that refer to~\emph{maximal} runs.
\end{remark}

\begin{corollary}\label{cor:set-blocking}
  Fix a \CDsystem.
  In the multiset semantics  the system has a finite maximal safe run
  ending in a configuration $(M,t,g)$ iff in the set semantics the
  system has a finite maximal safe run ending in the configuration $(B,t,g)$ 
  with $B$ being the support of $M$.
\end{corollary}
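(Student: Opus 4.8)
The plan is to derive Corollary~\ref{cor:set-blocking} directly from Lemma~\ref{lemma:multi-set}, by arguing that a finite run is \emph{maximal} (i.e.\ deadlocked) in one semantics if and only if the corresponding run is maximal in the other. Safety is preserved automatically since the two runs carry the same sequence of actions (up to stuttering, which does not affect whether $\top$ occurs), so the only real content is transferring the deadlock property across the correspondence.

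For the forward direction, suppose the multiset system has a finite maximal safe run $(n[s_\init],t_\init,g_\init)\xra{*}(M,t,g)$. By Lemma~\ref{lemma:multi-set}(1), the set system has a run reaching $(B,t,g)$ with $B=\supp(M)$, along the same action sequence, so it is still safe. It remains to see that $(B,t,g)$ is a deadlock in the set semantics. The key observation is that the enabled actions at $(B,t,g)$ depend only on $B$ (not on the multiplicities): a leader transition is enabled iff it is enabled from $(M,t,g)$; a contributor transition $s\xra{a}s'$ with the appropriate read/write compatibility is enabled iff $s\in B$ iff $M(s)>0$. Hence if some transition were enabled at $(B,t,g)$ it would also be enabled at $(M,t,g)$, contradicting maximality of the multiset run.

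For the converse direction, suppose the set system has a finite maximal safe run ending in $(B,t,g)$. By Lemma~\ref{lemma:multi-set}(2) the multiset system has a run, along a stutter-expansion of the same action sequence, ending in some $(M,t,g)$ with $\supp(M)=B$; this run is still safe because stuttering does not introduce or remove occurrences of $\top$. By the same ``enabledness depends only on the support'' observation as above, $(M,t,g)$ is a deadlock in the multiset semantics, since any enabled multiset transition would give an enabled set transition at $(B,t,g)$. This establishes the stated equivalence, with the matching configurations $(M,t,g)$ and $(B,t,g)$ related by $B=\supp(M)$.

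The only mildly delicate point — and the step I would write most carefully — is the claim that the set of enabled transitions at a configuration is determined by $(\supp(M),t,g)$ alone. This is where the design of the set semantics matters: both ``some contributors move'' and ``all contributors move'' are permitted, so every contributor step available in the multiset model remains available in the set model and vice versa, and in particular no spurious deadlock is created or destroyed by passing to supports. I do not expect any genuine obstacle here; it is essentially a bookkeeping argument reading off the transition rules, and the substance has already been packaged into Lemma~\ref{lemma:multi-set}.
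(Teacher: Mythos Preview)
Your argument is correct and is exactly the (implicit) approach the paper has in mind: the corollary is stated without proof immediately after Lemma~\ref{lemma:multi-set}, and the intended justification is precisely the two observations you spell out --- that Lemma~\ref{lemma:multi-set} transports runs in both directions (preserving initiality and safety, the latter because only contributor actions get stuttered and $\top$ is a leader action), and that enabledness of a transition at $(M,t,g)$ depends only on $(\supp(M),t,g)$, so deadlocks coincide.
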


\subsection{Finite-state contributors}

First we will assume that $\Cc$ is  a finite-state transition system.
% We keep the notations of~Eq.~\eqref{eq1} on
% page~\pageref{eq1}.
% As $\Dd$ may be a  pushdown system we will occasionally refer to the
% pushdown automaton defining it, we will use the notation as
% in~Eq.~\eqref{eq2}. 

\begin{lemma}
  \label{lem:NPhard}
  The max-safe problem is \NP-hard when $\Cc$ and $\Dd$
  are both finite-state.
\end{lemma}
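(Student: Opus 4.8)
The goal is to show \NP-hardness of the max-safe problem for finite-state $\Cc$ and $\Dd$, i.e., deciding whether the \CDsystem\ has a maximal trace avoiding a distinguished leader action $\top$. I would reduce from a standard \NP-complete problem such as SAT (or, what is often more convenient in this register setting, from \textsc{3SAT} or the problem of finding a satisfying assignment of a propositional formula in CNF). The intuition is that a maximal safe run must, in order to be maximal, exercise enough of the leader's behaviour to get ``stuck'', and by designing the leader so that it can only get stuck (and avoid $\top$) after reading a sequence of register values encoding a complete, consistent assignment of truth values to variables, the existence of such a run becomes equivalent to satisfiability.

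First I would describe the gadget. Let $x_1, \dots, x_n$ be the propositional variables and $C_1, \dots, C_k$ the clauses. The register alphabet $G$ would contain symbols $\langle i, b\rangle$ for $i \le n$ and $b \in \{0,1\}$, together with a few control markers. Contributors are finite-state and serve only to \emph{offer} register values: a contributor may, nondeterministically at the start, commit to a pair $\langle i, b\rangle$ and write it, then loop (or go to a harmless sink). Because the number of contributors is unbounded, any multiset of writes $\langle i, b\rangle$ is available; the delicate point handled by the gadget design is \emph{consistency} — we must prevent both $\langle i,0\rangle$ and $\langle i,1\rangle$ from being ``used''. The leader walks through the variables $i = 1, \dots, n$ in order, at stage $i$ reading some $\langle i, b_i\rangle$ from the register and recording $b_i$ in its (finite) state; since the leader's state tracks which value it committed to for each processed variable, this pins down a single assignment. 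After fixing the assignment, the leader scans the clauses $C_1, \dots, C_k$: for each clause it checks whether the recorded assignment satisfies some literal of that clause, using only internal moves. If every clause is satisfied, the leader reaches a state with \emph{no outgoing transition} and from which $\top$ was never performed — a maximal safe run. If at some stage the leader cannot proceed (it wants to read a $\langle i, b\rangle$ that no contributor currently offers) or if a clause is unsatisfied, I would route the leader into a state that is forced to perform $\top$ (e.g. a self-loop on $\top$, or a single $\top$-edge to a sink), so such runs are never safe. One must also ensure \emph{no other} maximal trace avoids $\top$: the only way for the overall system to get stuck is for the leader to get stuck (contributors individually can always idle or be replicated), so the maximal safe runs correspond exactly to leader runs that complete the scan successfully.

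The subtle step — and the one I expect to be the main obstacle — is enforcing \emph{consistency} of the assignment: making sure that a single maximal safe run cannot ``read $0$ for $x_i$'' in one place and implicitly rely on $\langle i,1\rangle$ elsewhere, and more importantly that the clause-checking phase genuinely uses the \emph{same} assignment the leader read in the first phase. This is exactly why I have the leader record the bits $b_1, \dots, b_n$ in its finite control and do the clause check purely internally: there is then only one assignment in play per run, namely the one stored in the leader's state. A secondary subtlety is that the register is shared and volatile: between the leader reading $\langle i, b_i\rangle$ and reading $\langle i+1, b_{i+1}\rangle$, contributors may overwrite the register arbitrarily — but that is fine, since we only need \emph{some} contributor offering the next required value at the moment the leader wants it, and replication of contributors guarantees availability of any value the run needs. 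I would then argue both directions: from a satisfying assignment $\nu$, spawn for each $i$ a contributor that writes $\langle i, \nu(x_i)\rangle$ and have the leader read these in order and verify all clauses, reaching the stuck safe state (using Corollary~\ref{cor:set-blocking} / Lemma~\ref{lemma:multi-set} to pass between set and multiset semantics if convenient); conversely, a maximal safe run forces the leader through the full scan, and its recorded bits define an assignment satisfying every clause. Membership in \NP\ is immediate from the later results, so this lemma pins the lower bound. $\square$
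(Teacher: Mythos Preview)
Your reduction has a fatal size blowup. You have the leader read $\langle 1,b_1\rangle,\ldots,\langle n,b_n\rangle$ and ``record $b_i$ in its (finite) state'', then check all clauses internally against the stored assignment. Storing $(b_1,\ldots,b_n)$ in the control requires $2^n$ leader states, so the constructed $\Dd$ is exponential in the size of the SAT instance and this is not a polynomial-time reduction. Checking clauses on the fly does not help either: you would then need to remember the set of already-satisfied clauses, which is $2^k$ states. As written, the argument does not establish \NP-hardness. The paper sidesteps this by \emph{reversing the roles}: the leader \emph{writes} the guessed assignment one variable at a time, and each contributor nondeterministically picks a single clause $c_j$, reads only the (at most three) relevant variable values as the leader posts them, and, if the clause is satisfied, writes a token ``$c_j$''. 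The leader then reads $c_1,\ldots,c_m$ in order and halts in a state with no outgoing transitions. This keeps the leader at $O(n+m)$ states and each contributor at constant size per clause, and consistency is automatic because the leader writes each variable exactly once.

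A secondary issue: you allow contributors to ``loop'' after writing their value. If a contributor can loop forever, then the run in which contributors loop indefinitely while the leader never moves is infinite, hence maximal, and contains no $\top$---so the \CDsystem\ has a maximal safe trace regardless of satisfiability, and the reduction collapses. You need contributors to terminate (your ``sink'' alternative), and you need every non-final leader state to have $\top$ enabled, so that any finite maximal safe run forces the leader into its unique $\top$-free halting state. The paper's construction is explicit on both points.
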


\begin{proof}
  We reduce 3-SAT to the max-safe problem.
  Given a formula $\varphi = c_1 \land \ldots \land c_m$, with $c_j$
  clauses of length 3  over variables 
  $x_1, \ldots, x_n$, we construct finite-state processes $\Cc$ and $\Dd$ 
  such that $\varphi$ is satisfiable iff there exists 
  a maximal  run in the \CDsystem that contains no occurrence of a
  fixed action $\top$ of $\Dd$.
  
  The leader $\Dd$ will guess the values of the variables, and the contributors
  will check if all clauses are satisfied.
  So, the leader starts by successively writing 
  $x_1 = b_1, \ldots, x_n = b_n$ in the register, where each $b_i$ is a
  guessed truth value.
  Meanwhile, each contributor chooses a clause $c_j$, reads the values
  of the variables appearing in $c_j$ as the leader writes them, and
  checks whether $c_j$ is satisfied. If it is, he writes ``$c_j$'' in the 
  register.
  After having guessed values for the all the variables, the  leader
  must successively read ``$c_1$'', \ldots, ``$c_m$''.
  If she manages to do this then she knows that  all clauses are satisfied.
  She enters then some state $q$  with no outgoing transitions.
  For every other state $q' \neq q$ of the leader we add a transition
  $q'\xra{\top}q'$.

  Suppose $\varphi$ is satisfiable. Take  a run as described above,
  starting with $n$ contributors, and where the leader chooses a
  valuation that satisfies $\varphi$.
  The leader ends in state $q$, from which she has no available transition.
  Similarly, each contributor stops after writing one of the
  ``$c_i$'', or blocks because he missed reading some variable. 
  So the run is maximal, and does not contain~$\top$.

  For the other direction, observe that all safe runs must be finite
  and should end with $\Dd$ in the state $q$ because all other states
  have a transition on  $\top$. 
  Such a run defines a valuation of the variables that satisfies
  $\varphi$.
\end{proof}

To decide the max-safe problem, we check separately for
the existence of an infinite, or finite and maximal, run without
occurrences of $\top$ (a \emph{safe run}).
The case of infinite safe runs can be reduced to the repeated
reachability problem: by Theorem~\ref{th:transP} we can
construct from $(\Cc,\Dd)$ an equivalent \CDtsystem in which all infinite runs 
contains infinitely many writes from the leader. To decide if this system 
admits an infinite run, we can then test for each possible value $g$ of the 
register if there is a run with infinitely many occurrences of $\Dw(g)$.
Since the repeated reachability problem is in \NP{} for finite-state
contributors~\cite{DEGM15} we obtain:

\begin{lemma}\label{l:inf}
  When $\Cc$ ranges over finite-state systems and $\Dd$ over pushdown systems, 
  deciding whether a \CDsystem has an infinite safe run is in \NP.
\end{lemma}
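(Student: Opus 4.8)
The plan is to reduce the existence of an infinite safe run (an infinite run with no occurrence of the leader action $\top$) to the repeated reachability problem, for which an \NP{} algorithm is already available when contributors are finite-state. The first step is to apply Theorem~\ref{th:transP}: from the given \CDsystem{} we obtain a \CDtsystem{} in which every infinite run has infinitely many write operations of the leader $\tDd$, and which has an infinite run if and only if the original one does. Crucially, Theorem~\ref{th:transP} preserves the relevant properties — in particular, whether a maximal trace avoids $\top$ can be expressed as an LTL, hence regular, $\Cc$-expanding property $\Pp$ over $\Sd \cup \Sc$, and the theorem converts it to a property $\tPp$ over $\tSd$ alone. Since in the \CDtsystem{} $\Cc$ is still finite-state (the transformation produces pushdown automata of linear size, and a finite-state $\Cc$ stays finite-state), we remain in the finite-state-contributor regime.

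Next I would observe that, because in the \CDtsystem{} every infinite run contains infinitely many leader writes, and since there are only finitely many register values $g \in G$, every infinite run must contain infinitely many occurrences of $\tDd$-writes $\Dw(g)$ for \emph{some} fixed value $g$. Conversely, any run with infinitely many $\Dw(g)$ for some $g$ is infinite. So deciding whether the \CDtsystem{} has an infinite run (equivalently, an infinite safe run avoiding $\top$, once the $\top$-avoidance is folded into the system by the reduction of Theorem~\ref{th:transP}) reduces to: is there a value $g\in G$ and a run of the \CDtsystem{} with infinitely many occurrences of $\Dw(g)$? For each of the finitely many candidate values $g$ we treat $\Dw(g)$ as the distinguished repeated-leader action, giving an instance of the repeated reachability problem. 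We take the disjunction over all $g$.

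To make this clean, I would first absorb the $\top$-avoidance constraint into the system: modify $\tDd$ so that performing $\top$ leads to a sink with no outgoing transitions (or simply delete all $\top$-transitions), so that a run of the modified system with infinitely many $\Dw(g)$ corresponds exactly to an infinite safe run of the original. Then each of the $|G|$ queries "does the (finite-state-contributor, pushdown-leader) \CDtsystem{} have a run with infinitely many $\Dw(g)$?" is exactly the repeated reachability problem, which by~\cite{DEGM15} is decidable in \NP{} when contributors are finite-state. Since $|G|$ is part of the input and \NP{} is closed under polynomially many disjunctive nondeterministic calls, the overall procedure runs in \NP.

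The main technical point — and the only place where care is needed — is to make sure the reduction of Theorem~\ref{th:transP} is applicable here and keeps us in the finite-state-contributor case and within polynomial size: the theorem explicitly states the resulting automata have sizes linear in $\Aa_C$ and $\Aa_D$, and that a regular/LTL property $\Pp$ yields a regular/LTL $\tPp$ of linear size, so this is granted. The remaining steps — expressing "no $\top$" as a $\Cc$-expanding LTL property, folding $\tPp$ (and the $\top$-sink modification) into the leader automaton, and taking the disjunction over register values — are routine. I do not expect a real obstacle; the content of the lemma is essentially the combination of Theorem~\ref{th:transP} with the known \NP{} bound for repeated reachability with finite-state contributors.
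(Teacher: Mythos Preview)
Your proposal is correct and follows essentially the same approach as the paper: remove $\top$-transitions, apply Theorem~\ref{th:transP} to obtain a \CDtsystem{} in which every infinite run has infinitely many leader writes (and which has an infinite run iff the original does), and then for each register value $g$ test repeated reachability of $\Dw(g)$ using the \NP{} bound for finite-state contributors from~\cite{DEGM15}. The only difference is cosmetic: the paper handles the ``safe'' constraint by implicitly deleting $\top$-transitions before invoking Theorem~\ref{th:transP} (item~3), whereas you also discuss threading the no-$\top$ property through $\Pp$ and $\tPp$; both routes work and lead to the same reduction.
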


It remains to give an algorithm for the existence of a \emph{finite} maximal
safe run.
By Corollary~\ref{cor:set-blocking} we can use the set
semantics.
From now on we will also need to exploit the fact that $\Dd$ is a
pushdown system. 
Recall that the states of $\Dd$ are of the form $q\a$ where $q$ is the
state of the pushdown automaton defining $\Dd$ and $\a$ represents the
stack. 
The question is to decide if there is a configuration $(B,q\a,g)$ 
from which there is no outgoing transition in the \CDsystem, and such that
$(B,q\a,g)$ is reachable without using $\top$ actions.  Note that
we can say whether $(B,q\a,g)$ has no outgoing transition by looking
only at $B,q,g$ and the top symbol of $\a$.
Our algorithm will consists in guessing $B,q,g$ and some $A \in\G_D$, 
and checking reachability.
First, we show that it is sufficient to look for traces where the
number of changes to the first component of configurations is bounded.
%% long version
The idea is that we can always assume that in a run of the \CDsystem, a
state $s$ is added at most once to the current set of contributor states $B$.
This simply means that a state is removed from $B$ only if it will never be
used again in the run.

\begin{lemma}
  \label{lem:Bseq}
  Let $\rho = (B_0, t_0, g_0) \xra{a_1} (B_1, t_1, g_1) \xra{a_2} \ldots 
  \xra{a_n} (B_n,t_n,g_n)$ be a run of the \CDsystem{}. There exists a run 
  $\rho' = (B'_0, t_0, g_0) \xra{a_1} (B'_1, t_1, g_1) \xra{a_2} \ldots 
  \xra{a_n} (B'_n,t_n,g_n)$ such that $B_0 = B'_0$, $B_n = B'_n$, 
  and for all $s \in S$ and $0 \le i < n$,
  if $s \in B'_i$ and $s \notin B'_{i+1}$, then for all $j > i$, $s
  \notin B'_j$.
\end{lemma}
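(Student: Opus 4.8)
The plan is to transform $\rho$ into $\rho'$ by a ``lazy removal'' argument: whenever a contributor state $s$ is dropped from the current set but later re-added, we simply keep $s$ in the set the whole time. Concretely, I would process the run and identify, for each state $s\in S$, the \emph{last} index $\ell_s$ at which some transition of $\rho$ uses a contributor in state $s$ as its source (i.e.\ the transition $a_{\ell_s}$ is a contributor move $s\xra{a_{\ell_s}}s'$). I would then define $B'_i$ to be the set of states $s$ such that $s\in B_j$ for some $j\le i$ and $i< \ell_s$, together with whatever states are forced to remain by $B_n=B'_n$; more simply, take $B'_i = \{\, s : s \text{ is added to } B \text{ at or before step } i, \text{ and } s \text{ is still needed at or after step } i\,\}$. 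The point is that $B'_i$ is obtained from $B_i$ only by possibly \emph{adding back} states that $\rho$ prematurely deleted, so $B_i\subseteq B'_i$ for all $i$.

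The key steps, in order, are: (1) make the notion ``$s$ is needed at step $j$'' precise --- $s$ is needed at step $j$ if either $j=n$ and $s\in B_n$, or there is $k\ge j$ with $a_k$ a contributor transition from source state $s$; (2) define $B'_i$ as the set of states added at or before step $i$ that are needed at step $i$, and check $B'_0=B_0$, $B'_n=B_n$, and $B_i\subseteq B'_i$; (3) verify that each transition $(B'_{i-1},t_{i-1},g_{i-1})\xra{a_i}(B'_i,t_i,g_i)$ is legal in the set semantics. For leader transitions there is nothing to check since the $B$-component does not change and $B'_{i-1}=B'_i$. For a contributor transition $a_i$ coming from $s\xra{a_i}s'$ in $\delta$ with $s\in B_{i-1}$, I note $s\in B_{i-1}\subseteq B'_{i-1}$, so the transition is enabled; then $s'$ is needed at step $i$ (the transition $a_i$ itself witnesses it, or it lies in some later $B_k$), hence $s'\in B'_i$; and $s\in B'_i$ iff $s$ is still needed at step $i$, which is exactly the set-semantics case distinction between $B'_i=B'_{i-1}\cup\{s'\}$ and $B'_i=(B'_{i-1}\cup\{s'\})\setminus\{s\}$. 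For a read transition one also checks the guard $h=g_{i-1}$, which is inherited unchanged from $\rho$. Finally (4) observe that by construction, if $s\in B'_i$ and $s\notin B'_{i+1}$ then $s$ is not needed at step $i+1$, so $s\notin B_j$ for all $j>i$, and since $B'_j$ only contains states that are needed at step $j$ (which requires being added by step $j$, i.e.\ lying in some $B_k$ with $k\le j$ --- but all later uses are gone), one gets $s\notin B'_j$ for all $j>i$, the desired monotonicity.

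The main obstacle I expect is bookkeeping around the boundary condition $B'_n=B_n$ and making sure the ``removal'' steps in $\rho'$ are consistent: a state $s$ can legally leave $B'$ only at a step $i$ where a contributor \emph{in state $s$} actually moves, since the set semantics only permits $B\xra{a}(B\cup\{s'\})\setminus\{s\}$ when the transition $a$ has source $s$. So I must schedule the removal of each prematurely-deleted-and-never-re-added state precisely at the last step of $\rho$ that uses it as a source, rather than wherever $\rho$ originally removed it; one must check that at that step $s$ is indeed still present in $B'_{i-1}$ (it is, since it is needed at step $i-1$) and that $s$ is genuinely no longer needed afterwards (true by choice of $\ell_s$). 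A secondary subtlety is a state $s$ with $s\in B_n$ but which is never a source of any contributor transition after being added --- such $s$ is ``needed'' at every step from its addition onward by the $j=n$ clause, so it is never removed in $\rho'$, which is fine and in fact the desired behaviour. Once these scheduling details are pinned down, each transition check is routine.
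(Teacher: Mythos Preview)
Your lazy-removal idea is exactly the paper's approach; the paper just presents it inductively (at each removal step of $\rho$, keep $s$ in $B'$ whenever $s\in B_j$ for some later $j$, otherwise remove it) and then observes $B_i\subseteq B'_i\subseteq\bigcup_{j\ge i}B_j$, whereas you give a closed-form description of the same sets $B'_i$ via the ``needed'' predicate. The two constructions coincide.

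One slip to fix in your formal definition of ``needed at step $j$'': the first clause should read simply ``$s\in B_n$'' (drop the condition $j=n$), and the second should quantify over $k>j$ rather than $k\ge j$. As literally written, a state $s\in B_0\cap B_n$ that is never the source of any contributor transition fails to be needed at step~$0$, breaking $B'_0=B_0$; and if $a_i$ is the \emph{last} transition with source $s$ (and $s\notin B_n$), your definition makes $s$ leave $B'$ only at step $i{+}1$, where $a_{i+1}$ need not have source $s$, so that set-semantics step would be illegal. Your own informal discussion (``schedule the removal \ldots\ precisely at the last step of $\rho$ that uses it as a source'', and the remark that states in $B_n$ are needed ``at every step from their addition onward'') already describes the corrected behaviour, so this is a bookkeeping fix rather than a gap in the argument.
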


\begin{proof}
  We define $B'_i$ by induction on $i$: $B'_0 = B_0$, and for $i > 1$,
  \begin{itemize}
  \item if $B_{i+1} = B_i$, then $B'_{i+1} = B'_i$.
  \item if $B_{i+1} = B_i \cup \{s\}$, then $B'_{i+1} = B'_i \cup \{s\}$.
  \item if $B_{i+1} = (B_i \setminus \{s\}) \cup \{s'\}$ and $s \notin B_j$
    for all $j > i$, then $B'_{i+1} = (B'_i \setminus \{s\}) \cup \{s'\}$.
    If $s \in B_j$ for some $j > i$, then $B'_{i+1} = B'_i \cup \{s'\}$.
  \end{itemize}
  Clearly, $\rho'$ is a run of the \CDsystem{}.
  Moreover, for all $i$, $B_i \subseteq B'_i \subseteq \bigcup_{j = i}^n B_j$.
  So in particular, $B_n = B'_n$.
\end{proof}

\begin{corollary}
Every finite run $\r$ of the \CDsystem in the set semantics can be written as $\r=\r_0 \cdots
\r_k$ with $k \le 2|\Qcf|$, where in each $\r_j$, all states have
the same first component.
\end{corollary}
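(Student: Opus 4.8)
The plan is to read off the corollary as an immediate consequence of Lemma~\ref{lem:Bseq}. Given a finite run $\rho$ of the \CDsystem{} in the set semantics, first apply Lemma~\ref{lem:Bseq} to obtain an equivalent run $\rho'$ with the same endpoints in which each contributor state $s$, once it leaves the set component, never returns; that is, the indicator sequence $i \mapsto \mathbf{1}[s \in B'_i]$ switches from $0$ to $1$ at most once and from $1$ to $0$ at most once, over the whole run. Consequently each fixed $s \in S$ accounts for at most two indices $i$ at which $B'_i \neq B'_{i-1}$: one where $s$ is added and one (later) where $s$ is removed.

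Next I would count the total number of ``breakpoints'', i.e. indices $i$ with $B'_i \neq B'_{i-1}$. Since only contributor states in $S$ can be added to or removed from the first component, and the runs considered here have contributor states drawn from $\Qcf$ (i.e. $S = \Qn$, the finite state set of $\Ccf$), each of the at most $|\Qcf|$ states contributes at most one addition-breakpoint and at most one removal-breakpoint, giving at most $2|\Qcf|$ breakpoints in total. Between two consecutive breakpoints the first component of the configuration is constant, so the run $\rho'$ decomposes as $\rho' = \rho_0 \cdots \rho_k$ where the $\rho_j$ are the maximal infixes with constant first component and $k \le 2|\Qcf|$. Since $\rho$ and $\rho'$ have the same sequence of labels and the same initial and final configurations, the same decomposition serves for $\rho$ up to this equivalence, which is all the corollary claims.

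There is no real obstacle here: the only care needed is bookkeeping, namely observing that in Lemma~\ref{lem:Bseq} the ``remove'' step happens at most once per state because after removal the state never reappears, and that the ``add'' step also happens at most once per state for the same reason (once removed it is gone, and it could only have been present from the start or added later, never both removed and re-added). Thus each state is responsible for at most one $0\to 1$ and one $1 \to 0$ transition in its membership sequence, bounding the number of breakpoints by $2|\Qcf|$ and hence the number of blocks by $2|\Qcf|$ as well (a block is delimited by breakpoints, so $k$ breakpoints yield at most $k+1$ blocks, but since the initial segment before the first addition has $B'_0 = B_0$ possibly nonempty one can absorb the $+1$; in any case $k \le 2|\Qcf|$ with the blocks indexed $\rho_0, \dots, \rho_k$ is exactly the stated bound). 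I would keep the proof to two or three sentences, citing Lemma~\ref{lem:Bseq} for the structural normalization and then doing the per-state count.
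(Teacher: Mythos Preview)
Your proposal is correct and matches the paper's proof essentially verbatim: apply Lemma~\ref{lem:Bseq} to normalize the run so that each state is added at most once and removed at most once, then count breakpoints where the first component changes, bounding their number by $2|\Qcf|$. The paper phrases the count by naming one responsible state $s_j$ per breakpoint and arguing each state is named at most twice, while you count indicator flips directly; these are the same argument.
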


\begin{proof}
  We take a run of the form described in Lemma~\ref{lem:Bseq}.
  Let $i_0 = 0$, and $i_1 < \cdots < i_k$ be the indices such that 
  $B_i \neq B_{i-1}$.
  Consider the sequence $B_{i_0}, B_{i_1}, \ldots, B_{i_k}$.
  There are states $s_1, \ldots, s_k \in S$ such that for all $0 \le j < k$,
  $B_{i_{j+1}} = B_{i_j} \cup \set{s_j}$ or 
  $B_{i_{j+1}} = (B_{i_j} \cup \set{s}) \setminus \set{s_j}$ for some $s$.
  Moreover, each $s\in S$ is added at most once, and removed at most once
  from some $B_i$, which means that there are at most two distinct indices 
  $j$ such that $s = s_j$. Hence $k \le 2|\Qcf|$. 
\end{proof}

\begin{lemma}
  \label{lem:boundedruns}
For every finite run $\r$ of the \CDsystem in the set semantics, there exists
a run $\r'$ with same label and end configuration that can be written 
as $\r'=\r_0 \cdots \r_k$ with $k \le 2|\Qcf|$, where in each $\r_j$, all 
states have the same first component.
\end{lemma}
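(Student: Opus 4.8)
The plan is to derive the run $\rho'$ directly from Lemma~\ref{lem:Bseq} and then simply count how many times the contributor component can change; no new argument beyond what is already in the proof of the preceding Corollary is needed.

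First I would apply Lemma~\ref{lem:Bseq} to the given run $\rho = (B_0, t_0, g_0) \xra{a_1} \cdots \xra{a_n} (B_n, t_n, g_n)$. This produces a run $\rho' = (B'_0, t_0, g_0) \xra{a_1} \cdots \xra{a_n} (B'_n, t_n, g_n)$ with the same label $a_1 \cdots a_n$, the same leader states $t_i$ and register values $g_i$ at every step, and with $B'_0 = B_0$ and $B'_n = B_n$; in particular $\rho'$ has the same label and the same end configuration $(B_n, t_n, g_n)$ as $\rho$. Moreover $\rho'$ has the ``no return'' property: once a contributor state $s$ leaves the current set, i.e.\ $s \in B'_i$ but $s \notin B'_{i+1}$, it never reappears, $s \notin B'_j$ for all $j > i$.

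Next I would bound the number of indices $i$ with $B'_i \neq B'_{i-1}$. A single set-semantics step changes the first component either by adding one new state or by adding one state and removing one state; combined with the ``no return'' property, this means that along the whole run each state of $S$ is added to the current set at most once and removed from it at most once, so there are at most $2|S| = 2|\Qcf|$ indices at which the first component changes. Cutting $\rho'$ at exactly these indices gives the decomposition $\rho' = \rho_0 \cdots \rho_k$ into maximal blocks on which the first component is constant, with $k \le 2|\Qcf|$. This is precisely the counting already used for the Corollary, now carried out on the run $\rho'$ supplied by Lemma~\ref{lem:Bseq} rather than on an arbitrary run.

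I do not expect a real obstacle here. The one point deserving a word of care is that ``same label and end configuration'' in the statement is genuinely satisfied, which is immediate because Lemma~\ref{lem:Bseq} preserves not only the endpoints of the contributor component but also the full label sequence and all leader states and register values. It is also worth observing why the statement is phrased as ``for every run there exists a run'': an arbitrary finite run need not itself admit such a short decomposition, and it is precisely the modified run of Lemma~\ref{lem:Bseq} that does.
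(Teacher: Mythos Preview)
Your proposal is correct and follows exactly the paper's own argument: apply Lemma~\ref{lem:Bseq} to obtain the modified run, then use the ``no return'' property to argue that each contributor state is added at most once and removed at most once, giving at most $2|\Qcf|$ changes to the first component. The paper's proof is in fact verbatim the proof of the preceding Corollary, as you anticipated.
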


\begin{proof}
  We take a run of the form described in Lemma~\ref{lem:Bseq}.
  Let $i_0 = 0$, and $i_1 < \cdots < i_k$ be the indices such that 
  $B_i \neq B_{i-1}$.
  Consider the sequence $B_{i_0}, B_{i_1}, \ldots, B_{i_k}$.
  There are states $s_1, \ldots, s_k \in S$ such that for all $0 \le j < k$,
  $B_{i_{j+1}} = B_{i_j} \cup \set{s_j}$ or 
  $B_{i_{j+1}} = (B_{i_j} \cup \set{s}) \setminus \set{s_j}$ for some $s$.
  Moreover, each $s\in S$ is added at most once, and removed at most once
  from some $B_i$, which means that there are at most two distinct indices 
  $j$ such that $s = s_j$. Hence $k \le 2|\Qcf|$. 
\end{proof}

\begin{lemma}
  \label{lem:GlobalReachNP}
  The following problem belongs to \NP:

\noindent
  \textbf{Input:} 
  finite-state system $\Cc$, pushdown system $\Dd$,
  $B \subseteq S$, $q \in Q$, $A \in\Gd$.

\noindent
  \textbf{Question:} Does the \CDsystem admit a run from
  $(\set{s_\init},\qid \Aid, g_\init)$ to $(B,qA\a,g)$ for some $\a \in\Gd^*$?
\end{lemma}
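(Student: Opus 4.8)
The plan is to reduce the question to a polynomial-size guess followed by a polynomial-time pushdown reachability test. By Lemma~\ref{lemma:multi-set}, reaching some configuration $(B,qA\a,g)$ in the multiset semantics is equivalent to reaching $(B,qA\a,g)$ in the set semantics; and by Lemma~\ref{lem:boundedruns} such a set-semantics run may be assumed to have the form $\r_0\cdots\r_k$ with $k\le 2|S|$, the contributor-set component being a constant $B_j$ throughout each $\r_j$, with $B_0=\set{s_\init}$ and $B_k=B$. So the algorithm first guesses this block skeleton: the sets $B_0,\dots,B_k$ (at most $2|S|+1$ subsets of $S$, hence polynomial), and for each step $j$ a contributor transition $s\xra{a}s'$ in $\d$ with $s\in B_j$ such that $B_{j+1}$ equals $B_j\cup\set{s'}$ or $(B_j\cup\set{s'})\setminus\set{s}$. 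It then computes, for each $j$, the set $W_j=\set{h\in G : \exists\, s,s'\in B_j,\ s\xra{\Cw(h)}s' \text{ in }\d}$ of register values a contributor can write without leaving $B_j$.

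Next I would build a single pushdown automaton $\Dd'$ whose control states carry the control state of $\Dd$, the block index $j\in\set{0,\dots,k}$, and the current register value $g$ (polynomially many), and whose stack alphabet is $\Gd$. Inside block $j$ the automaton $\Dd'$ may: simulate a move of $\Dd$ (updating $g$ on a write $\dw(h)$, requiring $g=h$ on a read $\dr(h)$); perform a ``contributor write'' that resets $g$ to any $h\in W_j$ and leaves the stack untouched; or switch to block $j{+}1$, checking $g=h$ if the guessed transition is $\Cr(h)$ and resetting $g$ to $h$ if it is $\Cw(h)$. It starts in $(\qid,0,g_\init)$ with stack $\Aid$, and its goal is any configuration with control state $(q,k,g)$ and top stack symbol $A$. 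Correctness of this simulation rests on two observations: a contributor read inside a block changes neither the register, nor the set, nor the leader, hence can be deleted from any block-form run; and conversely $B_j\xra{\Cw(h)}B_j$ is a legal set-semantics transition whenever $h\in W_j$ (use a fresh copy in $s$), so every run of $\Dd'$ is backed by a genuine set-semantics run of the \CDsystem. Hence the \CDsystem reaches $(B,qA\a,g)$ for some $\a$ iff $\Dd'$ reaches a goal configuration, which is a control-state-plus-top-of-stack reachability test for a polynomial-size pushdown automaton, solvable in polynomial time; together with the polynomial guess this puts the problem in \NP.

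The main obstacle is avoiding an exponential blow-up: tracking the contributor-set $B$ in the control of $\Dd'$ would make it of exponential size, so it is essential to exploit Lemma~\ref{lem:boundedruns}, which guarantees that only polynomially many distinct sets occur along a well-chosen run and that they can therefore be guessed in advance. The accompanying technical point — that inside a block the contributors' entire effect on the register and on the leader is summarised by the fixed set $W_j$, reads being droppable and writes reducing to $W_j$ — is where the constancy of $B_j$ on each $\r_j$ is used, and is the part one must verify carefully.
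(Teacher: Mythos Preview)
Your proposal is correct and follows essentially the same approach as the paper: guess the bounded sequence $B_0,\dots,B_k$ of contributor sets supplied by Lemma~\ref{lem:boundedruns}, then reduce to reachability in a polynomial-size pushdown automaton whose control tracks the block index, the leader state, and the register value. The paper's proof is terser---it simply ``restricts'' the full set-semantics pushdown to runs whose first component follows the guessed sequence---whereas you additionally abstract away contributor actions inside each block via the sets $W_j$ and drop contributor reads; this extra detail is sound but not needed for the argument.
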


%%% long version
\begin{proof}
  The set semantics of the \CDsystem can be described by a pushdown automaton
  $\Aa$ with set of control states $2^\Qcf \times \Qd \times G$, input alphabet
  $\Sc \cup \Sd$, and stack alphabet $\Gd$.
  We guess a sequence $\{\qicf\} = B_0, B_1, \ldots, B_k = B$ where
  $k \le 2|\Qcf|$, and construct the restriction of the pushdown automaton $\Aa$
  to runs where the first component of the state takes values
  according to $B_0, B_1, \ldots, B_k$. 
  This new pushdown automaton is of polynomial size, and we can check whether
  it has a reachable configuration $(B,qA\a,g)$ in polynomial time~\cite{BEM97}.
\end{proof}

By Lemmas~\ref{l:inf} and \ref{lem:GlobalReachNP}, the max-safe
problem is thus in \NP{} and we obtain:

\begin{theorem}\label{th:safety-finite}
  The max-safe problem is \NP-complete when $\Cc$ ranges over
  finite-state systems and $\Dd$ ranges over   pushdown systems.
\end{theorem}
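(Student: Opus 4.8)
The plan is to assemble the pieces already in place. \NP-hardness is exactly Lemma~\ref{lem:NPhard}, so all the work goes into the \NP\ upper bound. A maximal safe run is either infinite or finite, and I would decide these two possibilities by two independent \NP\ procedures, accepting if either succeeds. The infinite case is handled by Lemma~\ref{l:inf}. For the finite case, Corollary~\ref{cor:set-blocking} lets me work in the set semantics: there is a finite maximal safe run iff the set semantics of $(\Cc,\Dd)$ has a $\top$-free run from $(\set{s_\init}, \qid\Aid, g_\init)$ to some deadlock configuration, i.e.\ a configuration with no outgoing transition at all (otherwise the run could be extended, so it would not be maximal).

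The crucial observation is that being a deadlock is a \emph{local} property: whether a set-semantics configuration $(B, q\alpha, g)$ of $(\Cc,\Dd)$ has an outgoing transition is determined by $B$, the control state $q$, the register value $g$, and the top stack symbol $A$ of $\alpha$ alone, since no transition rule of a \CDsystem\ looks below the top of the leader's stack and the contributor is finite-state. So the algorithm guesses $B \subseteq S$, $q \in \Qd$, $A \in \Gd$, and $g \in G$, verifies in polynomial time that $(B, qA, g)$ admits no outgoing transition in $(\Cc,\Dd)$ (including the $\top$-transitions of $\Dd$), and then checks that some configuration of the form $(B, qA\alpha, g)$ is reachable from $(\set{s_\init}, \qid\Aid, g_\init)$ by a $\top$-free run. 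Since the $\top$-free runs of $(\Cc,\Dd)$ are exactly the runs of the \CDsystem\ $(\Cc, \Dd')$, where $\Dd'$ is $\Dd$ with all $\top$-transitions removed, this last check is precisely the query of Lemma~\ref{lem:GlobalReachNP} applied to $(\Cc, \Dd')$; and that lemma returns a configuration whose stack top is exactly the guessed $A$, which is just the data needed to evaluate the deadlock test. Running both procedures gives membership in \NP, which together with Lemma~\ref{lem:NPhard} proves the theorem.

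I expect the only point requiring care to be the reduction of ``finite maximal safe run'' to a bounded guess plus one reachability query: one must check that the deadlock condition is genuinely local (it is, for the reason above) and that the passage from $\Dd$ to $\Dd'$ is harmless, which it is since deleting leader transitions only deletes runs and leaves the $\top$-free runs untouched. Everything else is a direct invocation of Corollary~\ref{cor:set-blocking} and Lemmas~\ref{l:inf} and~\ref{lem:GlobalReachNP}.
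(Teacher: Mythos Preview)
Your proposal is correct and follows essentially the same route as the paper: hardness from Lemma~\ref{lem:NPhard}, the infinite case from Lemma~\ref{l:inf}, and the finite case by guessing $(B,q,A,g)$, checking locally that this top corresponds to a genuine deadlock (no outgoing transition at all, including $\top$), and invoking Lemma~\ref{lem:GlobalReachNP} on the system with $\top$-transitions removed. The paper's argument is the same, with the surrounding text making explicit the locality of deadlocks and the passage to $\top$-free reachability that you also spell out.
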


\subsection{Pushdown contributors}

We now return to the case where
both $\Cc$ and $\Dd$ are given by pushdown systems.

\begin{lemma}\label{l:safety-hardness}
 The max-safe problem is \NEXPTIME-hard when $\Cc$ and $\Dd$ range over
  pushdown systems.
\end{lemma}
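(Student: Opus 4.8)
The plan is to reduce from the acceptance problem for a nondeterministic Turing machine $M$ running in time $2^n$ on an input of length $n$, which is \NEXPTIME-complete. An accepting run of $M$ is a sequence of at most $2^n$ configurations, each of length at most $2^n$, so a single tape cell is addressed by a pair $(i,j)$ of $n$-bit numbers (step index, cell index). The key obstacle, exactly as in the classical lower-bound constructions for this model, is that the \CDsystem{} has no way to \emph{force} synchronization: contributors cannot be forced to read every value the leader writes, and different contributors may interleave arbitrarily. So I would spend the bulk of the argument showing that, using the max-safe condition, one can force the leader to write a correct encoding of a run of $M$ in which the first $2^{2n}$ symbols written are read in the right order and none is skipped. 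The teaser in the introduction ("we can ensure that the first $2^n$ values written into the shared register are read in the correct order and none of them is skipped") is the essential gadget; here we need $2^{2n}$ time-cell slots, but that is still only an $n$-bit counter construction, hence a polynomial-size pushdown system.

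Concretely, I would let the leader $\Dd$ use its stack to maintain an $n$-bit "expected timestamp" counter and an $n$-bit "expected cell" counter, and write into the register a stream of tagged symbols of the form $(\mathit{step}\text{-bits},\mathit{cell}\text{-bits},\mathit{tape\ symbol},\mathit{state\ info})$, one per tape cell of the run, in lexicographic order of $(i,j)$. The leader nondeterministically guesses the run of $M$. The role of the contributors is to \emph{verify} two things that the leader cannot check with a bounded stack: (a) the bit-patterns of successive timestamps/cell-indices are genuine binary increments (no value skipped, no value repeated), which a single contributor can check by reading two consecutive writes and comparing their bit strings; and (b) local consistency of $M$'s transition function between cell $(i,j)$ and the three cells $(i+1,j-1),(i+1,j),(i+1,j+1)$ — a contributor nondeterministically picks a position, remembers the relevant $O(1)$ symbols and the $2n$ address bits, and waits $2^n$ writes later (counting with its own pushdown stack) to read the successor cell and check compatibility. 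If a contributor ever detects a violation, it writes a special alarm value into the register; the leader, upon reading the alarm value (or upon reaching an "accept-skipped" branch, or upon any malformed write), is forced into a branch that performs $\top$. Conversely, if $M$ accepts, the leader can write a correct encoding ending in an accepting configuration, no contributor ever raises an alarm, and the leader can terminate (or loop) without ever doing $\top$; so there is a maximal safe run. The delicate point is ensuring the leader cannot "cheat past" the contributors: this is handled by the standard trick that a contributor can at any write nondeterministically decide to be the checker for that position, so if \emph{any} position is faulty \emph{some} contributor catches it — and because the contributor population is unbounded, we may assume a distinct checker is present for every position and every pair of adjacent writes.

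The step I expect to be the main obstacle is making the "nothing is skipped, order is respected" guarantee airtight using only non-synchronizing reads/writes: a naive encoding lets the leader write a value that no contributor reads, so the timestamp chain could have gaps. The fix is to have the contributors collectively certify the whole chain: one contributor per consecutive pair $(k,k{+}1)$ of writes reads write $k$, then reads write $k{+}1$, and checks that the $2n$ address bits of write $k{+}1$ are exactly the binary successor of those of write $k$ (with carry handled bit by bit using the contributor's finite control, and with the contributor's pushdown stack unnecessary for this sub-check but available for the $2^n$-delayed consistency check). Since the leader cannot prevent a contributor from reading a written value immediately after it is written, and since we may assume such a checker contributor exists for every $k$, any gap or out-of-order write is detected and forces $\top$. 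Once this is in place, the equivalence "$M$ accepts $\iff$ the \CDsystem{} has a maximal safe run" follows, and since the constructed pushdown systems $\Aa_C,\Aa_D$ have size polynomial in $n$ and in $|M|$, we get \NEXPTIME-hardness; combined with the matching upper bound (from Theorem~\ref{th:transP} plus the \NP{} bound of Theorem~\ref{th:safety-finite} lifted through the reduction to finite-state contributors), this yields Theorem~\ref{th:safety}.
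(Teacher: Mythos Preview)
Your reduction has the right shape, but the backward direction --- ``$M$ does not accept $\Rightarrow$ every maximal run contains $\top$'' --- does not go through. The phrase ``we may assume a distinct checker is present for every position'' is exactly where it breaks: that assumption is legitimate when you \emph{construct} a witness run (the forward direction), but for the backward direction you must rule out \emph{every} maximal safe run, including ones in which the contributors fail to cover the faulty position. Concretely, take the run with zero contributors. The leader writes addresses $0,\dots,2^{2n}-1$ correctly using her own stack counter, fills in tape contents that start with the initial configuration and end in an accepting one but contain an illegal transition somewhere in the middle, and reaches her final state. No alarm is ever written, the only $\top$-transitions you describe fire upon reading an alarm or upon the leader's own local checks, and the run is maximal and $\top$-free --- yet $M$ rejects. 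The same happens with a single contributor that checks one consistent pair and halts.

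The paper closes exactly this gap with a handshake that makes the leader depend on contributors at every step: after writing a tile $A$, the leader \emph{blocks} until she reads an acknowledgment $\bar A$ written by some contributor, and $\top$ is enabled from every leader state except the final one. Hence in any maximal safe run the leader must reach her final state, which forces each of the $2^{2n}$ tiles to be acknowledged by \emph{some} contributor, who thereby becomes responsible for the vertical-compatibility check at that position. To guarantee that contributor compares against the correct later tile, it also counts the entire stream on its stack and, upon reading the terminal marker $\diamond$, writes an error value $\#$ if its count is off or the compatibility check failed; from her final state the leader may read any $g\neq\diamond$ and then perform $\top$, so in a maximal safe run the terminal register value must be $\diamond$ and every committed contributor must have read every tile and passed its check. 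A secondary issue with your encoding: if each register value carries the full $2n$-bit address then $|G|$ is exponential in $n$ and the reduction is not polynomial; the paper keeps the position counters on the leader's stack and writes only tiles plus a constant number of markers to the register.
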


\begin{proof}
  We reduce the following tiling problem to the
  max-safe problem:

\medskip

\noindent
  \textbf{Input:} A finite set of tiles $\Sigma$, 
  horizontal and vertical compatibility relations
  $H,V\subseteq\Sigma^2$, and initial row $x \in \S^n$.

\noindent
  \textbf{Question:} is there a tiling of the $2^n \times 2^n$ square
  respecting the compatibility relations and containing the initial
  row in the left corner?

 A tiling is a function
  $t : \set{1,\ldots, 2^n}^2 \to \S$ such that $(t(i,j),t(i,j+1)) \in H$
  and $(t(i,j),t(i+1,j)) \in V$ for all $i,j$, 
  and $t(1,1)t(1,2) \cdots t(1,n) = x$.

  \smallskip

  The idea of the reduction is that the system will have a maximal 
  run without $\top$ if and only if the leader guesses a tiling respecting the horizontal 
  compatibility, and the contributors check that the vertical 
  compatibility is respected as well.
  
  The leader will write down the tiling from left to right and from
  top to bottom, starting with the initial row. 
  The sequence of values taken by the register on a (good) run will have the form
  \begin{equation*}
    A_{1,1}, \overline{A_{1,1}}, A_{1,2}, \overline{A_{1,2}}, \ldots,
    A_{1,2^n}, \overline{A_{1,2^n}},\ldots, A_{2^n, 2^n} \,
    \overline{A_{2^n, 2^n}}\; (\$ \overline{\$})^{2^n}\diamond\, .
  \end{equation*}
  The $A_{i,j}$ are guessed and written by the leader, and the
  $\overline{A_{i,j}}$ are written by contributors. Letters
  $\overline{A_{i,j}}$ have two purposes: they ensure that at least
  one contributor has read the preceding letter, and prevent a
  contributor to read the same letter twice.
  For technical reasons, this sequence is followed by a sequence 
  $(\$ \overline{\$})^{2^n}\diamond$ of writes from the leader (with
  $\$,\diamond \notin \Sigma$), 
  and we will consider that $(A,\$) \in V$ for all $A \in \Sigma$.
  
   The leader uses her stack to count the number $i$ of rows
  (using the lower part of the stack), and the number $j$ of tiles on each row
  (using the upper part of the stack). So, she repeats the following, up to
  reaching the values $i = 2^n, j = 2^n$:
  \begin{itemize}
  \item guess a tile $A$ compatible with the one on its left (if $j \neq 1$),
    and write $A$ on the register,
  \item wait for an acknowledgment $\overline{A}$ from one of the contributors,
  \item increment $j$,
  \item if $j >2^n$, increment $i$ and set $j=1$.
  \end{itemize}
  Finally, she repeats $2^n$ times the actions $\dw(\$)$,
  $\dw(\overline{\$})$, then finishes by writing  $w(\diamond)$ and
  going to state $q_f$. 

  Each contributor is supposed to read the entire sequence of values written in the register.
  He alternates between reading values of the form $A$ and $\overline{A}$, 
  which ensures that no value is read more than one
  time. At the same time, he uses his stack to count the number of
  writes 
  $\dw(A)$ ($A \in \Sigma \cup \{\$\}$) of the leader, up to 
  $(2^{2n}+2^n)$, so that he can check that no value was missed. 
  This operation will in fact be divided between counting up to
  $2^{2n}$, and counting up to $2^{n}$, as described below.

  Every contributor decides non-deterministically to 
  check vertical compatibility at some point. He chooses the current
  tile $A \not=\$$, and
  needs to check
  that the tile located below it (that is, occurring $2^n$ tiles later in the
  sequence of values written by the leader) is compatible with it.
  This is done as follows: after reading $A\not=\$$, the contributor writes
  $\overline{A}$ on  
  the register (rather than waiting for another contributor to do so), and 
  remembers the value. 
  He interrupts his current counting, and starts counting anew on the top
  of the stack, up to $2^n$. Upon reaching $2^n$, he stores the value
  $A'$ of the register, for later check. Then he resumes the first counting while reading the remaining of the 
  sequence, up to $2^{2n}$. At any moment, the contributor can read
  $\diamond$. If he reads $\diamond$ and either $(A,A') \notin V$ or the counting up to $2^n$ failed (i.e.,
 his stack is not empty), then he writes
  $\# \notin G$ and stops; otherwise he simply stops. 
 In state $q_f$, the leader may read any value $g \neq \diamond$,
 and she then do $\top$: $q_f \act{\Dr(g)} \act{\top}$.
 From every other state $q \not= q_f$, the leader can do $\top$, too.

  If there is a tiling of the $2^n \times 2^n$ square, then we obtain
  a maximal run with  $2^{2n}$ 
  contributors and  without any occurrence of $\top$, by letting the
  leader write the sequence 
  of register values corresponding to this tiling, and having each
  contributor  perform one of the $2^{2n}$ vertical compatibility
  checks. If each contributor reads every value produced by the
  leader, his stack will be empty upon reading $\diamond$, so he simply
  stops and no $\top$ will be generated.

  Conversely, we show that in any maximal run without $\top$, the sequence of tiles 
  guessed by the leader defines a correct tiling of the $2^n \times 2^n$ square.
  First, in any such run the leader needs to reach state $q_f$: 
  if she gets no acknowledgment on some $A \in \S$ then she would do
  $\top$, which is impossible by assumption. So she guesses 
  a sequence $A_{1,1}, \ldots, A_{2^n,2^n}$ with 
  $(A_{i,j},A_{i,j+1}) \in H$ for all $j < 2^n$, gets an acknowledgment 
  $\bar A_{i,j}$ for each $A_{i,j}$, and finally writes $(\$
  \overline{\$})^{2^n}\diamond$. Moreover, $\diamond$ is the final value of the
  register: if it were overwritten by a contributor, the leader could generate
  $\top$. So every contributor that has chosen some $A_{i,j}
  \in\S$ will ultimately read $\diamond$. Since he cannot do $\Cw(\#)$, his
  stack must be empty at that point and he must have successfully checked that
  $(A_{i,j},A_{i+1,j}) \in V$. 
\end{proof}

As with finite-state contributors, to solve the max-safe problem, we look
separately for an infinite safe run, or a finite maximal safe run.
The case of infinite runs can again be reduced to the repeated reachability
problem, using Theorem~\ref{th:transP}.

\begin{lemma}
  When $\Cc$ and $\Dd$ range over  pushdown systems, deciding whether 
  a \CDsystem has an infinite safe run is in \PSPACE.
\end{lemma}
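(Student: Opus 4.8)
The plan is to follow the strategy used for finite-state contributors in Lemma~\ref{l:inf}: reduce the existence of an infinite safe run to finitely many instances of the repeated reachability problem, which is in \PSPACE\ for pushdown \CDsystems by Theorem~\ref{thm:liveness1}.

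First I would eliminate the action $\top$. Let $\Dd_0$ be the pushdown system obtained from $\Dd$ by deleting every transition labelled $\top$; clearly $\Dd_0$ is no larger than $\Dd$. Since an infinite run is automatically maximal, and since a run avoids $\top$ precisely when it uses no $\top$-transition, the \CDsystem has an infinite safe run if and only if the $(\Cc,\Dd_0)$-system has an infinite run. It thus suffices to decide in \PSPACE\ whether the $(\Cc,\Dd_0)$-system has an infinite run.

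Next, apply Theorem~\ref{th:transP} to the $(\Cc,\Dd_0)$-system, obtaining a \CDtsystem of size linear in that of the $(\Cc,\Dd_0)$-system which, by item~3, has an infinite run if and only if the $(\Cc,\Dd_0)$-system does, and in which, by item~2, every infinite run contains infinitely many write operations of the leader $\tDd$. As the register alphabet is finite and of polynomial size, the pigeonhole principle shows that any infinite run of the \CDtsystem contains infinitely many occurrences of $\Dw(g)$ for some single register value $g$. Hence the \CDtsystem has an infinite run if and only if, for some register value $g$, it has a trace with infinitely many occurrences of $\Dw(g)$; and for each fixed $g$ this is precisely an instance of the repeated reachability problem, with $\Dw(g)$ as the distinguished leader action. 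By Theorem~\ref{thm:liveness1} each such instance is decidable in \PSPACE, there are only polynomially many values $g$ to try, and \PSPACE\ is closed under polynomially many disjunctions, so the whole test runs in \PSPACE. The argument contains no genuine difficulty: the two substantial ingredients, Theorem~\ref{th:transP} and Theorem~\ref{thm:liveness1}, are already available, and the only points requiring (routine) care are the equivalence with the $\top$-pruned system and the pigeonhole step isolating a single write value.
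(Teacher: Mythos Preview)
Your proof is correct and follows exactly the approach the paper intends: the paper states this lemma without a detailed proof, merely noting that ``the case of infinite runs can again be reduced to the repeated reachability problem, using Theorem~\ref{th:transP}'', which is precisely your argument (and mirrors the proof of Lemma~\ref{l:inf}). Your explicit removal of the $\top$-transitions before applying Theorem~\ref{th:transP} is a clean way to handle the safety constraint that the paper leaves implicit.
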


To decide the existence of a \emph{finite} maximal safe run, we reduce
the problem to the case of finite-state contributors,
using Lemma~\ref{lem:Ccf}.

\begin{lemma}\label{lem:NEXPup}
  When $\Cc$ and $\Dd$ range over  pushdown systems, deciding whether 
  a \CDsystem has a finite maximal safe run is in \NEXPTIME.
\end{lemma}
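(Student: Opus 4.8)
The goal is to decide in \NEXPTIME{} whether a pushdown \CDsystem{} has a finite maximal safe run. The key tool is Lemma~\ref{lem:Ccf}, which says that a configuration $([p_1\a_1,\ldots,p_n\a_n],t,g)$ of the $(\Ccf,\Dd)$-system is reachable iff the full \CDsystem{} reaches $([p_1\a_1\b_1,\ldots,p_n\a_n\b_n],t,g)$ for some stack suffixes $\b_i$. So the plan is to reduce the pushdown-contributor case to the finite-state contributor case (Theorem~\ref{th:safety-finite}, which gives an \NP{} algorithm), paying an exponential blow-up because $\Ccf$ has exponentially many states in $|\Gc|$, but no more.

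\textbf{Main steps.} First, I would observe that a finite maximal safe run of the \CDsystem{} ends in a configuration $(M,t,g)$ with no outgoing transition; by Corollary~\ref{cor:set-blocking} we may pass to the set semantics, so the run ends in $(B,t,g)$ with $B$ the support of $M$, still deadlocked. Whether $(B,t,g)$ is a deadlock depends only on $B$, on the control state $q$ and top stack symbol $A$ of $t$, and on $g$: the leader has no move from $qA$ with register $g$, and no state $s\in B$ has an enabled move (a write is always enabled, so ``no contributor write'' must be checked, and ``no contributor read of $g$''). Crucially, in $\Ccf$ a state is a pair $p\a$ with $|\a|\le N = 2|\Qc|^2|\Gc|+2$, so ``$s$ has no enabled move'' is decidable from $\Ccf$'s transition relation, which depends on the top symbol of $\a$. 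The subtlety is that in the real \CDsystem{} a contributor sits in a configuration $p\a\b$ with extra stack $\b$ below; but enabledness of $p\a\b$ depends only on $p$ and the top of $\a\b$, which equals the top of $\a$ when $\a\ne\e$. When $\a=\e$ (i.e.\ $s$ is an $\Ccf$-state of ``empty effective stack''), I would be slightly more careful: Lemma~\ref{lem:Ccf} was stated with $N$ chosen so that effective stack-height can be taken at least $1$, so in fact every relevant contributor has a genuine top-of-stack symbol recorded in its $\Ccf$-state; deadlockedness transfers correctly. This is the point that needs the most care — matching the deadlock condition across the two semantics and across the stack-truncation of Lemma~\ref{lem:Ccf}.

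\textbf{Assembling the algorithm.} With $\Cc$ replaced by the finite-state system $\Ccf$ (exponential in the input, but computable, and each state written in polynomial space), the question becomes: does the $(\Ccf,\Dd)$-system, which is a \CDsystem{} with finite-state contributors and pushdown leader, have a finite maximal safe run, i.e.\ a safe run reaching a deadlocked $(B,qA\a,g)$? By Lemma~\ref{lem:Ccf} reachability of $(B,qA,g)$ in $(\Ccf,\Dd)$ matches reachability (up to stack suffixes) in the original system, and by the earlier analysis deadlockedness matches as well. Now apply Theorem~\ref{th:safety-finite}: the max-safe problem for a finite-state-contributor, pushdown-leader \CDsystem{} is in \NP. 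Running that \NP{} procedure on the exponential-size instance $(\Ccf,\Dd)$ gives an \NEXPTIME{} procedure overall — the nondeterministic guesses of Lemma~\ref{lem:GlobalReachNP} (a sequence $B_0,\ldots,B_k$ of subsets of $\Qcf$ with $k\le 2|\Qcf|$, hence exponentially many subsets of an exponential-size state set) are of exponential size, and the pushdown reachability check of~\cite{BEM97} runs in time polynomial in that exponential-size automaton, hence in \EXPTIME. So the whole thing is in \NEXPTIME. Combined with the infinite-safe-run case (the preceding lemma, in \PSPACE) and the \NEXPTIME-hardness of Lemma~\ref{l:safety-hardness}, this establishes Theorem~\ref{th:safety}.
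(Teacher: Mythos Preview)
Your approach is the paper's approach: reduce pushdown contributors to the finite-state system $\Ccf$ via Lemma~\ref{lem:Ccf}, then invoke the \NP{} machinery for finite-state contributors on an exponential-size instance. That is correct in outline.

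There is one genuine gap, precisely at the point you flag as needing care. Your resolution of the $\a=\e$ case is not right: the choice of $N$ in Lemma~\ref{lem:Ccf} has nothing to do with forcing effective stack-height $\ge 1$. In fact $\Ccf$ \emph{does} have reachable empty-stack states (obtained by ``forgetting'' the entire stack via the $\a''$ in its transition rule), and those states are deadlocks in $\Ccf$ even when the corresponding $p\b$ in the original pushdown is not. So the equivalence ``$(\Cc,\Dd)$ has a finite maximal safe run iff $(\Ccf,\Dd)$ has one'' can fail in the $\Leftarrow$ direction, and applying Theorem~\ref{th:safety-finite} to $(\Ccf,\Dd)$ as a black box may yield false positives. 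The paper avoids this by a small reorientation: it guesses a deadlock \emph{top} $(\{p_1A_1,\ldots,p_kA_k\},qA,g)$ of the \emph{original} system (so each contributor has a genuine top stack symbol), checks that this top is a deadlock there, and only then uses Lemma~\ref{lem:Ccf} for the reachability part, invoking Lemma~\ref{lem:GlobalReachNP} on $(\Ccf,\Dd)$. In other words, Lemma~\ref{lem:Ccf} is used for reachability, not for deadlock detection. A related minor point: you should cite Lemma~\ref{lem:GlobalReachNP} rather than Theorem~\ref{th:safety-finite}, since the latter also covers infinite safe runs, which you are (rightly) handling separately via the preceding lemma.
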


\begin{proof}
We define the \emph{top} of a configuration of the \CDsystem as follows:
\begin{equation*}
  \ttop(\set{p_1A_1\a_1,\ldots,
    p_nA_n\a_n},qA\a,g)=(\set{p_1A_1,\ldots, p_nA_n},qA,g) 
\end{equation*}
Observe that to determine if a configuration in the set semantics is a
deadlock or not it suffices to look at its top. Moreover, by 
Lemma~\ref{lemma:multi-set}, deadlocks occur in the multiset semantics iff 
they occur in the set semantics.

The algorithm to decide the existence of a maximal safe run is as follows:
guess a configuration top that corresponds to deadlocks in the \CDsystem,
and check if it is reachable after removing all $\top$-transitions from
the \CDsystem.
By Lemma~\ref{lem:Ccf}, this amounts to deciding if it is reachable in
the $(\Ccf,\Dd)$-system. Applying Lemma~\ref{lem:GlobalReachNP} to the
$(\Ccf,\Dd)$-system, which is of exponential size, this can be done in
\NEXPTIME.
\end{proof}

\begin{theorem}\label{th:safety-pushdown}
  The max-safe problem is
  \NEXPTIME-complete when $\Cc$ and $\Dd$ range over pushdown
  systems.
\end{theorem}

\subsection{Universal reachability}

Recall that the \emph{universal reachability} problem asks if all maximal runs of
a given \CDsystem, so for every number of contributors, contain some occurrence of a special action
$\top$.
 Correctness problems for parametrized distributed algorithms
can be rephrased as instances of universal reachability: we want to
know whether for an arbitrary number of participants, and for every
run of the algorithm, the outcome is correct.  Correctness of the
outcome is expressed here by the leader executing the action $\top$.

\begin{remark}
  A natural variant of the universal reachability problem would be the
  following: is there some bound $N$ such that for all $n \ge N$, all
  maximal runs with $n$ contributors contain an occurrence of $\top$?
  A bit surprisingly, this formulation is equivalent to the universal
  reachability problem: if there were some maximal run with $n<N$
  contributors without $\top$, then we could add arbitrary many
  contributors doing the same actions as one original contributor,
  thus obtaining a maximal run with $N$ contributors  and without
  $\top$, contradiction.
\end{remark}

Since the max-safe problem is the complement of universal reachability, we obtain from Theorems~\ref{th:safety-finite} and \ref{th:safety-pushdown}:

\begin{corollary}
  The universal reachability problem is \coNP-complete for \CDsystems
  where $\Cc$ is finite-state and $\Dd$ is a pushdown system. It is
  \coNEXPTIME-complete when both $\Cc$ and $\Dd$ are pushdown systems. 
\end{corollary}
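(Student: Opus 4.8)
\emph{Proof plan for the Corollary.}
The plan is to derive this directly from the two max-safe theorems, since the universal reachability problem is by definition the complement of the max-safe problem: a \CDsystem{} satisfies ``every maximal trace contains $\top$'' exactly when it does \emph{not} have a maximal trace avoiding $\top$. Thus an algorithm for one problem, run as a subroutine, decides the complement of the other, and the complexity classes transfer under complementation.

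For the finite-state contributor case, Theorem~\ref{th:safety-finite} states that the max-safe problem is \NP-complete when $\Cc$ is finite-state and $\Dd$ is pushdown. Taking complements, universal reachability is \coNP-complete in that setting: membership in \coNP\ is immediate (guess a maximal safe run and verify), and \coNP-hardness follows from the \NP-hardness of max-safe established in Lemma~\ref{lem:NPhard} (the reduction there uses only finite-state $\Cc$ and $\Dd$, so it applies a fortiori when $\Dd$ is allowed to be pushdown). Similarly, for the pushdown contributor case, Theorem~\ref{th:safety-pushdown} gives \NEXPTIME-completeness of max-safe when both $\Cc$ and $\Dd$ are pushdown; complementing yields \coNEXPTIME-completeness of universal reachability, with the upper bound from Lemma~\ref{lem:NEXPup} together with Lemma~\ref{l:inf} (or its pushdown analogue), and the lower bound from the \NEXPTIME-hardness reduction of Lemma~\ref{l:safety-hardness}.

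There is essentially no obstacle here: the only point requiring a word of care is that ``max-safe'' and ``universal reachability'' are genuinely complementary as decision problems on the same input, i.e.\ that every initial run extends to a maximal one (so that the two quantifier alternations line up cleanly) --- but this was already arranged in Section~\ref{sec:prelim}, where maximal runs are defined and every infinite initial run is noted to be maximal. Hence the corollary is just the contrapositive reading of Theorems~\ref{th:safety-finite} and~\ref{th:safety-pushdown}, and no new technical content is needed.
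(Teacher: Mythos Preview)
Your proposal is correct and matches the paper's approach exactly: the paper derives the corollary in one line from Theorems~\ref{th:safety-finite} and~\ref{th:safety-pushdown} by noting that universal reachability is, by definition, the complement of the max-safe problem. Your additional remarks (on the source of the lower bounds and on maximality) are accurate but not needed---the complementation is immediate from the definitions.
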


%%% Local Variables:
%%% mode: latex
%%% TeX-master: "m"
%%% End:

\section{Regular $\Cc$-expanding properties}
\label{sec:general}

In this section, we prove our general result stated in
Theorem~\ref{th:main}. 

%% The proof is based on Theorem~\ref{th:transP},
%% that says that we can focus on properties that refer only to leader
%% actions. The idea is that in the new \CDtsystem, the register of the
%% \CDsystem becomes part of the leader state.  This releases the
%% register of the \CDtsystem to be used to communicate about contributor
%% actions.  Contributors in the \CDtsystem write into the register the
%% command they want to perform; the leader executes the command and
%% confirms it by writing back into the register.  The confirmation is
%% read by contributors who at this point know that their request has
%% been read and executed. The simulation makes use of the fact that the
%% given property is $\Cc$-expanding. The details of the construction can
%% be found in the long version of the paper.

The proof of Theorem~\ref{th:main} is (again) divided into two cases: 
one for finite and the other for infinite traces.
For finite maximal traces we use the results about the max-safe problem,
and for infinite traces we give a reduction to repeated reachability.

 \begin{lemma}\label{l:property-finite}
    It is \NEXPTIME-complete to decide whether a given pushdown
    \CDsystem has a \emph{finite} maximal trace satisfying some 
    $\Cc$-expanding property $\Pp$
    given by a finite automaton or an LTL formula.
  \end{lemma}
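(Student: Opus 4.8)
The statement concerns finite maximal traces satisfying a $\Cc$-expanding regular property $\Pp$, given either by a B\"uchi automaton $\Aa=\struct{Q,\Sc\cup\Sd,\D,q_0,F,R}$ (of which we only use $F$ here, since the trace is finite) or an LTL formula. My plan is to reduce this problem to the max-safe problem for a suitably modified \CDsystem, and then invoke Theorem~\ref{th:safety-pushdown} (and, for the lower bound, Lemma~\ref{l:safety-hardness}). The first move is to handle the LTL case by translating the formula into a nondeterministic finite automaton; this blow-up is at most exponential, which is harmless for a \NEXPTIME{} upper bound, so from now on I assume $\Pp$ is given by a finite automaton $\Aa$ recognising exactly the finite words in $\Pp$.

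The core idea is to take the product of $\Aa$ with the \CDsystem. Concretely, I would build a new \CDtsystem whose leader $\tDd$ carries, in its finite control, a copy of the state of $\Aa$, and which updates this state on \emph{every} action of the system — both leader actions and contributor actions. The subtlety is that $\Aa$'s transitions depend on contributor actions $a\in\Sc$, which in the \CDsystem are fired by contributors, not the leader. To synchronise them, I would use the reduction of Theorem~\ref{th:transP}: after that reduction the leader controls the register and contributors merely submit read/write requests that the leader processes, so the leader ``sees'' every contributor action and can advance the automaton $\Aa$ accordingly. Because $\Pp$ is $\Cc$-expanding, the replication of contributor actions forced by parametrisation does not change membership in $\Pp$ — this is precisely why the hypothesis is needed, and it is exactly the content of item~1 of Theorem~\ref{th:transP}, which says the original \CDsystem has a finite maximal trace in $\Pp$ iff the \CDtsystem has a finite maximal trace whose $\tSd$-projection lies in $\tPp$, with $\tPp$ again regular of linear size. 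So I would first apply Theorem~\ref{th:transP} to reduce to a property $\tPp$ over leader actions only, then take the product of $\tDd$ with the automaton for $\tPp$.

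Having pushed $\Aa$ into the leader's control, a finite maximal trace of the product system satisfying $\Pp$ corresponds to a finite maximal run of the product whose leader component ends in a final state of $\Aa$. To turn this into an instance of max-safe, I would add to the product leader, from every state whose $\Aa$-component is \emph{not} in $F$, and from every non-maximal situation, a fresh action $\top$ looping in place; and from every state whose $\Aa$-component \emph{is} in $F$, I make sure the configuration can be a genuine deadlock exactly when the original run was maximal. Then the product \CDsystem has a maximal safe run (one avoiding $\top$) iff the original \CDsystem has a finite maximal trace in $\Pp$. One must be slightly careful that ``maximality'' is preserved: a configuration of the product is a deadlock iff the underlying \CDtsystem-configuration is, provided we do not leave the $\top$-loop available — the same bookkeeping used in the proof of Lemma~\ref{l:safety-hardness} and Lemma~\ref{lem:NEXPup} applies. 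Since the product \CDsystem is pushdown of size polynomial in $\Aa_C$, $\Aa_D$ and $\Aa$ (hence polynomial in the input when $\Pp$ is a B\"uchi automaton, and singly exponential when $\Pp$ is LTL), Theorem~\ref{th:safety-pushdown} gives the \NEXPTIME{} upper bound in both cases.

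For the matching lower bound, the \NEXPTIME-hardness of max-safe from Lemma~\ref{l:safety-hardness} already uses a fixed action $\top$ of the leader, and ``max-safe'' is visibly the special case of the present problem where $\Pp$ is the (trivially $\Cc$-expanding, in fact stutter-invariant) regular property ``the trace does not contain $\top$''; this $\Pp$ is describable both by a tiny automaton and by a tiny LTL formula, so \NEXPTIME-hardness transfers immediately. \textbf{The main obstacle} I anticipate is the deadlock/maximality bookkeeping in the product construction: I need the $\Aa$-component to be in $F$ \emph{and} the system to be genuinely stuck, simultaneously, without the added $\top$-gadget itself destroying maximality or creating spurious maximal runs — this is exactly the kind of care that Corollary~\ref{cor:set-blocking} and the $\ttop(\cdot)$ argument of Lemma~\ref{lem:NEXPup} were set up to handle, so I would lean on those, but getting the interaction with the $\Cc$-expanding reduction of Theorem~\ref{th:transP} exactly right (in particular that the intermediate request/acknowledgement states introduced there do not count as acceptable ``stuck'' configurations) is where the real work lies.
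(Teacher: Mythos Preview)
Your overall architecture---apply Theorem~\ref{th:transP} to replace $\Pp$ by a property $\tPp$ over leader actions only, then take the product of the leader with a finite automaton for $\tPp$---is exactly what the paper does. The divergence, and the gap, is in the last step.

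You propose to finish by adding a fresh loop $\top$ at every product state whose automaton component is not in $F$, and then invoking the full max-safe theorem (Theorem~\ref{th:safety-pushdown}) as a black box. This does not give the claimed equivalence. The max-safe problem asks for \emph{any} maximal trace avoiding $\top$, finite or infinite; but your added $\top$-loops are never forced, so every infinite run of the original system survives as an infinite maximal safe trace of the product, regardless of $\Pp$. Hence ``product has a maximal safe trace'' becomes true as soon as the original system has \emph{some} infinite run, which tells you nothing about the existence of a \emph{finite} maximal trace in $\Pp$. Your phrase ``from every non-maximal situation'' does not repair this: maximality is a property of runs, not of configurations, and there is no local gadget that kills exactly the infinite safe runs. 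The paper avoids this by not reducing to max-safe at all; after forming the product it goes straight to the ingredients of the \emph{finite} half of max-safe: guess a deadlock top with the leader in an $F$-state, and check reachability in the $(\Ccf,\Dd)$-system via Lemma~\ref{lem:Ccf} and Lemma~\ref{lem:GlobalReachNP}. That is precisely the content of Lemma~\ref{lem:NEXPup}, which you do cite in passing---so the fix is simply to invoke that lemma (or its two constituents) instead of Theorem~\ref{th:safety-pushdown}.

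A symmetric remark applies to your lower bound. You say max-safe is ``visibly the special case'' where $\Pp=\text{``no }\top\text{''}$, but the present lemma is about \emph{finite} maximal traces, whereas a max-safe witness may be infinite. The paper closes this gap by observing that the hard instances built in Lemma~\ref{l:safety-hardness} have no infinite safe traces, so for those instances max-safe coincides with the finite-maximal-trace question. You need that observation too.
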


  \begin{proof} By Theorem~\ref{th:transP} we can assume that we
deal with a property $\Pp_D$ referring only to actions of $\Dd$. 
If $\Pp_D$ is given by an LTL formula, we start by constructing an equivalent finite
automaton of exponential size.
 By taking the product of $\Dd$ with this automaton, we can
assume that $\Dd$ has a distinguished set of final (control) states
such that a finite run of the \CDsystem satisfies $\Pp_D$ iff $\Dd$ ends
in a final state.

The result then follows using Lemma~\ref{lem:Ccf}, together
with Lemma~\ref{lem:GlobalReachNP}. 
Recall that in order to decide if a finite run is maximal it is enough
to look at the top of its last configuration. 
Lemma~\ref{lem:Ccf} then tell us that there exists a maximal finite run in the
\CDsystem with $\Dd$ ending in a final state iff there exists such a
run in the $(\Ccf,\Dd)$-system; and by Lemma~\ref{lem:GlobalReachNP} 
this can be decided in \NP{} in the
size of $(\Ccf,\Dd)$, so overall in \NEXPTIME. 
The matching \NEXPTIME-hardness lower bound follows from
the proof of Lemma~\ref{l:safety-hardness}, 
as the \CDsystem constructed there has no 
infinite safe trace, and the max-safe problem restricted to finite traces is a 
special case of our problem.
  \end{proof}

  The case of infinite runs turns out to be easier complexity-wise:
%%long version
% ,
 \PSPACE{} if the property is given by an automaton, and \EXPTIME{} if it
  is given by an LTL formula.

  \begin{lemma}\label{l:property-infinite}
    It is \PSPACE-complete to decide whether a given pushdown
    \CDsystem has an \emph{infinite} maximal trace satisfying a
    $\Cc$-expanding property $\Pp$ given by a B\"uchi
    automaton.
  \end{lemma}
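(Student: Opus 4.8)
The plan is to reduce the problem to repeated reachability and then invoke Theorem~\ref{thm:liveness}. Recall first that every infinite initial run is maximal, so an infinite maximal trace is exactly the trace of an infinite initial run. I would begin by applying Theorem~\ref{th:transP} to pass from the given pushdown \CDsystem and the $\Cc$-expanding property $\Pp$ to the associated pushdown \CDtsystem and a property $\tPp \subseteq (\tSd)^\infty$, given by a B\"uchi automaton $\Bb$ of size polynomial in the input. By item~1 of Theorem~\ref{th:transP}, the \CDsystem has an infinite maximal trace in $\Pp$ iff the \CDtsystem has an infinite (initial) run whose projection on $\tSd$ lies in $\tPp$; and by item~2 that projection is automatically an infinite word, so we are genuinely asking for an infinite $\tSd$-word accepted by $\Bb$ in the B\"uchi sense.

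Next I would internalise $\Bb$ into the leader by forming the synchronised product $\tDd' = \tDd \times \Bb$: each $\tSd$-labelled transition of $\tDd$ advances $\Bb$ accordingly, while the $\e$-transitions of $\tDd$ leave the $\Bb$-component untouched. This $\tDd'$ is again a pushdown system, of polynomial size (control states $\Qd \times Q_\Bb$, stack alphabet $\Gd$). Let $R'$ collect the control states of $\tDd'$ whose $\Bb$-component is a repeated state. Using item~2 of Theorem~\ref{th:transP} (any infinite run of the \CDtsystem has infinitely many leader moves) together with the usual product argument, the $(\tCc,\tDd')$-system has an infinite run whose leader visits $R'$ infinitely often iff the \CDsystem has an infinite maximal trace in $\Pp$. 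Finally I would cast this as repeated reachability: add to $\tDd'$ a fresh action $\top$ that does not touch the register and that the leader is forced to perform each time it enters a control state of $R'$ --- concretely, by routing every transition ending in $R'$ through a new intermediate control state whose only outgoing move is a $\top$-labelled $\e$-move to the target. (Equivalently, the proof of Theorem~\ref{thm:liveness} through Lemma~\ref{lem:loop} goes through verbatim after replacing ``$v$ contains $\top$'' by ``the loop of the run visits a leader control state in $R'$''.) Each of these steps is a polynomial-time transformation, so by Theorem~\ref{thm:liveness} the whole problem lies in \PSPACE.

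For the matching lower bound, the repeated reachability problem is itself the special case in which $\Pp$ is ``the distinguished leader action $\top$ occurs infinitely often'' (i.e.\ $\LG\LF\top$): this property is regular, is recognised by a B\"uchi automaton of constant size, and is $\Cc$-expanding because replicating contributor actions does not affect the leader occurrences of $\top$; moreover every word in it is infinite, so ``has an infinite maximal trace in it'' coincides with ``has a trace with infinitely many $\top$''. Since repeated reachability is \PSPACE-hard (Theorem~\ref{thm:liveness}), so is the problem of the lemma.

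The step I expect to be most delicate is the bookkeeping around Theorem~\ref{th:transP}: one must check that ``infinite maximal trace in $\Pp$'' on the original side matches ``infinite run whose leader-trace is B\"uchi-accepted'' on the transformed side, which crucially uses item~2 to exclude infinite runs with only finitely many leader actions (without it, a B\"uchi condition on the leader would not capture $\tPp$), and one must verify that $\Bb$ and the product $\tDd \times \Bb$ stay polynomial so that the final invocation of Theorem~\ref{thm:liveness} yields \PSPACE{} rather than \EXPSPACE.
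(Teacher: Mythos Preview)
Your proof is correct and follows essentially the same route as the paper: apply Theorem~\ref{th:transP}, take the product of the leader with the B\"uchi automaton for $\tPp$, reduce the resulting B\"uchi condition on leader control states to an instance of repeated reachability, and invoke Theorem~\ref{thm:liveness}; the lower bound is likewise the same observation that repeated reachability is a special case. The only point where the paper is more explicit is the encoding of the fresh action~$\top$: since in this model every leader action is a read or a write of a register value, there is no ``$\top$-labelled $\e$-move that does not touch the register''; the paper instead takes $\top=\Dw(\#)$ for a fresh value~$\#$ and inserts it just before the leader's next write (so $\#$ is immediately overwritten and no contributor can interact with it), relying on item~2 of Theorem~\ref{th:transP} to guarantee that such a next write always comes.
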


  \begin{proof}
    Applying again Theorem~\ref{th:transP} and slightly modifying the
    \CDsystem we can reduce the satisfaction of $\Pp$ to an instance
    of the repeated reachability problem.
   %  By Theorem~\ref{th:transP} we assume that the
%     property is given by $\Pp_D$ referring to the projection of a
%     trace on the actions of $\Dd$. 
%     Moreover, we can ensure that every infinite run of the \CDsystem
% has infinitely many writes of $\Dd$. 
%
%     By taking the product of $\Dd$ with a B\"uchi automaton for
%     $\Pp_D$, we can also assume that $\Dd$ has a distinguished set $R$
%     of repeating (control) states such that an infinite run of the
%     \CDsystem satisfies $\Pp$ iff $\Dd$ visits $R$ infinitely often.
%
% By a small modification of the \CDsystem we can turn the requirement
% that $R$ is visited infinitely often into a repeated reachability
% problem.  
Observe also that the repeated reachability problem is a special case of
our problem. 
With this reduction, \PSPACE{}-completeness  follows
from Theorem~\ref{thm:liveness}. 
% \end{proof}

%%% long version
 Let the pushdown system for $\Dd$ be
$\Aa_D=\struct{\Qd,\Sd,\Gd,\dd,\qid,\Aid}$. 
 By taking the product of $\Dd$ with a B\"uchi automaton for
 $\Pp_D$, we can also assume that $\Dd$ has a distinguished set $R$
 of repeating (control) states such that an infinite run of the
 \CDsystem satisfies $\Pp$ iff $\Dd$ visits $R$ infinitely often.

We  add new states and
transitions to $\Aa_D$, so that the leader will signal visits to $R$ by 
writing a special symbol $\#\not\in G$.
We set $G'=G\cup\set{\#}$, and $\Qd' =\Qd \cup \hat \Qd$, where 
$\hat \Qd =\set{\hat q \mid q \in \Qd}$ is a copy of $\Qd$.  The
stack alphabet is unchanged, and we add the following transitions to
$\Aa_D$:
\begin{enumerate}
\item $r \act{a} \hat q$ for every $r \act{a} q$ with
  $r \in R$,
\item $\hat q_1 \act{r(g)} \hat q_2$ for every $q_1 \act{r(g)} q_2$,
\item $ \hat q_1 \act{\dw(\#)} \act{\dw(g)} q_2$ for every $q_1
  \act{\dw(g)} q_2$.
\end{enumerate}
Note that $\dw(\#)$ does not restrict runs of the original system, and does not add new behaviours:
the value $\#$ cannot be read by contributors, and it is immediately followed
by original writes of the leader. 
Since on every infinite run the leader does infinitely often writes, she
will write $\#$ infinitely often iff she visits infinitely often a
state from $R$.
  \end{proof}

  \begin{lemma}\label{l:property-infinite-LTL}
    It is \EXPTIME-complete to decide whether a given pushdown
    \CDsystem has an \emph{infinite} trace satisfying some 
    $\Cc$-expanding property $\Pp$, that is given 
    by an LTL formula.
  \end{lemma}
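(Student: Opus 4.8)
The plan is to combine Theorem~\ref{th:transP} with the usual LTL-to-B\"uchi translation, and then to re-run the machinery behind Theorem~\ref{thm:liveness} while being careful about where the exponential blowup lands. The point is that the obvious shortcut gives only a doubly-exponential bound, so a dedicated argument is needed.

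For the lower bound I would reduce from LTL model checking of a single pushdown system, which is \EXPTIME-complete \cite{BEM97}. Take the pushdown system as the leader $\Dd$ and trivial one-state contributors $\Cc$; then every trace of the \CDsystem is an infinite run of $\Dd$. Given an LTL formula $\varphi$, the property $\Pp=\{w\in(\Sd\cup\Sc)^\infty : w|_{\Sd}\models\neg\varphi\}$ is $\Cc$-expanding, and it is LTL-expressible at polynomial cost (relativise each temporal operator of $\neg\varphi$ to the $\Sd$-positions). The \CDsystem then has an infinite trace in $\Pp$ iff $\Dd$ has an infinite run satisfying $\neg\varphi$, so the problem is \EXPTIME-hard.

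For the upper bound, first apply Theorem~\ref{th:transP}: we may assume that $\Pp$ is an LTL formula referring only to the leader's actions $\Sd$ and of size linear in the original one, and that every infinite run of the new \CDsystem has infinitely many leader writes — so it suffices to look for an infinite run whose $\Sd$-projection is accepted by a nondeterministic B\"uchi automaton $\Aa_\Pp$, obtained from $\Pp$, with $2^{O(|\Pp|)}$ states. Following Lemmas~\ref{lem:loop} and~\ref{lem:Clsupp}, but carrying the state of $\Aa_\Pp$ along the run and replacing ``$\top$ occurs in $v$'' by ``$\Aa_\Pp$ loops on $v|_\Sd$ through an accepting state'', the existence of such a run reduces — after guessing $m\le|G|$ register values $h_1,\dots,h_m$, a leader control state and stack symbol, an $\Aa_\Pp$-state, and contributor states $p_1,\dots,p_m$ — to (i) a reachability check in the $(\Ccf,\Dd)$-system for the prefix, with $\Aa_\Pp$ synchronised onto the leader, and (ii) a B\"uchi-nonemptiness check on the pushdown obtained as the product of $\Ddk$ with $\Aa_\Pp$ (reading only the $\Sd$-letters) and with the finite automata $\Aa_1,\dots,\Aa_m$ from the proof of Theorem~\ref{thm:liveness1}. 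The automata $\Aa_\Pp,\Aa_1,\dots,\Aa_m$ have $2^{O(|\Cc|+|G|+|\Pp|)}$ states, and in both (i) and (ii) the leader's pushdown enters only polynomially: the reachability algorithm of \cite{EGM16,LMW15} can be run in time exponential in $|\Cc|$ and $|G|$ but polynomial in the size of the leader (once the contributors' collective behaviour is summarised by a finite automaton, the leader is handled by ordinary pushdown reachability), and pushdown B\"uchi-emptiness is polynomial in the size of the pushdown. Summing up, the whole procedure runs in singly-exponential time, that is, in \EXPTIME.

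The step I expect to be the main obstacle — and the reason this needs a dedicated argument rather than a one-line appeal to Lemma~\ref{l:property-infinite} — is avoiding a doubly-exponential blowup. The tempting route, namely to form the product of $\Dd$ with $\Aa_\Pp$ and then invoke the \PSPACE{} procedure of Theorem~\ref{thm:liveness} verbatim, fails: that procedure computes the downward closure of (a small modification of) the leader's pushdown $\Ddk$, and the downward closure of a pushdown automaton is only exponential in its size, so applied to the exponential-size pushdown $\Dd\times\Aa_\Pp$ it becomes doubly exponential, yielding an \EXPSPACE{} (and doubly-exponential time) bound. The downward closure there is merely a device for staying in polynomial \emph{space}; since for \EXPTIME{} we only need polynomial \emph{time} on exponential-size objects, we can drop it, keep $\Ddk$ as a pushdown throughout, build the exponential-size finite automata $\Aa_\Pp$ and $\Aa_i$ explicitly, and finish with one polynomial-time pushdown (B\"uchi-)emptiness test on the product. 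This is the same phenomenon that makes LTL model checking of a single pushdown system \EXPTIME{} rather than \EXPSPACE.
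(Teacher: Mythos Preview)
Your proposal is correct and follows essentially the same approach as the paper: the same lower bound (LTL over a single pushdown system with trivial contributors), and the same upper bound idea of dropping the downward-closure step from Theorem~\ref{thm:liveness1}, keeping $\Ddk$ as a pushdown, building the exponential-size automata $\Aa_\Pp,\Aa_1,\dots,\Aa_m$ explicitly, and finishing with a single pushdown (B\"uchi-)emptiness test on their product. The only cosmetic difference is that the paper enumerates the tuples $(h_1,\dots,h_m,p_1,\dots,p_m,q,A,g)$ deterministically rather than guessing them, which is immaterial for an \EXPTIME{} bound.
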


\begin{proof}
  The lower bound comes from the situation where there are no contributors
  at all \cite{BEM97}.  

  For the upper bound: from an LTL formula
  we first construct a B\"uchi automaton of exponential size for
  $\Pp$.  As in Lemma~\ref{l:property-infinite}, the first step is to
  reduce the problem of deciding if the \CDsystem has a trace in $\Pp$
  to a repeated reachability problem in some \CDpsystem. The leader
  $\Dd'$ there is of exponential size, and $\Cc'$ is of polynomial
  size. 

  As a second step we adapt the procedure given in
  the proof of Theorem~\ref{thm:liveness1}: we do not build the downward closure of
  the leader, but enumerate all
   possible sequences $\n(h_1), \ldots, \n(h_m)$ and intermediate states, 
   instead of guessing them. Then we follow the lines of the proof of
   Theorem~\ref{thm:liveness1}, checking emptiness of pushdowns of
   exponential size (in  \EXPTIME).
% \end{proof}

%%% long version
%   For the upper bound, from the LTL formula, we can construct a B\"uchi 
%   automaton of exponential size for $\Pp$. 
%   As in Lemma~\ref{l:property-infinite}, we can then reduce the problem of 
%   deciding if the \CDsystem has a trace in $\Pp$ to a repeated reachability 
%   problem in some \CDpsystem. Moreover, $\Dd'$ is of exponential size, and 
%   $\Cc'$ is of polynomial size.

%   We then adapt the procedure given in Section~\ref{sec:repeated} to decide
%   in \EXPTIME\ the repeated reachability problem for the \CDpsystem. Mainly,
%   we do not take the downward closure of the leader, and we enumerate all
%   possible sequences $\n(h_1), \ldots, \n(h_m)$ and intermediate states, 
%   instead of guessing them.
  
  First, there are exponentially many possible values for tuples of the form
  $(h_1, \ldots, h_m, p_1, \ldots, p_m, q, A, g)$ where $m \le |G|$,
  $h_1, \ldots, h_m$ is a sequence of pairwise distinct values from $G$,
  $p_1, \ldots, p_m$ are states of $(\Ccf')^\k$, $q$ is a control state of
  $(\Dd')^\k$, $A$ is a stack symbol of $(\Dd')^\k$, and $g \in G$.

  Then, for each such tuple, we can check in (deterministic) exponential
  time if there exists a $(h_1,\ldots,h_m)$-word $v \in \S_{\Dd,\n}^*$ with
  $\lst(v) = g$ such that $(\es,qA,g) \xra{v} (K,qA\a',g)$ in $(\Dd')^\k$,
  for some $\a'$, and $v$ is $\omega$-supported from $p_1, \ldots, p_m$.
  As in Theorem~\ref{thm:liveness1}, we construct for every $1 \le i \le m$
  a finite automaton $\Aa_i$   accepting the projection over $\S_{D,\n}$ of 
  the words $u \in \S_{C,D,\n}^*$ 
  of the form
  \[
  u = u_1 \nu(h_1) \cdots u_i \nu(h_i) \boldsymbol{\Cw(h_i)}  
  u_{i+1} \cdots u_m\nu(h_{m})u_{m+1}
  \]
  and such that $(\es, p_i, g) \xrightarrow{u} (K,p_i,g)$ is a trace in 
  $(\Ccf')^\k$.
  Let $\Aa$ be a pushdown automaton accepting the set of 
  $(h_1, \ldots, h_m)$-words $v$ such that $(\es,qA,g) \xra{v} (K,qA\a,g)$
  in $(\Dd')^\k$.
  To decide the existence of an $\omega$-supported trace satisfying the above
  conditions, we construct $\Aa \cap \Aa_1 \cap \cdots \cap \Aa_m$, which
  is a pushdown automaton of exponential size, and test whether its language
  is empty.

  The \PSPACE\ procedure for the reachability problem described in \cite{LMW15} 
  is very similar, and we can adapt it in the same way to decide in exponential 
  time if there exists a configuration $(M,qA\a,g)$ satisfying 
  $M \ge [p_1, \ldots, p_m]$ that is reachable in the $(\Ccf',\Dd')$-system.

  By Lemma~\ref{lem:Clsupp}, this gives us an \EXPTIME\ algorithm to decide
  if the original \CDsystem has a trace in $\Pp$.
  \end{proof}

%%% Local Variables:
%%% mode: latex
%%% TeX-master: "m"
%%% End:

\section{Simplifying \CDsystems}
\label{sec:simplify}

In this section, 
we show that a \CDsystem can be simulated by another \CDpsystem such that
all actions in the original system are reflected in  leader writes in
the new system.
So in the \CDpsystem all behaviours of the system, up to stuttering,
will be reflected in the actions of the leader. 

The idea is that in the \CDpsystem the register of the \CDsystem
becomes part of the leader state. 
This releases  the actual register of the \CDpsystem to be used to communicate
about contributor actions. 
Contributors will write into the register the command they want to
perform, and the leader will execute the command and confirm it by
writing  back into the
register.
This confirmation is read by contributors who at this point know that
their request has been read and executed.
For symmetry the leader is also writing the commands she performs to
the register (although they are never read by anybody).
So the set of \emph{register values} of the \CDpsystems~is:
\begin{equation}\label{eq:Gp}
  G'=\set{\qcr g,\qcw g,\acr g,\acw g,\gdr g,\gdw g : g\in G} 
  \cup \set{g'_\init}
\end{equation}
The alphabets of $\Cc'$ and $\Dd'$ are defined as usual:
\begin{equation*}
  \Scp=\set{\Cr(g'),\Cw(g') : g'\in G'}\ \qquad 
  \Sdp=\set{\dr(g'),\dw(g') : g'\in G'}\, .
\end{equation*}

The states of $\Cc'$ are 
\begin{align*}
S' & = S \cup \set{[s,a,s'] : \text{$s,s' \in S$, and $a=\acr g$ or
     $a=\acw g$ for some $g\in G$}}\,.
\end{align*}
The new states of the form $[s,a,s']$ represent the situation when the
contributor has declared that he wants to do the transition $s \act{a}
s'$ and move to $s'$.  In order to really move to $s'$, he needs to
wait for a confirmation from the leader that the action $a$ has been
taken into account.  This mechanism is captured by the following transitions of
$\Cc'$:
\begin{align*}
s\xra{\Cw(\qcw g)} [s,\acw g,s'] \xra{\Cr(\acw g)} s' &
\qquad\text{if}\quad s\xra{\Cw(g)}s' \text{ in } \Cc\\
s\xra{\Cw(\qcr g)} [s,\acr g,s'] \xra{\Cr(\acr g)} s' &
\qquad\text{if}\quad s\xra{\Cr(g)}s' \text{ in } \Cc
\end{align*}

The states of $\Dd'$ are:
\begin{align*}
  T' & =  \set{[t,x] : \text{$t\in T$, and $x=g$, $x=\acw g$, or $x=\acr
       g$, for some $g\in G$}}\,,
\end{align*}
where the component $x$ is supposed to store the value of the register of
the \CDsystem being simulated. 
It can also be a read or write operation, when $\Dd'$ is in the
process of confirming a contributor operation. 
The transitions of $\Dd'$ are:
\begin{align*}
& [t,g]\xra{\Dw(\gdw h)}[t',h]
\quad\text{if}\quad t\xra{\Dw(h)}t' \text{ in } \Dd\\
& [t,g]\xra{\Dw(\gdr g)}[t',g]
\quad\text{if}\quad t\xra{\Dr(g)}t' \text{ in } \Dd\\
& [t,g] \xra{\Dr(\qcw h)} [t,\acw h] \xra{\Dw(\acw h)}[t,h]
\quad\text{for all $t,g,h$}\\
& [t,g] \xra{\Dr(\qcr g)} [t,\acr g] \xra{\Dw(\acr g)}[t,g]
\quad\text{for all $t,g$} \, .
\end{align*}
So transitions of $\Dd$ are simply reflected by transitions of $\Dd'$:
the value of the register of the \CDsystem is stored in the state of
$\Dd'$, and the operation being performed is written into the register
of $\Dd'$.
When a request of an operation from a contributor is read then it is
performed on the value stored in the state and the confirmation of
this operation is written into the register of the \CDpsystem.

In order to state the correspondence between traces of the \CDsystem and
that of the \CDpsystem we define some operations on traces. 
The first one transforms a sequence over the alphabet of the
\CDpsystem into a sequence over the alphabet of the \CDsystem.
A sequence $\trans(u)$ is obtained from $u$ by:
\begin{enumerate}
\item removing all operations of contributors and all read operations
  of the leader, and
\item replacing all write operations $\Dw(a)$ of the leader by $a$,
  for example $\Dw(\acr g)$ is replaced by $\Cr(g)$.
\end{enumerate}
So the operation $\trans(u)$ is the sequence of operations that are
written by $\Dd'$ into the register.

The second operation uses \emph{stuttering expansions} of sequences
over the alphabet of \CDsystems wrt.~contributor actions.  Let $u =
u_0 a_0 u_1 a_1 u_2 \cdots$ be a finite or infinite word over $\Sc
\cup \Sd$, with $a_i\in\Sc$ and $u_i \in \Sd^*$ for all $i$ (or $u_i
\in \Sd^\omega$ if $u$ is infinite but the sequence $u_0, u_1, \ldots$
is finite and of length~$i$).  We write $v\in\sexp(u)$ if there exists
a function $f : \Nat \to \Nat^+$ such that $v = u_0 a_0^{f(0)} u_1
a_1^{f(1)} u_2 \cdots$. This operation is required because a single
confirmation by the leader of a contributor's request can satisfy
several identical requests.

\begin{proposition} \label{th:trans}
  Let $(\Cc',\Dd')$ be obtained from a \CDsystem as described above.
  If $u$ is a trace of the \CDsystem then there is a trace $u'$ of the
  \CDpsystem such that $\trans(u')=u$.
  If $u'$ is a trace of \CDpsystem then there is some $u\in\sexp(\trans(u'))$
  that is a trace of the \CDsystem.
\end{proposition}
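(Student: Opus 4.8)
The plan is to prove the two implications separately, each by an explicit step-by-step simulation, and I expect the bookkeeping in the second one to be the delicate part. \emph{First implication.} Given a run $\rho$ of the \CDsystem with $n$ contributors and trace $u$, I would build a run $\rho'$ of the \CDpsystem, also with $n$ contributors, keeping the invariant that after simulating a prefix of $\rho$ ending in configuration $(M,t,g)$, the run $\rho'$ has reached a configuration whose contributor multiset is $M$ seen over $S\subseteq S'$, whose leader is in the stable state $[t,g]$, and whose register holds the value last written; moreover $\trans$ of the trace produced so far equals the prefix of $u$ already simulated. One step of $\rho$ is simulated by cases: a leader write $t\xra{\Dw(h)}t'$ by the single step $[t,g]\xra{\Dw(\gdw h)}[t',h]$; a leader read $t\xra{\Dr(g)}t'$ by $[t,g]\xra{\Dw(\gdr g)}[t',g]$; a contributor write $s\xra{\Cw(h)}s'$ by the four consecutive steps $s\xra{\Cw(\qcw h)}[s,\acw h,s']$, then $[t,g]\xra{\Dr(\qcw h)}[t,\acw h]$, then $[t,\acw h]\xra{\Dw(\acw h)}[t,h]$, then $[s,\acw h,s']\xra{\Cr(\acw h)}s'$; and symmetrically a contributor read $s\xra{\Cr(g)}s'$, the leader transition $[t,g]\xra{\Dr(\qcr g)}[t,\acr g]$ being available precisely because the value $g$ stored in the leader state matches the value being read, which is itself the current register value of $\rho$ by the invariant. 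In each case the new step(s) are clearly enabled -- the only subtlety being that within a request/confirm cycle no other process touches the register, so every read is of the value just written -- and $\trans$ records exactly the action of the simulated step, so at the end $\trans(u')=u$.

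\emph{Second implication.} Conversely, from a run $\rho'$ of the \CDpsystem I would read off a run $\rho$ of the \CDsystem. Its leader follows the $T$-component of $\rho'$'s leader, ignoring the stored value and the intermediate states. Each contributor of $\rho'$ has a \emph{shadow} in $\rho$, namely the subsequence of its states lying in $S$; a completed request/confirm cycle $s\to[s,a,s']\to s'$ of a contributor in $\rho'$ becomes one genuine step of its shadow, $\Cw(h):s\to s'$ if $a=\acw h$ and $\Cr(g):s\to s'$ if $a=\acr g$. The actions of $\rho$ are ordered following the subsequence of leader writes of $\rho'$ -- which is exactly what $\trans$ keeps: a write $\Dw(\gdw h)$ or $\Dw(\gdr g)$ is the matching leader step of $\rho$, and a confirmation $\Dw(\acw h)$ (resp.\ $\Dw(\acr g)$) is realised as a block $\Cw(h)^{k}$ (resp.\ $\Cr(g)^{k}$), where $k$ counts the cycles whose confirm-read reads that particular confirmation. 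Performing such a block in place is precisely a $\sexp$-expansion of the single action recorded by $\trans$, as long as $k\ge 1$; a block $\Cr(g)^k$ is executable because those steps leave the register unchanged. The invariant to carry along is that whenever the leader of $\rho'$ is in a stable state $[t,g]$, the register of $\rho$ at the matching point of its construction equals $g$; this survives all four shapes of leader move because the stored value is modified only by leader writes, through the same values, and a confirm-read happens in $\rho'$ only when the register holds the value the leader has just stored.

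\emph{Main obstacle.} The subtle point is that a confirmation $\Dw(\acw h)$ in $\rho'$ may be read by no contributor at all -- its value can be overwritten by the leader's next write or by a fresh request -- yet $\trans$ still records a $\Cw(h)$ for it, so the block above would be empty and fail to be a $\sexp$-expansion. I would handle this by assigning to each confirmation either the cycles that read it or, if there are none, the unique pending cycle whose request the leader read to trigger it: that cycle's shadow is then sitting in the source state $s$ and can perform the required $\Cw(h):s\to s'$ there. The work is to check that these fallback assignments can be made consistently -- running along $\rho'$, each confirmation getting at least one cycle and no shadow being asked to repeat a transition -- which is where the accounting gets a little involved. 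Finiteness is preserved both ways: in the first implication each step of $\rho$ expands to at most four steps of $\rho'$; in the second, an infinite $\rho'$ forces the leader to move infinitely often and, since a leader read is always immediately followed by a leader write, to write infinitely often, hence $\trans(u')$ is infinite as well. Everything else reduces to a routine case analysis over the transitions of $\Cc'$ and $\Dd'$.
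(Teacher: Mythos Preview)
Your first implication coincides with the paper's approach: each action of the \CDsystem is replaced by the one- or four-step block you describe, and the invariant you state is exactly the one the paper carries. For the second implication, however, you take a different route. You want each contributor's shadow to perform its step at the moment it \emph{reads} the leader's confirmation, grouping these steps by the confirmation that was read; the paper instead fires the shadow step at the moment the leader \emph{emits} the confirmation $\Dw(\acw h)$, simultaneously for \emph{all} contributors currently sitting in some transitory state $[s,\acw h,s']$ that have not yet been ``committed''. Commitments are tracked by an auxiliary multiset $N\le M'$ of transitory states, and the induction carries the invariant that whenever the register of the \CDpsystem holds a pending request $\qcw h$, or the leader is in the intermediate state $[t,\acw h]$, there is at least one uncommitted $[s,\acw h,s']$---which is exactly your observation that the triggering cycle is still pending. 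This guarantees $k\ge 1$ at every confirmation without any fallback case, and since commitment is monotone, no shadow is ever asked to repeat a transition. Your read-based grouping can also be made to work---the triggering cycle of a confirmation has not been used earlier, so one can fall back to it and then exclude it from later reader sets---but the double-assignment hazard you flag is real and the accounting needed to resolve it is noticeably heavier than the paper's commit-at-confirmation induction.
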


The proposition above, and then Theorem~\ref{th:transP}, will follow
from the next lemmas. 

\begin{lemma}\label{lem:CDtoPrime}
  If $u$ is a trace of $(\Cc,\Dd)$ then the trace $u'$ obtained by replacing
  \begin{align*}
  \Dw(g) & \quad \text{by} \quad \Dw(\gdw g) \\
  \Dr(g) & \quad \text{by} \quad \Dw(\gdr g) \\
  \Cw(g) & \quad \text{by} \quad \Cw(\qcw g)\,\Dr(\qcw g)\,\Dw(\acw g)\,\Cr(\acw g) \\
  \Cr(g) & \quad \text{by} \quad \Cw(\qcr g)\,\Dr(\qcr g)\,\Dw(\acr g)\,\Cr(\acr g)
  \end{align*}
  is a trace of $(\Cc',\Dd')$.
  If $u$ is finite and there exists a run of the \CDsystem over $u$
  ending in $(M,t,g)$, then there exists a run of the \CDpsystem over $u'$
  ending in $(M,[t,g],a)$, where $a$ is the last action of $u$.
\end{lemma}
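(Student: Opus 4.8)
The statement is a direct simulation lemma: it asks to check that the syntactic substitution described in the display produces, from any trace $u$ of $(\Cc,\Dd)$, a genuine trace $u'$ of $(\Cc',\Dd')$, and moreover tracks the end configuration. The natural approach is an induction on the length of $u$, processing one action of $u$ at a time and showing that the corresponding block of actions in $u'$ is executable from the current configuration of the \CDpsystem, while maintaining an invariant relating the two configurations.

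\textbf{The invariant.} After a prefix of $u$ ending in configuration $(M,t,g)$ of the \CDsystem, I would maintain that the \CDpsystem is in configuration $(M,[t,g],a)$, where $a$ is the last action-encoding written to the register (i.e.\ $a = g'_\init$ if no action has occurred yet, and otherwise $a$ is the image under the encoding of the last action of the processed prefix, e.g.\ $\gdw g$ after a leader write of $g$). Crucially the multiset component is \emph{literally} $M$: the auxiliary states $[s,\acw g,s']$ and $[s,\acr g,s']$ of $\Cc'$, and the auxiliary states $[t,\acw h]$, $[t,\acr h]$ of $\Dd'$, are entered and left \emph{within} the processing of a single action, so at block boundaries only ``stable'' states $S \subseteq S'$ and $[t,g] \in T'$ occur. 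This is why the substitution for $\Cw(g)$ and $\Cr(g)$ has exactly four actions: contributor-request, leader-read, leader-confirm, contributor-acknowledge, returning both processes to stable states.

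\textbf{The inductive step.} For each of the four cases I would exhibit the block of \CDpsystem-transitions and check side-conditions. For $\Dw(g)$: from $(M,[t,g_0],a)$, since $t\xra{\Dw(g)}t'$ in $\Dd$, the transition $[t,g_0]\xra{\Dw(\gdw g)}[t',g]$ fires (no register side-condition on writes), yielding $(M,[t',g],\gdw g)$; note $g$, the value of $\Dd$'s register, is now correctly stored in the state. For $\Dr(g)$: the \CDsystem-transition requires the register value to be $g$, which by the invariant equals the state-component, so $[t,g]\xra{\Dw(\gdr g)}[t',g]$ fires. For $\Cw(g)$: $M\xra{\Cw(g)}M'$ in $\d$ comes from $s\xra{\Cw(g)}s'$ with $M(s)>0$; then $s\xra{\Cw(\qcw g)}[s,\acw g,s']$ moves one token, putting $\qcw g$ in the register; the leader reads it via $[t,g_0]\xra{\Dr(\qcw g)}[t,\acw g]$ (register matches), confirms via $[t,\acw g]\xra{\Dw(\acw g)}[t,g_0]$ (restoring the stored value $g_0$, which is correct since a contributor write does not change $\Dd$'s register in the \CDsystem --- wait, it does: a contributor write of $g$ sets the register to $g$). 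I need to be careful here: in the \CDsystem a contributor $\Cw(g)$ changes the shared register to $g$, but in the \CDpsystem the leader's stored component must then become $g$; so the confirmation transition should be $[t,\acw g]\xra{\Dw(\acw g)}[t,g]$ --- and indeed the paper's definition writes $[t,g] \xra{\Dr(\qcw h)} [t,\acw h] \xra{\Dw(\acw h)}[t,h]$, landing in $[t,h]$, so the stored value is updated to $h$ exactly as required. Finally $[s,\acw g,s']\xra{\Cr(\acw g)}s'$ fires since the register holds $\acw g$, and the token reaches $s'$, giving multiset $M'$. The $\Cr(g)$ case is analogous, using that the \CDsystem-read requires the register to be $g$, which matches the stored component, and the confirmation leaves the stored value unchanged ($[t,g]\xra{\Dr(\qcr g)}[t,\acr g]\xra{\Dw(\acr g)}[t,g]$).

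\textbf{Main obstacle.} There is no deep obstacle; this is a bookkeeping lemma. The one point requiring genuine care --- and the thing I would double-check most carefully --- is the \emph{register-value invariant}: verifying at each block boundary that ``the shared register of the \CDsystem'' (a value in $G$) is faithfully mirrored by the $x$-component of $\Dd'$'s state, and that every read side-condition in the \CDpsystem (leader reads of $\qcw g$/$\qcr g$/$\acw h$, contributor reads of $\acw g$/$\acr g$) is met because the immediately preceding write in the block put exactly that symbol into the register. Since contributor-request symbols $\qcw g,\qcr g$ and confirmation symbols $\acw g,\acr g$ are never read by anyone except in these tightly-coupled four-action blocks, and leader writes $\gdw g,\gdr g$ are never read at all, no interference between blocks can occur, and the induction goes through cleanly. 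The statement about the final configuration $(M,[t,g],a)$ with $a$ the last action of $u$ then falls out of the invariant at the end of the processed word, observing that in every case the last \CDpsystem-action of the block is a write whose register-value is precisely the encoding of the last \CDsystem-action.
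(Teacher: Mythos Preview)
Your proof is correct, and in fact the paper gives no proof of this lemma at all --- it is stated and immediately followed by the (much harder) converse direction. Your induction on the length of $u$ with the invariant that the \CDpsystem is in configuration $(M,[t,g],a)$ is exactly the natural argument, and your case analysis is sound, including your self-correction about the leader's stored register component being updated to $h$ after a contributor write $\Cw(h)$.

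One small inaccuracy worth cleaning up: in your final paragraph you write that ``in every case the last \CDpsystem-action of the block is a write whose register-value is precisely the encoding of the last \CDsystem-action.'' This is true for the two leader-action cases, but for $\Cw(g)$ and $\Cr(g)$ the last action of the four-action block is the contributor \emph{read} $\Cr(\acw g)$ (resp.\ $\Cr(\acr g)$), not a write. Your conclusion is still correct --- the register value after that read is $\acw g$ (resp.\ $\acr g$), which was placed there by the leader's write in step~3 and is unchanged by the read --- but the justification should say ``the register value at the end of each block is the encoding of the \CDsystem-action'' rather than attributing it to the final action of the block.
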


\begin{lemma}
  \label{lem:PrimetoCD}
Let $(n[s_\init],[t_\init,g_\init],g'_\init)\xra{u'}(M',[t,x],g')$ 
be a run of the \CDpsystem.
We can construct by induction on the length of $u'$ a run 
$(n[s_\init],t_\init,g_\init) \xra{u} (M,t,g)$ 
of the \CDsystem and a multiset $N\leq M'$ such that:
\begin{enumerate}
\item $u\in\sexp(\trans(u'))$.
\item\label{I:g} If $x\in G$ then $x=g$. If $x$ is of the form $\acr h$ then $h=g$.
\item\label{I:eq} For all $s \in S$,
  \[M(s) = M'(s)  + \sum_{a\in\set{\Cr,\Cw}\times G} 
  \sum_{s' \in S} M'([s,a,s'])-N([s,a,s']) + N([s',a,s])\, .\]
\item\label{I:comp} If $g' = \qcw h$ or $x = \acw h$, then 
  we have the strict inequality \[\sum_{s,s' \in S} M'([s,\acw h,s']) >
  \sum_{s,s' \in S} N([s,\acw h,s']) \, .\] 
  Similarly if $g' = \qcr h$ or $x = \acr h$.
\item\label{I:N} If $g' = \acw h$, 
  then $N([s,\acw h,s']) = M'([s,\acw h,s'])$ for every $s,s'\in S$.
  Similarly if $g' = \acr g$.
\end{enumerate}
\end{lemma}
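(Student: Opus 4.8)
\textbf{Proof plan for Lemma~\ref{lem:PrimetoCD}.}
The plan is to proceed by induction on the length of $u'$, mirroring the three kinds of atomic steps that $(\Cc',\Dd')$ can make: a leader step coming from a $\Dd$-transition (the first two $\Dd'$-rules), the two-step ``request/confirm'' protocol on the leader side ($[t,g]\xra{\Dr(\qcw h)}[t,\acw h]\xra{\Dw(\acw h)}[t,h]$ and its read-analogue), and the two-step protocol on the contributor side ($s\xra{\Cw(\qcw g)}[s,\acw g,s']\xra{\Cr(\acw g)}s'$ and its read-analogue). For the base case $u'=\e$ we take $u=\e$, $M=n[s_\init]$, $N=\es$; all five invariants hold trivially since there are no intermediate states in $M'$ and $x=g_\init\in G$. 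For the inductive step I would assume the invariants for a run ending in $(M',[t,x],g')$ and extend $u'$ by one transition of $(\Cc',\Dd')$, maintaining the run of $(\Cc,\Dd)$, the multiset $N\le M'$, and all invariants.

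The key bookkeeping is invariant~\eqref{I:eq}, which says that the ``real'' multiset $M$ of the $(\Cc,\Dd)$-run is obtained from $M'$ by, for each pending transition $[s,a,s']$, counting those copies whose request has \emph{not yet} been matched (i.e.\ the $M'([s,a,s'])-N([s,a,s'])$ of them) as still being in state $s$, and counting the $N([s',a,s])$ copies that have been matched but whose $[\cdot]$-marker records the \emph{reverse} pair as already in $s$. The multiset $N$ is precisely the ``matching certificate'': it records which pending contributor requests have already been served by a leader confirmation in the simulated run. When $(\Cc',\Dd')$ takes a contributor request step $s\xra{\Cw(\qcw g)}[s,\acw g,s']$, I leave $u$, $M$, $N$ unchanged except $M'$ gains one copy of $[s,\acw g,s']$ and loses one copy of $s$; invariant~\eqref{I:eq} is preserved because the new copy is unmatched, so it is still counted as being in $s$. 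When the leader reads such a request, $[t,g]\xra{\Dr(\qcw h)}[t,\acw h]$, I append $\Cw(h)$ to $u$ — using invariant~\eqref{I:comp}'s \emph{strict} inequality to know there is a genuinely unmatched pending copy $[s,\acw h,s']$, which in the $(\Cc,\Dd)$-run is in state $s$ and can therefore fire $s\xra{\Cw(h)}s'$ — and then increment $N([s,\acw h,s'])$ by one to mark it matched; invariant~\eqref{I:eq} stays correct because the bookkeeping flips that copy from ``counted in $s$'' to ``counted via $N([s',\cdot,s])$-style term'' exactly as the $M$-update $M\mapsto M-[s]+[s']$ does. When the leader then writes the confirmation $[t,\acw h]\xra{\Dw(\acw h)}[t,h]$, I do nothing to $u$; invariant~\eqref{I:N} must now be established, i.e.\ \emph{all} pending $[\cdot,\acw h,\cdot]$ copies are matched — this is where I need the argument that between reading a $\qcw h$ request and writing the confirmation $\acw h$, no new $\qcw h$ request can slip in unserved, which follows because the register value $g'$ is $\acw h$ (not $\qcw h$) throughout that window, so contributors writing $\qcw h$ would have to do so earlier or later. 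Finally the contributor confirmation step $[s,\acw g,s']\xra{\Cr(\acw g)}s'$ requires $g'=\acw g$, so by~\eqref{I:N} that copy is already in $N$; I decrement both $M'([s,\acw g,s'])$ and $N([s,\acw g,s'])$, and by~\eqref{I:eq} this leaves $M$ unchanged — correctly, since in the simulated $(\Cc,\Dd)$-run the move $s\to s'$ already happened when the leader confirmed. The read-side protocol ($\qcr,\acr$) is handled identically, with the added twist that invariant~\eqref{I:g} must carry the constraint $h=g$ for register reads, which is exactly what lets $s\xra{\Cr(g)}s'$ (which needs $h=g$) fire in the $(\Cc,\Dd)$-run; and the pure-leader steps update $x$ and $g$ in lockstep, preserving~\eqref{I:g}, append the corresponding $\Dw(h)$ or $\Dr(g)$ to $u$, and touch neither $M$, $M'$'s bracket components, nor $N$.

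Invariant~\eqref{I:comp} — the strict inequalities — is the linchpin and I expect its maintenance to be the main obstacle. It must be (re)established precisely at the transition $[t,g]\xra{\Dr(\qcw h)}[t,\acw h]$ (the request that the leader commits to serving must be genuinely unmatched, so that the matching $N$-increment is legitimate and invariant~\eqref{I:eq} can absorb a real $(\Cc,\Dd)$-move) and it must be \emph{consumed} at $[t,\acw h]\xra{\Dw(\acw h)}[t,h]$, where it gets upgraded to the full equality of invariant~\eqref{I:N}. Verifying that every $(\Cc',\Dd')$-step either preserves \eqref{I:comp} or sits in one of these two designated places — and in particular that contributor steps adding fresh $[s,\acw h,s']$ copies only ever \emph{help} (they increase the left side), while contributor confirmation steps (which decrease both sides equally) are only enabled when $g'=\acw h$, i.e.\ when we are past the point where \eqref{I:comp} has been replaced by \eqref{I:N} — is the delicate case analysis. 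Once \eqref{I:comp} and \eqref{I:N} are nailed down, \eqref{I:eq} is pure arithmetic, \eqref{I:g} is an immediate state-component check, and $u\in\sexp(\trans(u'))$ follows because the only $u$-extensions happen on leader writes (which $\trans$ records) and on the leader \emph{read} of a contributor request (which contributes one extra copy of $a_i$, matching the $f(i)\ge 1$ freedom in $\sexp$). I would finish by noting that Proposition~\ref{th:trans} is the specialization of this lemma (and of Lemma~\ref{lem:CDtoPrime}) obtained by forgetting the multiset and $N$ data, and that the pushdown and LTL/regular claims of Theorem~\ref{th:transP} follow by carrying stack symbols along unchanged and reading off $\tPp$ from $\trans$.
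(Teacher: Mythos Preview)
Your plan has the right architecture --- induct on $u'$, keep a matching multiset $N$, and verify the five invariants case by case --- but the timing of when you fire the simulated $\Cw(h)$ and when you update $N$ is wrong, and this breaks both invariants~\ref{I:comp} and~\ref{I:N}.

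You propose to append one $\Cw(h)$ to $u$ and increment $N$ by one at the leader \emph{read} step $[t,g]\xra{\Dr(\qcw h)}[t,\acw h]$, and to do nothing at the confirmation $[t,\acw h]\xra{\Dw(\acw h)}[t,h]$. Consider the simplest run: one contributor posts $s\xra{\Cw(\qcw h)}[s,\acw h,s']$ and then the leader does $\Dr(\qcw h)$. Before your increment there is exactly one unmatched copy; after your increment $\sum M'([s,\acw h,s'])=\sum N([s,\acw h,s'])=1$, but the new state has $x=\acw h$, so invariant~\ref{I:comp} still demands a \emph{strict} inequality --- which now fails. With two or more pending requests you survive~\ref{I:comp} but die at the next step: after $\Dw(\acw h)$ the register is $\acw h$, invariant~\ref{I:N} requires $N=M'$ on all $[\cdot,\acw h,\cdot]$ cells, yet you have matched only one of them. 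Your argument that ``no new $\qcw h$ request can slip in because the register value is $\acw h$ throughout that window'' is doubly wrong: a leader read does not change the register (it is still $\qcw h$ until the write), and in any case contributors can \emph{write} $\qcw h$ regardless of the current register value --- and, more fundamentally, several requests may already be pending \emph{before} the leader reads.

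The paper's fix is to swap the timing: at $\Dr(\qcw h)$ do nothing (so~\ref{I:comp} is trivially preserved, since neither $M'$ nor $N$ moves), and at $\Dw(\acw h)$ fire $\Cw(h)^k$ in the simulated run with $k=\sum_{s,s'}\bigl(M'([s,\acw h,s'])-N([s,\acw h,s'])\bigr)>0$, simultaneously setting $N([s,\acw h,s']):=M'([s,\acw h,s'])$ for every $s,s'$. This is exactly where the stuttering in $\sexp$ is used: one $\Dw(\acw h)$ in $u'$ contributes a single $\Cw(h)$ to $\trans(u')$, which $\sexp$ inflates to $\Cw(h)^k$. Invariant~\ref{I:N} is then immediate by construction, \ref{I:comp} is no longer triggered (after the write $x\in G$ and $g'=\acw h$), and~\ref{I:eq} is a short calculation. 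The contributor-confirmation step then works just as you describe, because~\ref{I:N} now genuinely holds.
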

Let us explain the role of the multi-set  $N$. States of the form
$[s,a,s']$ can be thought of as transitory states of $\Cc$. They are
counted as $s$, except for those in $N$ that are counted as $s'$. 
Conditions \ref{I:comp} and \ref{I:N} can be interpreted as follows:
between actions $\Dr(\qcw h)$ and $\Dw(\acw h)$, there is at least one
contributor in a state $[s,\acw h,s']$ that is counted as being in $s$;
after $\Dw(\acw h)$, all contributors in $[s,\acw h, s']$ are counted as in 
$s'$.

\begin{proof}
  The proof is by induction on the length of $u'$. We have cases
  depending on the last action.
  Given a transition 
  \[ (M'_1, [t_1,x_1],g'_1) \xra{a'} (M'_2, [t_2,x_2],g'_2) 
  \quad\text{in}\quad (\Cc',\Dd') \]
  and $N_1 \le M'_1$, $(M_1,t_1,g_1)$ satisfying the invariants, we define
  $(M_2, t_2, g_2)$ and $N_2 \le M_2'$ satifying the invariants and such that
  \[ (M_1,t_1,g_1) \xra{a} (M_2,t_2,g_2)
  \quad\text{in}\quad (\Cc',\Dd') \]
  where $a \in \sexp(\trans(a'))$ (we may have $a = \e$).
  We first consider all the possible actions of $\Dd'$, and later those of 
  $\Cc'$. When not stated otherwise, we keep $N_2 = N_1$.
  \begin{itemize}
  \item 
    $\begin{aligned}[t]
      (M',[t_1,g],\ast)&\xra{\dw(\gdw h)} (M',[t_2,h],\gdw h)&\qquad\text{is
    simulated by }\\
      (M,t_1,g)&\xra{\dw(h)} (M,t_2,h) 
    \end{aligned}$
  \item
    $\begin{aligned}[t]
      (M',[t_1,g],\ast)&\xra{\dw(\gdr g)} (M',[t_2,g],\gdr g)&\qquad\text{is
    simulated by }\\
      (M,t_1,g)&\xra{\dr(g)} (M,t_2,g)
    \end{aligned}$
  \item         $(M',[t,g],\qcw h)\xra{\dr(\qcw h)}(M',[t,\acw
    h],\qcw h)$  is simulated by no action. 
  \item          $(M',[t,g],\qcr g)\xra{\dr(\qcr g)}(M',[t,\acr
    g],\qcr g)$  is simulated by no action. 

  \item 
    $\begin{aligned}[t]
      (M',[t,\acw h ],\ast)&\xra{\Dw(\acw h)} (M',[t,h],\acw
                           h)&\qquad\text{is simulated by}\\
      (M_1,t,g)&\xra{\Cw(h)^k}(M_2,t,h)
    \end{aligned}$

    Intuitively in this step we do all the writes that waited to be
    done, so:
    \begin{enumerate}
    \item[(i)] $k=\sum_{s,s'\in S} d_{s,s'}$ with $d_{s,s'}=M'[s,\acw
      h,s']-N_1[s,\acw h,s']$; this number says how many contributors
      there are in state $s$ that want to do $\acw h$ and go to
      $s'$. 
      Note that by invariant~\ref{I:comp}, $k>0$.
    \item[(ii)] 
      $M_2(s) = M_1(s) - \sum_{s' \in S} d_{s,s'} + \sum_{s'\in S} d_{s',s}$.
    \item[(iii)] $N$ is updated from $N_1$ to $N_2$ with
      $N_2[s,\acw h,s'] = M'[s,\acw h,s']$ for every $s,s'\in S$, 
      and $N_2[s,a,s'] = N_1[s,a,s']$ if $a \neq \acw h$.
    \end{enumerate}
    The run is well-defined since by invariant~\ref{I:eq}, 
    $M_1(s) \ge \sum_{s' \in S} d_{s,s'}$.
    Invariant~\ref{I:N} is preserved thanks to item $(iii)$.
    A small calculation shows that invariant \ref{I:eq} is preserved:
    \begin{align*}
      M_2(s) & = M'(s) + 
      \sum_{{a\in\set{\Cr,\Cw}\times G}} \sum_{s'\in S}
      M'[s,a,s'] - N_1[s,a,s'] + N_1[s',a,s]
      \\ & ~~~ - \sum_{s' \in S'} d_{s,s'} + \sum_{s' \in S'} d_{s',s} \\
      & = M'(s) + 
      \sum_{a \neq \acw h} \sum_{s'\in S}
      M'[s,a,s'] - N_1[s,a,s'] + N_1[s',a,s]
      \\ & ~~~ + \sum_{s' \in S} M'([s',\acw h,s]) \\
      & = M'(s) + 
      \sum_{{a\in\set{\Cr,\Cw}\times G}} \sum_{s'\in S}
      M'[s,a,s'] - N_2[s,a,s'] + N_2[s',a,s] \, .
    \end{align*}

    \item 
      $\begin{aligned}[t]
      (M',[t,\acr g ],\ast)&\xra{\Dw(\acr g)} (M',[t,h],\acr
                           g)&\qquad\text{is simulated by}\\
      (M_1,t,g)&\xra{\Cr(g)^k}(M_2,t,g)
    \end{aligned}$

    where $k$ and the multi-sets are defined as for writes.

    \item $(M'_1,[t,x],\ast)\xra{\Cw(\qcw h)}(M'_2,[t,x],\qcw h)$
      is simulated by no operation from $(M,t,g)$. 

      We have $M'_2 = M'_1 - [s] + [[s,\acw g,s']]$ for some $s,s'$.
      Invariant~\ref{I:eq} is preserved, as $M$ and $N$ are not modified and:
      \[M'_1(s) + M'_1([s,\acw g,s']) = M'_2(s) + M'_2([s,\acw g,s')] \]
      Moreover, we have $N \le M'_2$ and
      \[M'_2([s,\acw g, s']) > M'_1([s,\acw g,s']) \ge N([s,\acw g, s']) \]
      so invariant~\ref{I:comp} is verified.
    \item  $(M'_1,[t,x],\ast)\xra{\Cw(\qcr h)}(M'_2,[t,x],\qcr h)$ is
      simulated by no operation from $(M,t,g)$.
      The invariant is preserved, as in the previous case.
      
    \item $(M'_1,[t,x],\acw h)\xra{\Cr(\acw h)}(M'_2,[t,x], \acw h)$
      is simulated by no operation; however,
      we need to update $N_1$ to keep the invariant.
      For this we take the state $[s,\acw h,s']$ that changed to $s'$
      while going from $M'_1$ to $M'_2$, and obtain $N_2$ by
      substracting $1$ from $N_1([s,\acw h,s'])$. This is possible
      since by invariant~\ref{I:N}, 
      $N_1([s,\acw h,s'])= M'_1([s,\acw h,s'])$.
      
      Clearly, invariant~\ref{I:N} is preserved. For invariant~\ref{I:eq},
      observe that:
      \begin{align*}
        M'_1([s,\acw h, s']) - N_1([s,\acw h, s'])
        & = M'_2([s,\acw h, s']) - N_2([s,\acw h, s']) \\
        M'_1(s') + N_2([s,\acw h s']) & = M'_1(s') + N_2([s,\acw h s'])
      \end{align*}
      so the equalities at $s$ and $s'$ are still true. 
      The others do not change.
    \item $(M'_1,[t,g],\acr g)\xra{\Cr(\acr g)}(M'_2,[t,g], \acr g)$
      is simulated similarly.
    \end{itemize}
\end{proof}

Unfortunately, the above construction does not preserve run maximality.
First, it introduces deadlocks: a contributor can declare that he wants to do 
a read, but this read turns out to be illegal, or is simply ignored by
the leader; as a result the contributor gets stuck in state $[s,\acr g,s']$.
Moreover, a deadlock $([s_1, \ldots, s_n],t,g)$ in $(\Cc,\Dd)$ does not 
correspond to a deadlock $([s_1,\ldots, s_n],[t,g],g')$ in $(\Cc',\Dd')$, since
a contributor may be unable to execute an action $s \xra{\Cr(h)} s'$ in 
$(\Cc,\Dd)$, but is always able to execute $s \xra{\Cw(\qcr h)} [s,\acr h, s']$ 
in $(\Cc',\Dd')$.

We can modify the \CDpsystem in order to have a correspondence between maximal
runs of the new system and maximal runs of the \CDsystem.

\begin{lemma}
  \label{lem:CDs}
  There is a $(\Cc'',\Dd'')$-system and a register value $\#$ such that:
  \begin{enumerate}
  \item If $u$ is a (maximal) trace of the \CDsystem, then there exists a 
    (maximal) trace $v$ of the $(\Cc'',\Dd'')$-system with no occurence of 
    $\Dw(\#)$ and such that $\trans(v) = u$.
  \item If $v$ is a (maximal) trace of the $(\Cc'',\Dd'')$-system with no 
    occurence of $\Dw(\#)$, then there exists a (maximal) trace $u$ of the 
    \CDsystem such that $u \in \sexp(\trans(v))$.
  \end{enumerate}
\end{lemma}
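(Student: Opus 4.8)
The plan is to build $(\Cc'',\Dd'')$ from the $(\Cc',\Dd')$-system of Lemma~\ref{lem:CDtoPrime} and Lemma~\ref{lem:PrimetoCD} by adding a ``panic'' mechanism that fires exactly when the original system would be in a deadlock. The key observation is the one already noted after Lemma~\ref{lem:PrimetoCD}: whether a configuration of $(\Cc,\Dd)$ is a deadlock depends only on its \emph{top}, i.e.\ on the control states of the contributors present, the control state of the leader, and the current register value $g$ (which in $(\Cc',\Dd')$ is stored in the leader's state component $x$). So I would let the leader $\Dd''$ nondeterministically decide, whenever it is in a ``clean'' state $[t,g]$ (not in the middle of confirming a contributor request, and with no write of $\#$ yet performed), to \emph{enter a checking phase}: it guesses the set $B$ of contributor control states that are currently present, and verifies this guess by a round of communication --- contributors holding a state $s\in B$ announce $s$ on the register, and the leader reads off each announced state, checking that every announced state lies in its guessed $B$ and (conversely) that $B$ is covered. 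If the guessed pair $(B,t,g)$ is a deadlock of $(\Cc,\Dd)$ --- a property that can be hard-wired into $\Dd''$'s transition table, since there are finitely many such triples up to the relevant projection --- then $\Dd''$ writes the fresh symbol $\#\notin G$ and halts; otherwise the check aborts and the system continues as before. A symmetric small gadget must also kill the spurious deadlocks introduced by $(\Cc',\Dd')$ itself: a contributor stuck in a transitory state $[s,\acr h,s']$ whose acknowledgement never comes should be allowed to time out back to $s$ (or the leader should always be willing to eventually acknowledge, or discard, a pending read request), so that no deadlock of $(\Cc'',\Dd'')$ arises that does not project to a deadlock of $(\Cc,\Dd)$.

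With this construction in place, I would prove the two items as follows. For item~(1): given a (maximal) trace $u$ of $(\Cc,\Dd)$, apply Lemma~\ref{lem:CDtoPrime} to get a trace $u'$ of $(\Cc',\Dd')$ with $\trans(u')=u$, then interpret $u'$ in $(\Cc'',\Dd'')$ using none of the panic moves, so $\Dw(\#)$ never occurs. If $u$ is finite and maximal, then its last configuration $(M,t,g)$ is a deadlock of $(\Cc,\Dd)$; I then append to $u'$ one successful checking phase in $(\Cc'',\Dd'')$ with the correct guess $B=\supp(M)$, which is legal precisely because $(B,t,g)$ is a deadlock, and this phase ends with the leader halted having \emph{not} written $\#$ only if\dots --- wait, that is backwards, so I would instead design the gadget so that a \emph{successful} deadlock-check ends in a halting state \emph{without} $\Dw(\#)$, and $\Dw(\#)$ is the symbol written on a \emph{failed} attempt or on a non-deadlock configuration; the point of having $\#$ at all is only to be a flag the ambient property $\tPp$ can forbid. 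Concretely: the leader writes $\#$ whenever a checking phase discovers that the guessed configuration is \emph{not} a deadlock but the leader nevertheless tries to stop, so that the only way to reach a maximal trace of $(\Cc'',\Dd'')$ without $\Dw(\#)$ is either to run forever or to stop at a genuine deadlock of the original system. Then a finite maximal $u$ lifts to a finite maximal $v$ with no $\Dw(\#)$, and an infinite maximal $u$ lifts to an infinite $v$ (automatically maximal), with no checking phase used at all, so again no $\Dw(\#)$.

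For item~(2): given a maximal trace $v$ of $(\Cc'',\Dd'')$ with no occurrence of $\Dw(\#)$, first erase the checking-phase and time-out moves to obtain a trace $v'$ of $(\Cc',\Dd')$; the absence of $\Dw(\#)$ guarantees that the only checking phase that could have occurred is one that succeeded at a genuine deadlock and was final, and that no contributor time-out was needed in a way that breaks maximality. Apply Lemma~\ref{lem:PrimetoCD} to $v'$ to get a run of $(\Cc,\Dd)$ over some $u\in\sexp(\trans(v'))=\sexp(\trans(v))$. It remains to argue maximality: if $v$ is finite and maximal, the last genuine transition of $v$ is the successful deadlock-check, whose guessed $(B,t,g)$ is a deadlock of $(\Cc,\Dd)$ by construction, and by invariant~\ref{I:eq} of Lemma~\ref{lem:PrimetoCD} the contributor control states present after $u$ are exactly $B$ (here one uses that $v$ ends in a ``clean'' phase so all transitory states have been resolved), so $u$ ends in a deadlock; if $v$ is infinite then so is $u$, hence maximal.

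The main obstacle I expect is getting the bookkeeping of the deadlock check exactly right: the leader must certify \emph{both} that every contributor control state it sees is in the guessed set $B$ and that no contributor control state outside $B$ is present, and the latter (a universally-quantified, ``for all contributors'' statement) cannot be directly observed in this model --- it has to be handled by the usual parametrized trick, namely that it suffices to let \emph{each} contributor, once, optionally announce its current state and, if the leader's guess is wrong, trigger the non-deadlock outcome (and hence force $\Dw(\#)$ on any attempt to stop). Making this interact correctly with the transitory states $[s,\acr h,s']$ of $(\Cc',\Dd')$ --- which must all be flushed before a meaningful deadlock check can occur, and which themselves must not create phantom deadlocks --- is the delicate part; the rest is a routine combination of Lemmas~\ref{lem:CDtoPrime} and~\ref{lem:PrimetoCD} with the observation that deadlock is a property of the top of a configuration.
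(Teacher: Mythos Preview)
Your high-level plan---extend $(\Cc',\Dd')$ with a ``panic'' mechanism so that the only way to halt without $\Dw(\#)$ is at a genuine deadlock of $(\Cc,\Dd)$---matches the paper's, but your concrete implementation is more complicated than necessary, and the complication you yourself flag (the leader must certify a \emph{universal} property of the contributors) is precisely where the paper does something different and cleaner.

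In the paper, the leader never guesses the set $B$. When $\Dd''$ is in $[t,g]$ and $t$ is a $g$-deadlock of $\Dd$ (a purely local test), she may write the \emph{plain} value $g$ to the register and move to a halting state $t_f$. Symmetrically, each contributor may end its last acknowledged transition in a halting state $s'_f$ instead of $s'$, allowed only when $s'$ has no outgoing write in $\Cc$. After the leader writes $g$: any contributor still in an original state $s$ or a transitory state $[s,a,s']$ can read $g$ (since $g\in G$, not a barred symbol) and then write $\#$; any contributor in $s_f$ such that $s$ has an outgoing $\Cr(g)$ in $\Cc$ also writes $\#$; and the leader in $t_f$, upon reading $\#$ or any pending request, writes $\#$. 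So no set $B$ is ever communicated: each process checks only its own local deadlock condition, and any single witness of a non-deadlock raises the alarm. This dissolves the universal-quantification obstacle rather than working around it.

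Your ``time-out back to $s$'' mechanism for transitory states is also not what the paper does, and as stated it threatens maximality: if a contributor can always abandon a pending request and return to $s$, then $(\Cc'',\Dd'')$ has no finite maximal traces at all (request/time-out can cycle forever). The paper instead lets contributors \emph{terminate} from transitory states into $s'_f$, and a contributor still sitting in $[s,a,s']$ after the leader halts is exactly one of the processes that triggers $\#$. This also gives you for free the fact you needed in item~(2), that after a successful halt all contributors are in ``clean'' states, so Lemma~\ref{lem:PrimetoCD} applies with $N$ trivial and the support of $M$ read off directly.
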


\begin{proof}
The result already holds for infinite runs with $(\Cc',\Dd')$.
The idea is to modify the \CDpsystem so that the leader can guess the end of
a finite run corresponding to a maximal run of the \CDsystem. Whenever her guess
is wrong, she can detect it, and write some special value $\#$ to the register.

We start by definying $\Dd''$. 
For a value $g\in G$, we call a state $t$ of $\Dd$ a
\emph{$g$-deadlock} if from $t$ there is no outgoing transition
labeled by $\Dr(g)$ or by a write of some value. 
The leader $\Dd''$ is obtained by adding states $t_f$, $t_\#$ and $t'_\#$ to
$\Dd'$, for all states $t$ of $\Dd$, 
and transitions
\begin{align*}
  [t,g] \xra{\Dw(g)} t_f
  & \quad \text{for all $g\in G$, and $g$-deadlock states $t$ of $\Dd$}\\
  t_f \xra{\Dr(a)} t_\# \xra{\Dw(\#)} t'_\#
  & \quad \text{for all } a \in \set{\#} \cup \set{\qcr g, \qcw g \mid g \in G}
\end{align*}

Similarly, $\Cc''$ is obtained by adding to $\Cc'$ states $s_f$,
$s_\#$, $s'_\#$, for every $s \in S$, and transitions:
\begin{align*}
  [s,a,s'] \xra{\Cr(a)} s'_f 
  & \quad \text{if $\Cc$ cannot do a write from $s$} \\
  s_f \xra{\Cr(g)} s_\# & \quad \text{if $\Cc$ can do $\Cr(g)$ from $s$}\\
  [s,a,s'] \xra{\Cr(g)} s_\# 
  & \quad \text{for all }  g \in G\\
  s \xra{\Cr(g)} s_\# 
  & \quad \text{for all } g \in G\\
  s_\# \xra{\Cw(\#)} s'_\#
\end{align*}
Intuitively, an error $\#$ can occur in two situations.
The first is when  the leader has guessed the end of 
the run by moving to a state $t_f$, but some contributor hasn't:
he is not in a state of the form $s_f$.
The second is when the guessed end configuration $([s^1_f, \ldots,
s^n_f],t_f,g)$ does not correspond to a deadlock, i.e., $\Cc$ can read
$g$ from some state $s^i$.

Observe that the \CDbsystem has the same set of infinite traces 
as the \CDpsystem:
in an infinite run, states $t_f, t_\#, t'_\#$, and thus also $s_\#, s'_\#$ are 
never reached, and states $s_f$ can be replaced by $s$.

From a finite maximal run of the \CDsystem, we can construct a finite maximal
run of the \CDbsystem with no occurence of $\Dw(\#)$, 
as in Lemma~\ref{lem:CDtoPrime}, but having the contributors
move to states $s_f$ instead of $s$ in their last transitions, and adding a 
transition $[t,g] \xra{\Dw(g)} t_f$ at the end of the run.

Conversely, consider a finite maximal run with labeling $v$ of the 
$(\Cc'',\Dd'')$-system, with no occurence of $\Dw(\#)$.
The last action of the leader must be some $[t,g] \xra{\Dw(g)} t_f$,
because she can always do a write from a state of the form $[t,g]$.
Due to the transitions we have added in $\Dd'$, the value of the
register in the final configuration cannot be $\qcr h$, $\qcw h$ nor $\#$.
The value is then $g$ because the last action of the leader was $w(g)$
and contributors can write only $\qcr h$, $\qcw h$ or $\#$.
Due to transitions added above, none of the contributors in the \CDbsystem
can be in a state of the form $s \in S$ or $[s,a,s']$. 
So all the contributors are in states of the form $s_f$ such that $s$
is a $g$-deadlock.
By construction, writing $v = v_1 \Dw(g) v_2$ and $([t^1_f,\ldots,t^n_f],t_f,g)$
the end configuration, there is also a run 
$(n[s_\init],[t_\init,g_\init],g_\init) \xra{v_1v_2} 
([s^1, \ldots, s^n], [t,g], g')$ in the \CDpsystem.
By Lemma~\ref{lem:PrimetoCD}, this leads a run
$(n[s_\init],t_\init,g_\init) \xra{u} ([s^1, \ldots, s^n], t,g)$ in the \CDsystem,
for some $u \in \sexp(\trans(v))$. 
It is maximal since from none of $t$, $s^1,\dots,s^n$ it is possible
to do a write or a read of $g$.
\end{proof}

We can now prove Theorem~\ref{th:transP}. 

\medskip

\begin{proofof}{Thm.~\ref{th:transP}}
For $\tPp$ we take the property
\begin{equation*}
  \tPp=\set{u'' : \text{$\trans(u'')\in\Pp$ and there is no $\Dw(\#)$ in $u''$}}
\end{equation*}
If $\Pp$ is regular, so is $\tPp$. Similarly if $\Pp$ is defined by an
LTL formula. 

Let $(\tCc,\tDd) :=(\Cc'',\Dd'')$ as in Lemma~\ref{lem:CDs}.
If the \CDsystem has a maximal trace $u \in \Pp$, by Lemma~\ref{lem:CDs}, the
\CDtsystem has a maximal trace $u''$ with no occurence of $\Dw(\#)$
and such that $\trans(u'') = u$.

If the $(\tCc,\tDd)$-system has a maximal trace $u''$ such that 
$u'' \in \tPp$, then by Lemma~\ref{lem:CDs} and since $u''$ contains no
occurrence of $\Dw(\#)$, there exists a maximal trace $u$ in the \CDsystem
such that $u \in \sexp(\trans(u''))$.
We have $\trans(u'') \in \Pp$, by the definition of $\tPp$.
Since $\Pp$ is $\Cc$-expanding, $u$ is also in $\Pp$.
\end{proofof}

A further byproduct of the simulation technique used in
Theorem~\ref{th:transP} is that we can simulate a \CDsystem\ with $m$
shared registers by one with a single shared register:

\begin{theorem}\label{th:regs}
    Let $m$ be fixed. For every \CDsystem with $m$ registers, and every $\Cc$-expanding
  regular (resp.~LTL) property $\Pp \subseteq (\Sc \cup \Sd)^\infty$,
  there exists a \CDtsystem with one register and a regular (resp.~LTL)
  property $\tPp \subseteq (\tSd)^\infty$, where $\tSd$ is the input alphabet
  of $\tDd$, such that:
  \begin{quote}
    the \CDsystem has a maximal trace $u \in \Pp$ iff the \CDtsystem has a
    maximal trace $u'$ whose projection on $\tSd$ is in $\tPp$.
  \end{quote}
  Moreover, the \CDtsystem and the property $\tPp$ are
  effectively computable in polynomial time. 
\end{theorem}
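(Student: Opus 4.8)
The plan is to iterate the single-register simulation idea of Theorem~\ref{th:transP} register by register, or — more cleanly — to adapt the construction of Section~\ref{sec:simplify} so that the leader owns \emph{all} $m$ register values in its state and the single physical register of the new system carries only \emph{requests} and \emph{acknowledgements}, now tagged with the index of the register being accessed. Concretely, I would set the register values of the $(\tCc,\tDd)$-system to be $G' = \{\,\overline{\texttt{w}}_{\texttt i}?(g),\ \overline{\texttt{r}}_{\texttt i}?(g),\ \overline{\texttt{w}}_{\texttt i}(g),\ \overline{\texttt{r}}_{\texttt i}(g),\ \texttt{w}_{\texttt i}(g),\ \texttt{r}_{\texttt i}(g) : 1 \le i \le m,\ g \in G_i\,\} \cup \{g'_\init\}$, exactly mirroring Eq.~\eqref{eq:Gp} but with a subscript $\texttt i$ recording which of the $m$ registers the operation refers to. The leader state becomes $[t, x_1, \dots, x_m]$, where $x_i$ stores the current value of register $i$ (or, transitorily, a pending acknowledgement for register $i$); a leader move $t \act{\texttt r_i(g)} t'$ of the original system is simulated by $[t, \dots, x_i, \dots] \act{\Dw(\texttt r_{\texttt i}(g))} [t', \dots, x_i, \dots]$ provided $x_i = g$, and a write $t \act{\texttt w_i(g)} t'$ by $[t, \dots, x_i, \dots] \act{\Dw(\texttt w_{\texttt i}(g))} [t', \dots, g, \dots]$. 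A contributor operation on register $i$ is handled by the request/confirm handshake of Lemma~\ref{lem:CDtoPrime}, with the leader applying the operation to the component $x_i$ only.

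The next step is to re-run the correspondence proofs. The analogue of Lemma~\ref{lem:CDtoPrime} (soundness of the simulation, one direction) is routine: replace each original step by its handshake or single-step encoding as above, maintaining the invariant that the $i$-th component $x_i$ of the leader equals the value of register $i$ in the source run between handshakes. The analogue of Lemma~\ref{lem:PrimetoCD} (the reverse direction) requires the same bookkeeping multiset $N$ counting contributors in transitory states $[s, a, s']$ — nothing changes here except that $a$ now ranges over operations on any of the $m$ registers, and the invariants \ref{I:g}--\ref{I:N} are checked component-wise. Maximality is recovered exactly as in Lemma~\ref{lem:CDs}: the leader guesses the end of a finite run, moves each contributor to a ``finished'' state $s_f$, and writes a special value $\#$ whenever the guess is wrong (a contributor still wants to move, or the guessed configuration is not a deadlock because $\Cc$ can still read some $x_i$). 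The target property is $\tPp = \{u'' : \trans(u'') \in \Pp,\ \Dw(\#) \text{ does not occur in } u''\}$, which is regular (resp.\ LTL) if $\Pp$ is, with $\trans$ now decoding $\Dw(\texttt w_{\texttt i}(g))$ to $\texttt w_i(g)$, $\Dw(\texttt r_{\texttt i}(g))$ to $\texttt r_i(g)$, $\Dw(\overline{\texttt w}_{\texttt i}(g))$ to $\Cw_i(g)$, and so on, and deleting contributor actions and leader reads.

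Finally I would verify the size and effectiveness claims: $|G'| = O(m|G|)$, the leader state space grows by a factor $|G|^m$ which is polynomial for \emph{fixed} $m$ (this is where the hypothesis ``$m$ fixed'' is used, exactly as in the statement), the contributor automaton grows only by a constant factor, and the translation of $\Pp$ to $\tPp$ is a trivial relabelling composed with intersection with the regular (resp.\ LTL-expressible) ``no $\Dw(\#)$'' condition, all computable in polynomial time. The main obstacle is bookkeeping rather than conceptual: one must be careful that a single leader acknowledgement $\Dw(\overline{\texttt w}_{\texttt i}(g))$ may discharge several identical contributor requests on register $i$ — this is precisely why the conclusion is stated up to $\sexp$ and why $\Pp$ must be assumed $\Cc$-expanding — and that handshakes on different registers may interleave arbitrarily, so the invariant must be maintained per component and the proof that a wrong end-guess is always detectable must account for a contributor being mid-handshake on \emph{any} of the $m$ registers. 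Since the handshake for register $i$ is self-contained (it uses only the values tagged $\texttt i$), these interleavings do not interfere, and the argument of Lemma~\ref{lem:CDs} goes through mutatis mutandis.
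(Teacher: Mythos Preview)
Your proposal is correct and follows essentially the same approach as the paper: the appendix constructs $(\Cc',\Dd')$ with register values $G' = \{\qcri g, \qcwi g, \acri g, \acwi g, \gdri g, \gdwi g : 1 \le i \le m,\ g \in G\}$, leader states $[t,x_1,\ldots,x_m]$ storing all $m$ register values (with at most one $x_i$ a pending acknowledgement), and the same request/acknowledge handshake tagged by register index, then invokes the analogues of Lemmas~\ref{lem:CDtoPrime}, \ref{lem:PrimetoCD}, and~\ref{lem:CDs} exactly as you outline. Your identification of the $|G|^m$ blow-up as the place where ``$m$ fixed'' is used, and of $\Cc$-expansion as what absorbs the $\sexp$ slack from a single acknowledgement discharging multiple identical requests, matches the paper precisely.
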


%%% Local Variables:
%%% mode: latex
%%% TeX-master: "m"
%%% End:

\section{Conclusion}

We studied verification questions for parametrized, asynchronous,
shared-memory pushdown systems consisting of a leader process and
arbitrarily many, anonymous contributor processes, as in Hague's
model~\cite{hag11fsttcs,EGM16}. First, we answered an open question of \cite{DEGM15}, by
showing that the complexity of checking liveness in this model is
\PSPACE-complete.  Then we established the complexity of checking
universal reachability.  The developed techniques allowed us to
consider a more general problem, of verifying regular properties that
refer to both leader and contributors, but are stutter-invariant
w.r.t.~contributor actions. We have shown that this problem is
decidable and \NEXPTIME-complete.

%%% Local Variables: 
%%% mode: latex
%%% TeX-master: "m"
%%% End: 

\bibliographystyle{abbrv}
\bibliography{biblio}

\newpage
\appendix

\section{Proof of Lemma~\ref{lem:Ccf}}
\label{sec:bounded-contributors}

\begin{proofsketch}
  The idea is to ``distribute'' a run of a contributor into several runs
  with smaller stacks.
  
  Consider a finite run $\rho$ of some copy of $\Cc$ in which the effective stack-height goes
  above $N$. Let $s$ be one of the configurations in the run with effective 
  stack-height greater than $N$. We write $A_1 \cdots A_nA_{n+1}A_{n+2}$
  its effective stack, and $A_1 \cdots A_n \a$ its total stack.
  All symbols of the effective stack, except possibly $A_{n+2}$, are
  eventually popped in the run. In particular, $A_1, \ldots, A_n$ are popped
  strictly before the last action in the run.
  We consider the positions just after the symbols $A_1, \ldots, A_n$ are 
  last pushed before reaching configuration $s$ (resp.~popped after $s$).
  So we have:
  \begin{multline*}
  \rho : (\qic \Aic) \xra{u_n} 
  (p_nA_n\a) \xra{u_{n-1}} (p_{n-1}A_{n-1}A_n\a) \xra{u_{n-2}} \\ \cdots
  \xra{u_1} (p_1A_1 \cdots A_n\a)
  \xra{v_1} (r_1A_2\cdots A_n\a) \xra{v_2} \cdots \xra{v_n} (r_n\a)
  \xra{v} (p_f\a_f)
  \end{multline*}
  where $u_1, \ldots, u_n, v_1, \ldots, v_n, v \in \Sc^+$, and
  in the part $(p_iA_i\cdots A_n\a) \xra{u_{i+1} \cdots u_1} (p_1A_1 \cdots A_n\a)
  \xra{v_1 \cdots v_i} (r_iA_{i+1} \cdots A_n)$, the bottom $A_i \cdots A_n \a$ 
  of the stack is never modified except by the last action, which pops $A_i$.

  Since $n > 2|\Qc|^2|\Gc|$, there must be three indices 
  $1 \le i < j < k \le n$ such that 
  \[(p_i,A_i,r_i) = (p_j,A_j,r_j) = (p_k,A_k,r_k) \, . \]
  We can then construct two smaller runs $\rho_1$ and $\rho_2$ of
  $\Cc$, by removing respectively the parts $u_{j-1} \cdots u_{i}$ and 
  $v_{i+1} \cdots v_{j}$, or $u_{k-1} \cdots u_{j}$ and $v_{j+1}
  \cdots v_{k}$, from $\r$.
  That is,
  \begin{multline*}
    \rho_1: (\qic \Aic) \xra{u_n \cdots u_{j}} (p_jA_j \cdots A_n\a) 
    \xra{u_{i-1} \cdots u_1} (p_1A_1 \cdots A_i A_{j+1} \cdots A_n\a) \\
    \xra{v_1 \cdots v_i} (r_j A_{j+1} \cdots A_n\a)
    \xra{v_{j+1} \cdots v_n v} (p_f\a_f)
  \end{multline*}
  and
  \begin{multline*}
    \rho_2: (\qic \Aic) \xra{u_n \cdots u_{k}} (p_kA_k \cdots A_n\a)
    \xra{u_{j-1} \cdots u_1} (p_1A_1 \cdots A_j A_{k+1} \cdots A_n\a) \\
    \xra{v_1 \cdots v_j} (r_j A_{k+1} \cdots A_n\a)
    \xra{v_{k+1} \cdots v_n v} (p_f\a_f) \, .
  \end{multline*}
  Observe that 
  \begin{itemize}
  \item any transition in $\rho$ can always be associated with a transition of
    $\rho_1$, a transition of $\rho_2$, or both.
  \item since $v \neq \e$, both $\rho_1$ and $\rho_2$ end in the same
  configuration as $\rho$.
  \end{itemize}

  \medskip

  In a run of a \CDsystem, a contributor executing $\rho$ can thus be
  replaced by two contributors executing respectively $\rho_1$ and
  $\rho_2$. They progress together as in the original run on the
  common parts, and for the other parts, one of the two contributor waits
  while the other performs the actions of the original run.
  We repeat this until no contributor ever uses an effective stack of height
  greater than $N$: at each step, we replace one contributor by two
  contributors, but performing strictly shorter runs, so this procedure
  terminates.
\end{proofsketch}

\section{Proof of Lemma~\ref{lem:Clsupp}}
\label{sec:proof-liveness}

  For the left to right direction, we apply Lemma~\ref{lem:loop} to obtain
  a run of the $(\Ccf,\Dd)$-system of the form $(M,qA,g) \xra{u} (M,qA\a',g)$,
  where $(M,qA\a,g)$ is reachable for some $\a$, and $u$ is of the form
  $u = u' \dr(g)$ or $u = u' \dw(g)$, containing some occurrence of 
  $\top$.
  
  We write $M = [p_1, \ldots, p_n]$.
  There are words $v_0 \in \Sd^*$, $v_1, \ldots, v_n \in \Sc^*$
  such that 
  \begin{align*}
    & u \in v_0 \shuffle v_1 \shuffle \cdots \shuffle v_n \quad (v
    \text{ is a shuffle of } v_0,\dots,v_n) , \\
    & qA \xra{v_0} qA\a' \text{ in } \Dd \, ,
  \end{align*}
  and for some permutation $\sigma$ of $\set{1,\ldots,n}$, 
  for all $1 \le i \le n$, 
  \[
  p_i \xra{v_i} p_{\sigma(i)} \text{ in } \Ccf \, .
  \]
  By repeating the run if necessary, we can assume that $\sigma$ is the
  identity. More precisely, for $k$ such that $\sigma^k$ is the identity,
  we can replace $u$ by $(u)^k$, $\a'$ by $(\a')^k$, and $(v_i)$ by $(v_i)^k$ 
  for all $i$.
  
  We let $\overline u$ be the word obtained by replacing each first occurrence 
  of $\Cw(h)$ in $u$ by $\n(h)\Cw(h)$, and $v = \overline u|_{\S_{D, \n}}$.
  Note that $\lst(v) = g$, and $v|_{\Sd} = v_0$.
  Using the fact that $qA \xra{v_0} qA\a'$ in $\Dd$, we can show by induction 
  on the length of $v$ that there is a run of the form 
  $(\es,qA,g) \xra{v} (K,qA\a',g)$ in $\Ddk$. 
  All we need to show is that each read in $v$ is enabled, i.e. that for all
  prefix $v'\dr(h)$ of $v$, either $\lst(v') = h$ or $\n(h)$ occurs in $v'$.
  Consider a prefix $v'\dr(h)$ for which $\lst(v') \neq h$. Then in the
  corresponding prefix $\overline u' \dr(h)$ of $\overline u$ 
  (i.e. with $\overline u'|_{\S_{D,\n}} = v'$), there must be an occurrence of 
  $\Cw(h)$ in $\overline u'$, and the first such occurrence is preceded by 
  $\n(h)$. So $\n(h)$ also occurs in $v'$.

  Denote by $h_1, \ldots, h_m$ the values $h$ such that $\Cw(h)$ occurs
  in $u$, ordered according to their first occurrences.
  For all $1 \le j \le m$, we let $i_j$ be the index such that the first
  occurrence of $\Cw(h_j)$ in $u$ comes from $v_{i_j}$.
  We now show that $v$ is \Clsupp from $(p_{i_1}, \ldots, p_{i_m})$.
  
  For all $1 \le j \le m$, we must find a word $u^j$ such that
  \begin{align*}
    & u^{j} \in \left( \S_{C,D,\n}^*\S_{D,\n} \right) \cap 
          \left( \S_{C,D,\n}^* \n(h_i)\Cw(h_i) \S_{C,D,\n}^* \right) \\
    & u^{j}|_{\S_{D,\n}} = v \\
    & (\es,p_{i_j},g) \xra{u^j} (\set{h_1,\ldots,h_m},p_{i_j},g) 
          \text{ in } \Ccfk \, .
  \end{align*}
  We let $u^j$ be the restriction of $\overline u$ to positions coming
  from $v_0$, $v_{i_j}$ and positions of the $\n(h_k)$.
  By definition, $u^{j} \in \S_{C,D,\n}^* \n(h_i)\Cw(h_i) \S_{C,D,\n}^*$
  and $u^{j}|_{\S_{D,\n}} = v$.
  Moreover, since $u$ ends with $\dw(g)$ or $\dr(g)$, so does $u^j$,
  and $\lst(u^j) = g$.
  We show that $(\es,p_{i_j},g) \xra{u^j} (\set{h_1,\ldots,h_m},p_{i_j},g)$ in 
  $\Ccfk$ similarly to what we did for $v$, using the fact that
  $u^j|_{\S_{\Cc}} = v_{i_j}$ and $p_{i_j} \xra{v_{i_j}} p_{i_j}$ in $\Ccf$.

  \bigskip

  For the right to left direction, consider a reachable configuration
  $(M,qA\a,g)$ and a word $v= v_1\n(h_1) \cdots v_m\n(h_m) v_{m+1}$ satisfying
  conditions~(\ref{item:b}) and~(\ref{item:c}) of
  Lemma~\ref{lem:Clsupp}, and let $\r$ be the run in
  condition~(\ref{item:b}). 
  For all $i$, there exists a word
  
\[u^i = u^i_1\n(h_1) \cdots u^i_i\n(h_i)\Cw(h_i) \cdots u^i_m \n(h_m)
u^i_{m+1}
\]
  such that $u^i_k|_{\S_{\Dd, \n}} = v_k$, and $\Ccfk$ has a run $\rho^i$ 
  of the form
  \begin{multline*}
    (\es, p_i, g) 
    \xra{u^i_1 \n(h_1) \cdots u^i_i \n(h_i)} (\{h_1, \ldots, h_i\}, p'_i, h_i)
    \xra{\Cw(h_i)} (\{h_1, \ldots, h_i\}, p''_i, h_i) 
    \\
    \xra{u^i_{i+1}\n(h_{i+1}) \ldots u^i_{m+1}} (\{h_1, \ldots, h_m\}, p_i, g) \, .
  \end{multline*}

  We denote by $n_i$ the total number of capacity reads of value $h_i$
  occurring either in $\rho$ or in one of the
  $\rho^j$. We show the following lemmas, that
  describe how to construct an ultimately periodic run of the
  $(\Ccf,\Dd)$-system. The proofs of these lemmas ressemble the one
  of Lemma~4 in \cite{LMW15}, but are more involved since we are not
  only interested in reachability and we need to track precisely the
  number of contributors in  a given state.

  \begin{lemma}
    \label{lem:run1}
    There is a run of the $(\Ccf,\Dd)$-system of the form
    \[
    \textstyle
    \left(\sum_{i=1}^m (n_i + 1)\cs{p_i}, q A, g \right) 
    \xra{*}
    \left(\sum_{i=1}^m \cs{p_i} + n_i\cs{p''_i}, q A \a', g \right) \, .
    \]
  \end{lemma}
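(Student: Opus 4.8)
The plan is to realize, inside the genuine (capacity-free) $(\Ccf,\Dd)$-system, the leader run $\r$ of condition~(\ref{item:b}) together with the $m$ supporting runs $\rho^1,\dots,\rho^m$ \emph{simultaneously}, all driven by their common $\S_{D,\n}$-skeleton $v=v_1\n(h_1)\cdots v_m\n(h_m)v_{m+1}$. For each $i$ the $n_i+1$ copies of $p_i$ play two roles: one \emph{main} copy follows $\rho^i$ all the way back to $p_i$, while the $n_i$ \emph{enabler} copies travel in lockstep with the main copy along the prefix of $\rho^i$ up to $p'_i$ and then stay there. At the skeleton position of the move $\n(h_i)$, the main copy of contributor $i$ performs the unique $\Cw(h_i)$ of $\rho^i$ (the transition $p'_i\xra{\Cw(h_i)}p''_i$); this is the first genuine write of $h_i$ in the run being built, and it makes the $\n(h_i)$ moves of $\r$ and of the other $\rho^j$ vacuous, so they are simply dropped. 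Afterwards the main copy continues $\rho^i$ from $p''_i$. Each of the $n_i$ \emph{capacity reads} of a value $h_i$ occurring in $\r$ or in one of the $\rho^j$ is then handled by choosing an enabler copy of type $i$ still sitting at $p'_i$, letting it do $\Cw(h_i)$ (so it moves to $p''_i$ and reinstalls $h_i$ in the register) and immediately performing the read as an ordinary read. This consumes exactly the $n_i$ enablers of type $i$, which end in $p''_i$, and leaves the main copy in $p_i$; the leader, having executed $v$, ends in $qA\a'$ with register value $\lst(v)=g$. Hence the run ends in $\sum_i\cs{p_i}+n_i\cs{p''_i}$, as required.

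Concretely I would construct the interleaving by induction on the length of $v$, maintaining the invariant that, just before the $k$-th skeleton action, the configuration of $(\Ccf,\Dd)$ agrees with the joint state reached after $k$ skeleton steps by $\r$ and by each $\rho^j$ with the capacity set $K$ replaced by ``the values already written'', the register holds the right value, and for each $i$ the enabler copies of type $i$ still at $p'_i$ are exactly as many as the not-yet-served capacity reads of $h_i$ in the remainder. Skeleton actions that are leader moves or genuine contributor moves are copied verbatim (the main copy, and while still before $p'_i$ also the enablers of type $i$, taking the matching transition of $\rho^i$); a move $\n(h_i)$ is realized by the main copy's $\Cw(h_i)$ as above; and whenever, inside $\r$ or a $\rho^j$, one reaches a capacity read of a value $h_l$, one first inserts the block ``$\Cw(h_l)$ by a fresh enabler of type $l$, then the read''. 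Since the values $h_1,\dots,h_m$ are processed in increasing order and the prefix of $\rho^i$ up to $p'_i$ uses only capacity reads of $h_1,\dots,h_{i-1}$, an enabler of the needed type is always already waiting at the appropriate $p'_l$; and since the capacity reads of $h_l$ inside that prefix are performed in lockstep by the main copy and all $n_i$ enablers of type $i$ at once, each such read costs only one enabler of type $l$. A short count then shows that the enablers of type $l$ consumed over the whole run number exactly $n_l$ — which is precisely why $n_l+1$ initial copies of $p_l$ suffice.

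I expect the main obstacle, relative to the reachability argument of Lemma~4 of~\cite{LMW15} on which this is modelled, to be that here one must track the contributor \emph{multiset} exactly rather than only the set of occupied states: every enabling $\Cw(h_l)$ simultaneously overwrites the register and moves one contributor from $p'_l$ to $p''_l$, so these writes have to be inserted in tight ``write-then-read'' pairs, and one must check both that no other process needs a different register value in between and that after each move the multiset equality in the invariant is restored (an identity analogous to invariant~(\ref{I:eq}) in Lemma~\ref{lem:PrimetoCD}). Verifying that the supply of enablers never runs dry — equivalently, that the bookkeeping balances to exactly $n_l$ for every $l$ — and that the lockstep discipline on each group of enablers is compatible with the interleaving of the remaining runs, will be the technical heart of the argument.
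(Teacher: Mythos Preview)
Your proposal is correct and follows essentially the same approach as the paper: one \emph{main} copy per $\rho^i$ executed in full, $n_i$ \emph{enabler} copies travelling in lockstep up to $p'_i$ and thereafter spent one-by-one via a tight ``$\Cw(h_l)$ then read'' block to service each capacity read. The paper carries out the bookkeeping at a finer grain (inducting over positions of an explicitly constructed shuffle $\bar w$, and splitting contributor blocks into register-read prefixes $y^i_{k,j}$ and write/capacity-read suffixes $z^i_{k,j}$ to justify the ordering you flag as ``no other process needs a different register value in between''), but the underlying construction and the invariant are the same as yours.
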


  \begin{lemma}
    \label{lem:run2}
    There is a run of the $(\Ccf,\Dd)$-system of the form
    \[
    \textstyle
    \left(
    \sum_{i=1}^m \left((n_i + 1)\cs{p_i} + n_i\cs{p''_i}\right), q A, g 
    \right) \xra{u} 
    \left(\sum_{i=1}^m (n_i + 1)\cs{p_i} + n_i\cs{p''_i}, q A \a', g \right)
    \]
    for some $u \in \S_{\Cc,\Dd}^*\top\S_{\Cc,\Dd}^*$.
  \end{lemma}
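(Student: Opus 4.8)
The plan is to exhibit $u$ as a carefully scheduled run of the $(\Ccf,\Dd)$-system, obtained by interleaving one copy of the leader run underlying $\rho$ (played by $\Dd$) together with, for each value $h_i$: a \emph{main} thread that executes $\rho^i$ in full ($p_i\to p'_i\to p''_i\to p_i$, the middle step being $\Cw(h_i)$); a pool of $n_i$ \emph{fresh} helper threads that shadow, in lockstep with the main thread, the prefix of $\rho^i$ up to $p'_i$, each of which will later fire exactly one $\Cw(h_i)$ to enable one capacity read of $h_i$ and then park in $p''_i$; and the $n_i$ \emph{old} helper threads supplied by the initial configuration already in state $p''_i$, which shadow the suffix $p''_i\to p_i$ of $\rho^i$ behind the main thread. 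The choice of $n_i$ is precisely what makes the pool of fresh helpers large enough (and forces all of them to be used), since $n_i$ equals the total number of capacity reads of $h_i$ across $\rho$ and all the $\rho^j$. I would also recall at the outset that, unlike in $\Ddk$ and $\Ccfk$, a real contributor write carries no capacity precondition, so the $\Cw$-moves of each thread are taken by that thread itself and only the capacity \emph{reads} need servicing; this is what prevents the servicing from cascading uncontrollably.

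Next I would fix the schedule. It processes $v=v_1\n(h_1)v_2\cdots\n(h_m)v_{m+1}$ from left to right, with all threads advancing synchronously on $\S_{D,\n}$-letters — legitimate because $u^i|_{\S_{D,\n}}=v$ for every $i$ — and each thread performing its own contributor actions between two consecutive such letters. Leader letters of $v$ are played by $\Dd$; a capacity read of $h_j$ occurring in $\rho$ or in one of the $\rho^i$ is enabled by first letting one still-unused fresh helper for $h_j$ (which is waiting in $p'_j$) take $\Cw(h_j)$ into $p''_j$, a single such write sufficing for an entire lockstep group of readers since reads do not further change the register value. At the marker $\n(h_j)$ the main thread for $h_j$ — which, together with its fresh helpers, has by then shadowed the prefix $u^j_1\n(h_1)\cdots u^j_j\n(h_j)$ and is therefore in $p'_j$ — takes $\Cw(h_j)$ into $p''_j$; this single move realises both the first contributor write of $h_j$ (the $\n(h_j)$ event) and the distinguished $\Cw(h_j)$ of $u^j$. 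From then on the main thread for $h_j$ runs the suffix $u^j_{j+1}\n(h_{j+1})\cdots u^j_{m+1}$, and its $n_j$ old helpers, idle until now, replay it in lockstep behind it. Since $v$ is read left to right, the group for $h_k$ reaches $p'_k$ before any capacity read of $h_k$ can be scheduled, so its fresh helpers are in place in time; I would then check that every read along $v$, $\rho$ and the $\rho^i$ is enabled when scheduled, which reduces to the fact that a capacity read of $h_k$ only occurs after $\n(h_k)$ (hence after $h_k$ has actually been written) together with the counting guarantee from the choice of $n_k$.

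The accounting at the end yields exactly the required configuration: for each $i$ the main thread for $h_i$ is back in $p_i$ (one copy), its $n_i$ old helpers have completed the suffix and are in $p_i$ as well ($n_i$ copies), and its $n_i$ fresh helpers sit in $p''_i$ — total $(n_i+1)[p_i]+n_i[p''_i]$; the register returns to $g$ and the leader traverses $qA\xra{v|_{\Sd}}qA\a'$, just as in $\rho$; and since the leader action $\top$ occurs in $v$ it occurs in $u$, so $u\in\S_{\Cc,\Dd}^*\,\top\,\S_{\Cc,\Dd}^*$.

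The main obstacle is precisely this scheduling/enabling bookkeeping: making rigorous that $\rho$ and all the $\rho^i$ can be run synchronously, that at every step a contributor is available in the right state to supply each capacity read, and that the multiplicities of $[p_i]$, $[p'_i]$ and $[p''_i]$ evolve exactly as stated — in particular that one $\Cw(h_k)$ may be shared by a whole lockstep group but cannot be reused once the register next changes, so the order in which fresh helpers are spent has to be tracked. This is the same combinatorial core as in Lemma~\ref{lem:run1} (and as Lemma~4 of~\cite{LMW15}), here refined to keep exact counts of contributors; once it is in place, the statement follows by the accounting above.
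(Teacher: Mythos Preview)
Your proposal is correct and follows essentially the same approach as the paper: one main thread per $i$ executes $\rho^i$ in full, $n_i$ fresh helpers shadow it up to $p'_i$ and then each fires one $\Cw(h_i)$ to service a capacity read before parking in $p''_i$, and the $n_i$ old helpers already in $p''_i$ wait for the main thread and then shadow its suffix back to $p_i$. The paper formalises exactly this by writing down the interleaved word $w'$ explicitly (with every block $y^i_{k,j}$, $z^i_{k,j}$ replicated $n_i+1$ times, for \emph{all} $i$ this time, reflecting that after $\theta_i$ the main thread plus the $n_i$ old helpers are in lockstep) and stating an invariant on the multiset at position $\theta$, to be checked by the same case analysis as in Lemma~\ref{lem:run1}.
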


  Lemmas~\ref{lem:run1} and \ref{lem:run2} show that there is a B\"uchi run 
  starting from $\left(M', q A \a, g \right)$ in the $(\Ccf,\Dd)$-system, 
  for any $M' \ge \sum_{i=1}^m (2n_i +1) \cs{p_i}$.
  Since $(M,qA\a,g)$ is reachable in the $(\Ccf,\Dd)$-system and
  $M \ge [p_1,\ldots,p_n]$,
  by dupplicating the runs of the contributors ending in $p_1,\ldots,p_n$,
  we obtain that the configuration $\left(M', qA \a, g \right)$ where
  $M' = M + \sum_{i=1}^m 2n_i \cs{p_i}$ is also reachable. 
  Hence, the $(\Ccf,\Dd)$-system has a B\"uchi run.

\medskip
  
  \begin{proofof}{Lemma~\ref{lem:run1}}
  The run is obtained as follows. The leader behaves as in $\rho$,
  and for all $i$, one of the contributors, synchronized with the leader,
  behaves as in $\rho^i$. We call this contributor the \emph{main copy}
  of $\rho^i$.
  The remaining contributors starting in $p_i$ follow the main copy of $\rho^i$
  up to reaching $p'_i$, and then stop. Then, each time some process needs to 
  read the value $h_i$, one of the contributors waiting in state $p'_i$
  takes the transition $p'_i \xra{\Cw(h_i)} p''_i$.
  This is defined more precisely below.

  \smallskip

  First, we introduce some notations.
  For any $a \in \S_{\Dd,\Cc,\n}$, and $n \in \Nat$, 
  we denote by $\nst a n$ the word consisting of $n$ $a$'s, 
  and for any word $w = a_1 \cdots a_j$, we let 
  $\nst w n = \nst {a_1} n \cdots \nst {a_j} n$. Recall that $v= v_1\n(h_1)
  \cdots v_m\n(h_m) v_{m+1}$ satisfies conditions~(\ref{item:b}) and~(\ref{item:c}) of
  Lemma~\ref{lem:Clsupp}.
  
  For $1 \le k \le m+1$, we write
  \begin{align*}
    v_k & = \boldsymbol{a_{k,1} \cdots a_{k,\ell_k}} \, , \text{ and} \\
    u^i_k & = x^i_{k,1} \boldsymbol{a_{k,1}} \cdots x^i_{k,\ell_k}
    \boldsymbol{a_{k,\ell_k}} x^i_{k,\ell_k+1} \, \quad (x^i_{k,j}
    \in \S^*_\Cc)\,.
  \end{align*}
  We write $x^i_{k,j}$ as $x^i_{k,j} = y^i_{k,j} z^i_{k,j}$, 
  where $y^i_{k,j}$ is the largest prefix of $x^i_{k,j}$ that consists of
  register reads only.
  Since a register read is necessarily a read from the initial value or a
  read from a value written by the leader, it can only follow an action
  of the leader or another register read.
  Hence each $z^i_{k,j}$ contains only writes and capacity reads.

  \smallskip

  We are going to define a trace of the $(\Ccf,\Dd)$-system 
  as a shuffle of $v|_{\Sd}=v_1 \cdots v_{m+1}$ and each of the 
  $\left(\nst {(u^i_1 \cdots u^i_i)} {n_i+1} \Cw(h_i)
  (u^i_{i+1} \cdots u^i_{m+1})\right)|_{\Sc}$ and
  $\nst {\Cw(h_i)} {n_i}$.
  This corresponds to the intuition that until the first $\Cw(h_i)$,
  all $(n_i+1)$ contributors starting in $p_i$ follow the main copy of $\rho^i$,
  and then the main copy continues alone.
  To make sure that all reads are enabled, the trace will be constructed as
  follows: after each action $\boldsymbol{a_{k,j}}$ of the leader, we put first all
  register reads that follow it in one of the $u^i$ (i.e., actions from some
  $y^i_{k,j+1}$), then all writes or capacity reads (i.e., actions from some
  $z^i_{k,j+1}$).

  \smallskip

  For all $1 \le k \le m$, define
  \begin{align*}
   w_k =~ &
   y^1_{k,1} \cdots y^{k-1}_{k,1} \  
   \nst {(y^k_{k,1})} {n_k+1} \cdots \nst {(y^m_{k,1})} {n_m+1} 
   \\ & \qquad 
   z^1_{k,1} \cdots z^{k-1}_{k,1} 
   \nst {(z^k_{k,1})} {n_k+1} \cdots \nst {(z^m_{k,1})} {n_m+1} 
   \boldsymbol{a_{k,1}}
   \\ & \dots \\
   & y^1_{k,\ell_k} \cdots y^{k-1}_{k,\ell_k} 
   \nst {(y^k_{k,\ell_k})} {n_k+1} \cdots \nst {(y^m_{k,\ell_k})} {n_m+1} 
   \\ & \qquad
   z^1_{k,\ell_k} \cdots z^{k-1}_{k,\ell_k} 
   \nst {(z^k_{k,\ell_k})} {n_k+1} \cdots \nst {(z^m_{k,\ell_k})} {n_m+1} 
   \boldsymbol{a_{k,\ell_k}}
    \\
   & y^1_{k,\ell_k+1} \cdots y^{k-1}_{k,\ell_k+1} 
   \nst {(y^k_{k,\ell_k+1})} {n_k+1} \cdots \nst {(y^m_{k,\ell_k+1})} {n_m+1}
   \\ & \qquad
   z^1_{k,\ell_k} \cdots z^{k-1}_{k,\ell_k} 
   \nst {(z^k_{k,\ell_k+1})} {n_k+1} \cdots \nst {(z^m_{k,\ell_k+1})} {n_m+1}
   \Cw(h_k)\, ,
  \end{align*}
  and similarly $w_{m+1}$, except we remove the last $\Cw(h_k)$.
  We let $w = w_1 \cdots w_{m+1}$.

  We write now $w = b_1 \cdots b_r$, where each $b_{\theta}$ is either one of the
  $\boldsymbol{a_{k,j}}$, a first occurrence of $\Cw(h_i)$,
  a single letter ``$a$'' of one of the $y^i_{k,j}$, $z^i_{k,j}$, 
  or a repetition $\nst a {n_i+1}$ of a letter ``$a$'' in one of the 
  $\nst {(y^i_{k,j})} {n_i+1}$, $\nst {(z^i_{k,j})} {n_i+1}$.
  We define $\bar b_\theta$ as follows:
  \begin{itemize}
  \item If $b_\theta = \Cr(h)$ and $\Cr(h)$ corresponds to a capacity read
    in some $u^i$, then $\bar b_\theta = \Cw(h) \Cr(h)$.
  \item If $b_\theta = \nst {\Cr(h)} n$ and $\Cr(h)$ corresponds to a  capacity read in some
    $u^i$, then $\bar b_\theta = \Cw(h) \nst {\Cr(h)} n$.
  \item If $b_\theta = \dr(h)$ and $\dr(h)$ is a capacity read in $v$, 
    then $\bar b_\theta = \Cw(h) \dr(h)$.
  \item Else, $\bar b_\theta = b_\theta$.
  \end{itemize}
  We let $\bar w = \bar b_1 \cdots \bar b_r$.

  We denote by $\theta_i$ the position of the first occurrence of $\Cw(h_i)$ 
  in $w$.

  For all $\theta \ge 0$, we let $u^{i}(\theta)$ be the prefix of $u^i$
  associated with $b_1 \cdots b_{\theta}$, and $(K_\theta, p^i_\theta, g^i_\theta)$ 
  be the configuration reached in $\rho^i$ after reading $u^i(\theta)$.
  Notice that $K_\theta = \{h_1, \ldots, h_j\}$ where 
  $j = \max \{k \mid \theta_k \le \theta\}$, and thus does not depend on $i$.
  We define similarly $v(\theta)$, $q_\theta$, $\alpha_\theta$, and $g_\theta$
  for the leader.

  We let $n_i(\theta)$ be the sum of the number of capacity reads of $h_i$
  occurring in $v(\theta)$, $u^1(\theta), \ldots, u^m(\theta)$.
  
  We claim that 
  $\left(\sum_{i=1}^m (n_i + 1)\cs{p_i}, q A, g \right)
  \xra{\bar w} 
  \left(\sum_{i=1}^m \cs{p_i} + n_i\cs{p''_i}, q A \a', g \right)$
  in the $(\Ccf,\Dd)$-system.
  More precisely, we show by induction on $\theta$ that for all $\theta \le r$,
  the $(\Ccf,\Dd)$-system has a run of the form:
  \[
  \textstyle
  \left(\sum_{i=1}^m (n_i + 1)\cs{p_i}, q A, g \right)
  \xra{\bar b_1 \cdots \bar b_\theta} 
  \left(M_\theta,q_\theta\a_\theta,g'_\theta\right)
  \]
  where 
  \[M_\theta = \textstyle
  \sum_{\{i \mid \theta < \theta_i\}} (n_i+1) \cs{p^i_\theta} ~~+~~
  \sum_{\{i \mid \theta \ge \theta_i\}} \left( 
  \cs{p^i_\theta} + (n_i - n_i(\theta)) \cs{p'_i} + n_i(\theta) \cs{p''_i}
  \right) \, ,
  \]
  and if $b_{\theta}$ is part of one of the $y^i_{k,j}$, 
  $\nst{(y^i_{k,j})}{n_i+1}$, or if it is one of the $a_{k,j}$,
  then $g'_\theta = g_\theta = g^1_\theta = \ldots = g^m_{\theta}$.
  
  The intuition is that at any time, the main copy of $\rho^i$ is in state
  $p^i_\theta$. 
  Before the first write of $h_i$ (i.e. $\theta < \theta_i$), the $n_i$ 
  remaining copies progress with the main copy. After the first occurrence of 
  $\Cw(h_i)$ (i.e. $\theta \ge \theta_i$), the main copy continues alone.
  The other copies are either in state $p'_i$, waiting to write $h_i$,
  or stopped in state $s''_i$ after writing $h_i$. The transition from $p'_i$
  to $p''_i$ happens each time the next $b_\theta$ corresponds to a capacity
  read in one of the $u^j$ or $v$. So after $\bar b_1 \cdots \bar b_\theta$,
  there are $n_i(\theta)$ copies in $p''_i$.

  \smallskip

  Assume that this holds for some $\theta \ge 0$.

  \smallskip \noindent \textbf{Case 1:}
  $\bar b_{\theta +1}$ is the first occurrence of $\Cw(h_j)$ for some $j$,
  i.e. $\theta + 1 = \theta_j$.
  Then we have $p^j_\theta = p'_j$, $p^j_{\theta + 1} = p''_j$
  and $p^i_\theta = p^i_{\theta + 1}$ for all $i \neq j$.
  Moreover, $n_j(\theta) = n_j(\theta + 1) = 0$.
  Thus $M_{\theta+1} = M_\theta - [p'_j] + [p''_j]$,
  and we can complete the run of the $(\Ccf,\Dd)$-system by letting one of 
  the $n_j +1$ contributors in state $p'_j$ take the transition
  $p'_j \xra{\Cw(h_i)} p''_j$.

  \smallskip \noindent \textbf{Case 2:}
  $b_{\theta +1} = \boldsymbol{a_{k,j}}$, i.e $\bar b_{\theta +1} = \dw(h)$, 
  $\bar b_{\theta +1} = \dr(h)$ or $\bar b_{\theta +1} = \Cw(h) \dr(h)$.
  We have $(K_\theta,q_\theta\a_\theta,g_\theta) \xra{a_{k,j}} 
  (K_{\theta + 1} = K_\theta, q_{\theta+1}\a_{\theta+1},g_{\theta+1} = h)$ in $\Ddk$,
  and for all $i$,
  $(K_\theta,p_\theta,g_\theta) \xra{a_{k,j}}
    (K_{\theta + 1} = K_\theta, p_{\theta} = p_{\theta+1},g_{\theta+1} = h)$ in $\Cck$.
  So the second property we have to prove,
  $g_{\theta+1} = g^1_{\theta+1} = \ldots = g^m_{\theta+1}$, is true.
  We only need to show that
  $(M_\theta,q_\theta\a_\theta,g'_\theta) \xra{\bar b_{\theta +1}} 
  (M_{\theta+1},q_{\theta+1}\a_{\theta+1},h)$.
  \begin{itemize}
  \item If $\bar b_{\theta +1} = \dw(h)$, this is immediate.
  \item If $\bar b_{\theta+1} = \dr(h)$, then $\dr(h)$ is a register read in 
    $v$ (and all of the $u^i$), that is, $h \notin K_{\theta} = K_{\theta + 1}$,
    and $g_\theta = g^i_\theta = \ldots = g^m_\theta = h$.
    For all $i$, we have
    $g^i_\theta = \lst(u^i(\theta)) 
    = \lst(y^i_{k,1} z^i_{k,1} a_{k,1} \ldots y^i_{k,j} z^i_{k,j})$
    (the last equality holds assuming $z^i_{k,j} \not=\e$).
    Since $z^i_{k,j}$ only contains writes and reads of values in $K_\theta$,
    we must have $z^i_{k,j} = \e$ for all $i$.
    Then by induction hypothesis, we have
    $g'_\theta = g_\theta = g^1_\theta = \ldots = g^m_\theta = h$.
    Moreover, $M_\theta = M_{\theta+1}$, so we indeed have
    $(M_\theta,q_\theta\a_\theta,h) \xra{\dr(h)} 
    (M_{\theta+1},q_{\theta+1}\a_{\theta+1},h)$.
  \item If $\bar b_{\theta+1} = \Cw(h)\dr(h)$, then $\dr(h_j)$ is a capacity read
  in $v$, and thus must occur after $\n(h_j)$, i.e. $\theta \ge \theta_j$.
  We also have $n_j(\theta+1) = n_j(\theta) +1$, thus
  $M_{\theta+1} = M_\theta - [p'_j] + [p''_j]$.
  So $(M_\theta,q_\theta\a_\theta,g'_\theta) \xra{\Cw(h)} 
  (M_{\theta +1},q_\theta\a_\theta,h) \xra{\dr(h)} 
  (M_{\theta +1},q_{\theta+1}\a_{\theta+1},h)$.
  \end{itemize}

  \smallskip
  \noindent \textbf{Case 3:}
  $b_{\theta+1}$ is part of one of the $y^i_{k,j}$ or $\nst {(y^i_{k,j})} {n_i+1}$.
  Then it corresponds to a register read in one of the $u^i$, sauy
  $i=i_0$. Thus,
  $\bar b_{\theta+1} = \Cr(h)$ or $\bar b_{\theta+1} = \Cr(h)^{n_{i_0}+1}$.
  We first show that $g'_\theta = g_\theta = g^1_\theta = \ldots = g^m_\theta = h$.
  We have $(K_\theta,p^{i_0}_\theta,g^{i_0}_\theta = h) \xra{\Cr(h)} 
  (K_{\theta+1} = K_\theta,p^{i_0}_{\theta+1},g^{i_0}_{\theta+1} = h)$, and $h \notin K_\theta$.
  If $\theta = 0$, then $g^{i_0}_{\theta} = g'_\theta = g = h$.
  If $\theta > 0$, by construction of $w$, 
  $b_\theta$ is either also part of some $y^{i}_{k,j}$ for $i \le i_0$, 
  or $\boldsymbol{a_{k,j-1}}$, or $\Cw(h_{k-1})$ (then $j = 1$).
  The case $b_\theta = \Cw(h_{k-1})$ is in fact impossible: 
  $u^i(\theta)$ would end with $\n(h_{k-1})$, and we would have 
  $g^{i_0}_\theta = h = h_{k-1} \in K_{\theta}$, which contradicts 
  $h \notin K_{\theta}$. So $b_\theta$ is either part of some $y^i_{k,j}$,
  or $\boldsymbol{a_{k,j-1}}$.
  By induction hypothesis, and since $g^{i_0}_\theta = h$, we obtain
  $g'_\theta = g_\theta = g^1_\theta = \ldots = g^m_\theta = h$.

  Since $g^{i}_\theta = g^{i}_{\theta+1}$ for all $i \neq i_0$ and 
  $g_\theta = g_{\theta+1}$, we also have
  $g_{\theta+1} = g^1_{\theta+1} = \ldots = g^m_{\theta+1} = h$.
  It remains to show that
  $(M_{\theta},q_\theta\a_\theta,h) \xra{\bar b_{\theta+1}} 
    (M_{\theta+1}, q_{\theta+1}\a_{\theta+1} = q_\theta\a_\theta,h)$,
  i.e., $M_{\theta} \xra{\bar b_{\theta+1}} M_{\theta+1}$.
  \begin{itemize}
  \item If $\bar b_{\theta+1} = \nst {\Cr(h)} {n_{i_0}+1}$, 
    then $\theta_{i_0} > \theta + 1$, so $M_\theta \ge (n_{i_0}+1)[p^{i_0}_\theta]$ and
    $M_{\theta+1} = M_\theta - (n_i+1)[p^{i_0}_\theta] + (n_i+1)[p^{i_0}_{\theta+1}]$.
    Thus $M_{\theta} \xra{\nst {\Cr(h)} {n_{i_0}+1}} M_{\theta+1}$.
  \item If $\bar b_{\theta+1} = \Cr(h)$, then $\theta_{i_0} < \theta$, and
    $M_{\theta+1} = M_\theta - [p^{i_0}_\theta] + [p^{i_0}_{\theta+1}]$,
    so $M_{\theta} \xra{\Cr(h)} M_{\theta+1}$.
  \end{itemize}
  
  \smallskip \noindent \textbf{Case 4:}
  $b_{\theta+1}$ is part of one of the $z^i_{k,j}$ or $\nst {(z^i_{k,j})} {n_i+1}$.
  This is similar to case 2 (3rd item).
  \end{proofof}

\medskip

  \begin{proofof}{Lemma~\ref{lem:run2}}
    The idea is that contributors starting in $p_i$
  behave as in the run of Lemma~\ref{lem:run1}, 
  while the contributors starting in $p''_i$ wait
  until the main copy of $\rho^i$ reaches $p''_i$, and then follow it for
  the part $p''_i \xra{u^i_{i+1} \cdots u^i_{m+1}} p_i$.
  We do not give all details of the proof, which are very similar to the proof
  of Lemma~\ref{lem:run1}. We only explain how to define the run of the 
  \CDsystem, and state the invariant for the induction.

  We let
  \begin{align*}
   w'_k = \ & 
   \nst{(y^1_{k,1})}{n_1+1} \cdots \nst {(y^m_{k,1})} {n_k+1} 
   \nst {(z^1_{k,1})} {n_k+1} \cdots \nst {(z^m_{k,1})} {n_m+1} 
   \boldsymbol{a_{k,1}}
   \\ & \dots \\
   & 
   \nst{(y^1_{k,\ell_k})}{n_1+1} \cdots \nst {(y^m_{k,\ell_k})} {n_k+1} 
   \nst {(z^1_{k,\ell_k})} {n_k+1} \cdots \nst {(z^m_{k,\ell_k})} {n_m+1} 
   \boldsymbol{a_{k,\ell_k}}
    \\
    &
   \nst{(y^1_{k,\ell_k + 1})}{n_1+1} \cdots \nst {(y^m_{k,\ell_k + 1})} {n_k+1} 
   \nst {(z^1_{k,\ell_k + 1})} {n_k+1} \cdots \nst {(z^m_{k,\ell_k + 1})} {n_m+1} 
   \Cw(h_k) \, ,
   \end{align*}
  and $w'$, $b'_\theta$, $\bar b'_\theta$, $\bar w$ as before.
  One can show by induction on $\theta$ that the $(\Ccf,\Dd)$-system has a run 
  of the form:
  \begin{multline*}
  \textstyle
  \left(\sum_{i=1}^m (n_i + 1)\cs{p_i} + n_i \cs{p''_i} , t A, g \right)
  \xra{\bar w_1 \cdots \bar w_\theta} \\
  \textstyle
  \left(
  \begin{array}{l}
  \sum_{\{i \mid \theta < \theta_i\}} \left(
  (n_i+1) \cs{p^i(\theta)} + n_i \cs{p''_i} 
  \right) + \\
  \sum_{\{i \mid \theta \ge \theta_i\}} \left( 
  (n_i + 1)\cs{p^i(\theta)} + 
  (n_i - n_i(\theta)) \cs{p'_i} + 
  n_i(\theta) \cs{p''_i}
  \right)
  \end{array}, 
  q_\theta \a_\theta, g' \right) \, .
  \end{multline*}
 \end{proofof}

%%% Local Variables: 
%%% mode: latex
%%% TeX-master: "m"
%%% End: 

\section{\CDsystems with multiple registers}

The definition of \CDsystems extends to systems with $m$
registers as expected.
Given a set $G$ of register values, we let
\begin{align*}
\Sck & = \set{\Cr_i(g), \Cw_i(g) : 1 \le i \le m, g \in G}\, ,\qquad\text{and}\\
\Sdk & = \set{\Dr_i(g), \Dw_i(g) : 1 \le i \le m, g \in G} \, .
\end{align*}
For instance, $\Cr_i(g)$ corresponds to a contributor read from the $i$-th 
register.

Assume $\Cc$, $\Dd$ are transition systems over $\Sck$ and $\Sdk$, resp.
Configurations of the \CDsystem are described as tuples 
\[(M \in \Nat^S, t \in T, g_1 \in G, \ldots, g_n \in G) \, . \]
Transitions are defined similarly to the case of \CDsystems with one register:
\begin{align*}
  (M,t,g_1,\ldots g_i, \ldots,g_n) \xra{\dw_i(h)} 
  & (M,t',g_1,\ldots h, \ldots,g_n)
  && \text{if $t\xra{\dw_i(h)} t'$ in $\D$}\,, \\
  (M,t,g_1,\ldots g_i, \ldots,g_n) \xra{\dr_i(h)} 
  & (M,t',g_1,\ldots h, \ldots,g_n)
  && \text{if $t\xra{\dr_i(h)} t'$ in $\D$ and $h = g_i$}\,, \\
  (M,t,g_1,\ldots g_i, \ldots,g_n) \xra{\Cw_i(h)} 
  & (M',t,g_1,\ldots h, \ldots,g_n)
  && \text{if $M\xra{\Cw_i(h)} M'$ in $\d$}\,, \\
  (M,t,g_1,\ldots g_i, \ldots,g_n) \xra{\Cr_i(h)} 
  & (M',t,g_1,\ldots h, \ldots,g_n)
  && \text{if $M\xra{\Cr_i(h)} M'$ in $\d$ and $h = g_i$}\, . \\
\end{align*}

\medskip

From a \CDsystem with $m$ registers, we can construct an ``equivalent''
\CDpsystem with $1$ register, using the ideas of Section~\ref{sec:simplify}.
The contents of the $m$ registers will be stored in the states of $\Dd'$, and 
the unique register of the \CDpsystem will be used to communicate about the 
actions of the contributors.
So the values of the register are:
\[G' = \set{\qcri g, \qcwi g, \acri g, \acwi g, \gdri g, \gdwi g \mid 
1 \le i \le m \text{ and } g \in G} \]
and the alphabets of $\Cc'$ and $\Dd'$ are defined as usual:
\[
\Sc' = \{\Cr(a), \Cw(a) \mid a \in G'\}, \qquad
\Sd' = \{\Dr(a), \Dw(a) \mid a \in G'\} \, .
\]
For a word $u \in (\Sc' \cup \Sd')^\infty$, we define as before $\trans(u)$
by removing all actions of the contributors and all reads of the leader, and
replacing leader writes $\Dw(a)$ by~$a$.

The states of $\Dd'$ are:
\begin{align*}
  T' = T \times G^m \cup \{[t,x_1 \ldots, x_n] \mid \ 
    & \text{$x_i = \acri g$ or $x_i = \acwi g$ for some $i$}, \\
    & \text{and $x_i \in G$ for all $j \neq i$}\}
\end{align*}
and the states of $\Cc'$:
\begin{align*}
  S' = S \cup \set{[s,a,s'] \mid s, s' \in S \text{ and $a = \qcri g$ or
    $a = \qcwi g$ for some $i,g$}} \, .
\end{align*}
The transitions of $\Cc'$ are defined as in Section~\ref{sec:simplify}:
\begin{align*}
s\xra{\Cw(\qcwi g)} [s,\acwi g,s'] \xra{\Cr(\acwi g)} s' &
\quad\text{if}\quad s\xra{\Cw_i(g)}s' \text{ in } \Cc\\
s\xra{\Cw(\qcr g)} [s,\acri g,s'] \xra{\Cr(\acri g)} s' &
\quad\text{if}\quad s\xra{\Cr_i(g)}s' \text{ in } \Cc
\end{align*}
and similarly for transitions of the leader, except $\Dd'$ now keeps the 
values of $m$ registers:
\begin{align*}
& [t,g_1,\ldots,g_i,\ldots,g_n]\xra{\Dw(\gdwi h)}[t',g_1,\ldots,h,\ldots,g_n]
\quad\text{if}\quad t\xra{\Dw_i(h)}t' \text{ in } \Dd\\
& [t,g_1,\ldots,g_i,\ldots,g_n]\xra{\Dw(\gdr {g_i})}[t',g_1,\ldots,g_i,\ldots,g_n]
\quad\text{if}\quad t\xra{\Dr_i(g_i)}t' \text{ in } \Dd\\
& [t,g_1,\ldots,g_i,\ldots,g_n] \xra{\Dr(\qcwi h)} 
[t,g_1,\ldots,\acwi h,\ldots,g_n] \xra{\Dw(\acwi h)}
[t,g_1,\ldots,h,\ldots,g_n] \\
& [t,g_1,\ldots,g_i,\ldots,g_n] \xra{\Dr(\qcri {g_i})} 
[t,g_1,\ldots,\acri {g_i},\ldots,g_n] \xra{\Dw(\acri {g_i})}
[t,g_1,\ldots,g_i,\ldots,g_n] \, . \\
\end{align*}

\begin{theorem}

  \begin{enumerate}
  
  \item For every trace $u$ of the \CDsystem, there exists a trace $u'$ of the
    \CDpsystem such that $\trans(u') = u$. 

    If $u$ is finite and and the \CDsystem has a run over $u$ ending in 
    $(M,t,g_1,\ldots,g_n)$, then the \CDpsystem has a run over $u'$ ending in 
    $(M,[t,g_1,\ldots,g_n],a)$ where $a$ is the last action of~$u$.

  \item For every trace $u'$ of the \CDpsystem, there exists a trace
    $u \in \sexp(\trans(u'))$ in the \CDsystem. 
    
    If $u'$ is finite and the \CDpsystem has a run over $u'$ ending in 
    $(M,[t,g_1,\ldots,g_n],a)$ with $M \in \Nat^S$ and $g_1,\ldots,g_n \in G$, 
    then the \CDsystem has a run over $u$ ending in $(M,t,g_1,\ldots,g_n)$.
  \end{enumerate}
\end{theorem}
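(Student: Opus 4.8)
The statement is the $m$-register analogue of Proposition~\ref{th:trans}, and the construction of $(\Cc',\Dd')$ given just above is literally the index-annotated version of the one-register construction of Section~\ref{sec:simplify}: a leader state now carries an $m$-tuple $(x_1,\dots,x_n)$ recording the contents of the $m$ registers, with at most one entry $x_j$ temporarily holding an acknowledgement $\acwi{h}$ or $\acri{h}$ while the leader confirms a contributor request on register~$j$. Since operations on distinct registers do not interact, the plan is simply to re-run the proofs of Lemma~\ref{lem:CDtoPrime} and Lemma~\ref{lem:PrimetoCD}, tagging every action with the register it uses.

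For item~1, I would define $u'$ from $u$ by the same local substitution as in Lemma~\ref{lem:CDtoPrime}, now indexed: replace $\Dw_i(g)$ by $\Dw(\gdwi g)$, replace $\Dr_i(g)$ by $\Dw(\gdri g)$, replace $\Cw_i(g)$ by the handshake $\Cw(\qcwi g)\,\Dr(\qcwi g)\,\Dw(\acwi g)\,\Cr(\acwi g)$, and replace $\Cr_i(g)$ by $\Cw(\qcri g)\,\Dr(\qcri g)\,\Dw(\acri g)\,\Cr(\acri g)$. An induction on the length of $u$ then shows $u'$ is a trace of the \CDpsystem: a leader step on register~$i$ only reads and updates the $i$-th component, and a contributor step on register~$i$ is simulated by the handshake, after which the $i$-th component has been updated exactly as the $i$-th register of the \CDsystem; by construction $\trans(u')=u$, and a finite run of the \CDsystem over $u$ ending in $(M,t,g_1,\dots,g_n)$ is simulated by a run over $u'$ ending in $(M,[t,g_1,\dots,g_n],a)$ with $a$ the last action of~$u$.

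For item~2, I would mimic the proof of Lemma~\ref{lem:PrimetoCD}: by induction on the length of $u'$, build a run of the \CDsystem together with a multiset $N$ bounded by the current \CDpsystem multiset $M'$, maintaining the invariants (i) $u\in\sexp(\trans(u'))$; (ii) for every $i$, if $x_i\in G$ then $x_i=g_i$ and if $x_i=\acri h$ then $h=g_i$; (iii) the counting identity of invariant~\ref{I:eq} relating $M$, $M'$, $N$, which involves no register index; and (iv) the componentwise analogues of invariants~\ref{I:comp} and~\ref{I:N}, namely that between $\Dr(\qcwi h)$ and $\Dw(\acwi h)$ some contributor in a state $[s,\acwi h,s']$ is still counted as being in~$s$, and that after $\Dw(\acwi h)$ all such contributors are counted as in~$s'$ (and likewise for $\qcri h$). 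The case analysis on the last action of $u'$ is then exactly that of Lemma~\ref{lem:PrimetoCD}, carried out on component~$i$ when the action concerns register~$i$; in particular, if $u'$ is finite and the final \CDpsystem state has all components in $G$, then the $N$-part of the invariant forces all transitory states to be empty, so the constructed run ends in $(M,t,g_1,\dots,g_n)$. The only real obstacle is the bookkeeping of the $m$ components rather than one; since no two registers can be in a handshake at the same time --- at most one leader-state component is of the form $\acwi h$ or $\acri h$ --- the situation at each step is identical to the single-register case, so nothing genuinely new has to be proved.
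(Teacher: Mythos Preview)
Your proposal is correct and matches the paper's intended approach: the paper states this theorem without proof, the construction being the index-annotated version of the one in Section~\ref{sec:simplify}, so the argument is precisely to replay Lemmas~\ref{lem:CDtoPrime} and~\ref{lem:PrimetoCD} componentwise as you describe. One small imprecision: in your final sentence for item~2, it is not the leader components being in $G$ that ``forces all transitory states to be empty''---that is directly the hypothesis $M\in\Nat^S$ of the statement, and then invariant~(iii) gives $M=M'$ since all the $M'([s,a,s'])$ and $N([s,a,s'])$ terms vanish.
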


As in Section~\ref{sec:simplify}, we can also modify the \CDpsystem to
preserve maximality of runs, and prove Theorem~\ref{th:regs}.

%%% Local Variables:
%%% mode: latex
%%% TeX-master: "m"
%%% End:

\end{document}